\newcommand{\card}{\operatorname{card}}
\newcommand{\ser}{\operatorname{ser}}
\newcommand{\BEq}{\operatorname{\mathsf{BEq}}}
\newcommand{\Opt}{\operatorname{\mathsf{Opt}}}
\newcommand{\PoA}{\operatorname{\mathsf{PoA}}}
\newcommand{\PoS}{\operatorname{\mathsf{PoS}}}
\newcommand{\WEq}{\operatorname{\mathsf{WEq}}}
\newcommand\BR{\,\mathsf{BR}}
\newcommand\Eq{\,\mathsf{Eq}}
\newcommand{\ignore}[1]{}
\newtheorem{theorem}{Theorem}
\newtheorem{claim}[theorem]{Claim}
\newtheorem{corollary}[theorem]{Corollary}
\newtheorem{lemma}[theorem]{Lemma}
\newtheorem{proposition}[theorem]{Proposition}
\theoremstyle{definition}
\newtheorem{definition}[theorem]{Definition}
\newtheorem{example}[theorem]{Example}
\newtheorem{remark}[theorem]{Remark}
\numberwithin{equation}{section}
\numberwithin{theorem}{section}
\begin{document}

\title{Dynamic Atomic Congestion Games\\ with Seasonal Flows}
\author{Marco Scarsini\thanks{This author is a member of GNAMPA-INdAM. The support of PRIN 20103S5RN3 and MOE2013-T2-1-158 is gratefully acknowledged.} \\
Dipartimento di Economia e Finanza \\
LUISS\\
Viale Romania 32\\
00197 Roma, Italy \\
\texttt{marco.scarsini@luiss.it}
\and
Marc Schr\"oder\thanks{The support of GSBE and MOE2013-T2-1-158 is gratefully acknowledged.}\\
Department of Quantitative Economics\\
Maastricht University\\
Tongersestraat 53\\
6211 LM Maastricht, The Netherlands\\
\texttt{m.schroeder@maastrichtuniversity.nl}
\and
Tristan Tomala\thanks{The support of the HEC foundation and of the Agence Nationale de la Recherche under grant ANR JEUDY, ANR-10-BLAN 0112 is gratefully acknowledged.}\\
HEC Paris and GREGHEC \\ 
1 rue de la Lib\'eration\\
78351 Jouy-en-Josas, France\\
\texttt{tomala@hec.fr}} 

\maketitle

\newpage

\begin{abstract}

\bigskip

We propose a model of discrete time dynamic congestion games with atomic players and a single source-destination pair. The latencies of edges are composed by free-flow transit times and possible queuing time due to capacity constraints. We give a precise description of the dynamics induced by the individual strategies of players and of the corresponding costs, either when the traffic is controlled by a planner, or when players act selfishly. In parallel networks, optimal and equilibrium behavior eventually coincides, but the selfish behavior of the first players has consequences that cannot be undone and are paid by all future generations. In more general topologies, our main contributions are three-fold. 

First, we show that equilibria are usually not unique. In particular, we prove that there exists a sequence of networks such that the price of anarchy is equal to $n-1$, where $n$ is the number of vertices, and the price of stability is equal to 1.

Second, we illustrate a new dynamic version of Braess's paradox: the presence of initial queues in a network may decrease the long-run costs in equilibrium. This paradox may arise even in networks for which no Braess's paradox was previously known.

Third, we propose an extension to model seasonalities by assuming that departure flows fluctuate periodically over time.  We introduce a measure that captures the queues induced by periodicity of inflows. This measure is the increase in costs compared to uniform departures for optimal and equilibrium flows in parallel networks.

\bigskip

\bigskip

\textbf{Keywords}: 
Network games, dynamic flows, price of seasonality, price of anarchy, max-flow min-cut.

\textit{OR/MS Subject Classification}: networks/graphs: multicommodity, theory; games/group decisions: noncooperative; transportation: models, network.
\end{abstract}

\newpage
\section{Introduction}\label{se:intro}

The analysis of transportation networks naturally leads to the consideration of congestion games, where each agent selfishly behaves as to minimize her own time on the road without regard for the effects that this behavior has on the other agents' traveling time. The outcome of the individual selfish behavior can be compared to the outcome that a social planner would choose. A way of comparison is, for instance, the price of anarchy \citep[see, e.g.][]{Rou:MIT2005, Rou:AGT2007, RouTar:AGT2007}, namely the ratio of the worst social cost induced by selfish behavior to the optimal social cost. 

Although the motivation for this theory is rooted in the study of traffic flows, most of the existing literature is actually static. The commonly adopted justification is that the static game represents the steady state of a dynamic model where the flow over the network is constant over time. Yet, for determining how the steady state is reached, a careful study of dynamic models is required. As we shall see, the behavior of agents in the transient phase may have an impact on the long-run outcome.

In this paper we study a dynamic model of congestion where the players have symmetric and unsplittable weights. This could be a high-level model for, e.g., traffic network, where each player is a car in a traffic network, or a telecommunication network, where each player is a data packet.
We characterize the optimal long-run flows and latencies, i.e., the ones induced by a benevolent long-lived social planner. 
When the players act selfishly in order to minimize their own traveling time, without heeding the planner's suggestion, the situation can be modeled as a noncooperative game. 
For some topologies of the network and when the inflow of players is uniform over time we are able to characterize the equilibria of this game. 
We consider the efficiency of its equilibria for various topologies and we show that some forms of Braess-type paradoxes are possible. Finally we devote our attention to the case where the inflow of players is periodic over time.

\subsection{Model} 

We analyze an atomic dynamic congestion game, based on the deterministic queuing model of \citet{KocSku:TCS2011}. 
Atomic models are typically more complicated to analyze than nonatomic models and have less nice properties. Nevertheless, they may be a better fit when the number of players is not huge and a nonatomic approximation is not justifiable. Atomic models have been used, for instance, in telecommunications \citep[see, for instance,][]{TekLiuSouHuaAhm:IEEEACM2012}.
The dynamics of the model is described as follows. Time is discrete and at each stage, a generation of finitely many players departs from the source with the goal of reaching the destination as fast as possible. 
We assume that each player has a unit weight and is unsplittable. 
We see this assumption of symmetry as a first order approximation when the size of the vehicles or of the data packet is not too dissimilar.
Each player chooses a route from source to destination, knowing the choice of the previous players. Each edge of the network is endowed with a free-flow transit time and a capacity. When a player enters an edge on the chosen route, she travels on that edge at a constant speed. When reaching the head of the edge, a queue might have formed since at most the capacity number of players can exit the edge at the same time. We assume that there is a global priority among players to determine who leaves the edge first. The latency suffered by the player on an edge is thus the sum of the transit and waiting times. The total latency suffered by the player is then the sum of the latencies suffered on all the edges she uses.

\subsection{Results}

We first study social optimality when the inflow is constant and at most the capacity of the network.  We prove that optimal flows exist and show that there is an optimal flow such that, at each stage, the current flow over routes minimizes the total cost among feasible static flows, i.e., the flows that satisfy all the capacity constraints. 

Then, we turn to the behavior of selfish players. In particular, we consider equilibria in which each player arrives at each intermediate vertex as fast as possible.  These are called uniformly-fastest-route equilibria. In general, such equilibria are not unique. 
For parallel networks, in all equilibria the flow coincides with the optimal flow from some stage on, and in the worst equilibrium, all players eventually pay the highest transit cost of the network. The intuition is that the first players all choose the fastest routes and induce congestion. Eventually, all routes get so congested that all latencies become equal. This result shows the impact of the dynamic nature of the model on latencies. While optimal and equilibrium behavior eventually coincide, the selfish behavior of the first players has consequences that cannot be undone and are paid by all future generations. 

In more general networks the results for equilibrium flows become more complicated. For chain-of-parallel networks, the equilibrium costs can be derived from the results for parallel networks, but the corresponding flows can be quite different and even aperiodic.

We also examine various efficiency measures of equilibria such as the price of anarchy, the price of stability and the Braess ratio. We demonstrate the following phenomena by examples. 
Firstly, there is a sequence of instances such that the price of anarchy is equal to $n-1$, where $n$ is the number of vertices, and the price of stability is equal to 1, illustrating the difference in long-run equilibrium costs. This is one of the main differences between the atomic and nonatomic cases, since in our atomic model there may exist multiple equilibria whose  behavior can be quite different, whereas in the nonatomic model, equilibrium is unique \citep[see][]{ComCorLar:ALP2011}. \citet{KocSku:TCS2011} show that the price of anarchy increases logarithmically in the number of edges if all edge capacities are equal to $1$. As a byproduct of our results, we obtain an example of a nonatomic game, where all capacities are $1$ and the price of anarchy is linear in the number of edges.
Secondly, we study Braess's paradox \citep[see][]{Bra:U1968,Bra:TS2005}, namely the decrease of the total equilibrium cost after deletion of an edge. This may happen even when the network does not contain the Wheatstone graph as a subnetwork \citep[see][]{MacLarSte:TCS2013}. We also obtain a variant of Braess's paradox: the equilibrium cost might decrease when there are initial queues in the network, or when the length of an edge increases. 
Thirdly, we study the Braess ratio \citep[see][]{Rou:JCSS2006}, namely, the largest factor by which the equilibrium cost can be improved by removal of an edge. We consider a set of networks with $n$ vertices where this ratio is $n-1$. A similar result appears in \citep{MacLarSte:TCS2013} in the nonatomic case.

In the last section we consider periodic inflows, we define a distance between two inflows, and we show that in parallel networks at capacity, periodicity adds the same cost in equilibrium and at the optimum, and this added cost is exactly the distance between the periodic and the uniform inflows.

\subsection{Related literature}\label{suse:existing}

Dynamic congestion games belong to the wider class of models of flows over time. \citet{ForFul:OR1958,ForFul:PUP1962}  introduced these models in a discrete time setting by considering the problem of maximizing the flow from source to destination in a given finite time horizon. \citet{Gal:MMJ1959} considered a refinement of the above problem, called earliest arrival flow, where the aim is to simultaneously maximize the flow for every time before the deadline; \citet{Wil:OR1971} and \citet{Min:OR1973} developed  algorithms for solving it. The continuous-time versions  were studied by \citet{Phi:MOR1990} and \citet{FleTar:ORL1998}, respectively. 
We refer the reader to \citet{Sku:RTCO2009} for a detailed analysis and an extensive bibliography. Equilibrium  concepts in dynamic network models date  back to \citet{Vic:AER1969}  in the economic literature and to \citet{Yag:TR1971} in the transportation literature.  
We refer the reader to \citet{Koc:PhD2012} for an extensive list of references on this topic. Recent mathematical formulations of the model resort to deterministic queueing theory, as introduced originally by \citet{Vic:AER1969} and later developed by \citet{HenKoc:TS1981}. In this stream of literature \citet{Aka:TRB2000,Aka:TS2001,AkaHey:TS2003}, \citet{Mou:TRB2006,Mou:TS2007}, \citet{AsnUkk:AGT2009}, \citet{HoeMirRogTen:P5IWINE:2009}, and especially \citet{KocSku:TCS2011} extended some results known for static congestion games to dynamic congestion games. The latter authors use a deterministic queueing model to study dynamic flows and  characterize Nash equilibria. They show the relation between dynamic and static models and they compute the price of anarchy for the dynamic model.
Along these lines,  \citet{ComCorLar:ALP2011,ComCorLar:OR2015} studied equilibria for flows over time in the single-source single-sink deterministic queuing model and  proved existence and uniqueness of equilibria when the inflow rate is piecewise constant. 
\citet{KocNasSku:MMOR2011} used measure-theoretic techniques to combine continuous and discrete time models of flow over time and, among other things, extended to this general setting the classical max-flow min-cut theorem.  \citet{BhaFleAns:GEB2015} considered a Stackelberg model with a network manager acting as a leader who chooses the capacity of each edge in a way that does not exceed its physical limit. They were able to bound the price of anarchy for this model.

Among this literature, our model belongs to the class of deterministic queueing models and is close to the one developed by \citet{KocSku:TCS2011}. 
There are some technical differences with this literature since our model is in discrete time and with atomic players, similar to 
\citet{WerHolKru:ORP2014}. This can induce, for instance, a possible multiplicity of equilibria.

Importantly, our  focus  differs from these papers. Many of them seek to characterize equilibrium flows (e.g. \citet{KocSku:TCS2011}), prove existence results \citep[see, e.g.,][]{ComCorLar:ALP2011,ComCorLar:OR2015}, and provide algorithms for computing equilibria. By contrast, we emphasize how the system evolves towards a steady state. Starting from an empty network, we study how the behavior of the first users impacts the equilibrium steady state and how this steady state is reached. The transient phase  that leads to the steady state is thus particularly important in our model. \citet{ShaShi:PACMSICMMCS2010} consider the transient phase of a dynamic network before a steady state equilibrium is reached. Although their model is stochastic, some of the questions they consider are close in spirit to our  model.
\citet{Com:MP2015} provides a nice survey of congestion models under uncertainty and describes a model of adaptive dynamics that gives a microfoundation for steady state traffic equilibrium models.

We now comment on some more accessory features of our work in relation to the existing literature.
Our model belongs to the class of congestion games with atomic players. In his fundamental paper, \citet{War:PICE1952} modelled the selfish behavior of a huge number of agents on a network as a nonatomic flow and introduced an equilibrium concept that has become the standard reference in the literature. \citet{ChaCoo:TTF1961} showed the relation between Nash and Wardrop equilibria and \citet{HauMar:N1985} proved that, under some conditions, the Wardrop equilibrium in a nonatomic model can be obtained as a limit of Nash equilibria of atomic models. The relation between atomic and nonatomic games has been recently studied by \citet{BhaFleHua:IPCO2010}. A nice survey on Wardrop equilibria can be found in \citet{CorSti:Wiley2010}. General congestion games with a finite number of players were introduced by \citet{Ros:IJGT1973}, who proved that they have pure Nash equilibria; they are actually  isomorphic to potential games \citep[see][]{MonSha:GEB1996}. The issue of multiplicity of equilibria in atomic congestion games was studied by \citet{Har:TS1988,BhaFleHoyHua:P20AACMSIAMSDA2009}. Consistent with this literature, we find multiple equilibria for our game.
 
In order to obtain well defined dynamics, we use a priority order. This approach can be found in earlier works.
\citet{FarOlvVet:CJTCS2008} introduced a routing model with a general priority  scheme for players on different edges. This allowed the authors to introduce a time dependence in the model and to define a cost for each player that depends on the actions of the players with a higher priority. Among the many possibilities, they considered a global priority scheme which is the same for every edge, and a time dependent priority scheme  where  priority is decided by who arrives first on an edge. 
A similar global priority scheme was exploited by \citet{HarHeiPfe:TCS2009}, who studied multicommodity flows where commodities are routed sequentially in a network. In their model demands for commodities are revealed in an online fashion and can be split along several paths. They framed the problem as an optimization problem and they studied online algorithms for its solution. In a related paper,  \citet{HarVeg:LNCS2007} considered a   model in which players' demands change over time and are released in $n$ sequential games in an online fashion. In each game, the new demands form a Nash equilibrium, and their routing remains unchanged afterwards. These three models do not explicitly take into account the dynamics of the flows of players over the network.
Our model retains the idea of a priority scheme, but it is dynamic. 

Finally, for measuring the efficiency of a game we use the now famous price of anarchy, i.e., the ratio between the worst Nash equilibrium latency and the socially optimal latency, and the price of stability, i.e., the ratio between the best Nash equilibrium latency and the socially optimal latency. These two measures were introduced by \citet{KouPap:STACS1999} and \citet{SchSti:P14SIAM2003}, respectively. Their names were coined by \citet{Pap:PACM2001} and \citet{AnsDasKleTarWexRou:SIAMJC2008}, respectively. Inefficiency of equilibria in routing games has been studied by several authors (see among others \citet{RouTar:JACM202, RouTar:GEB2004}, \citet{CorSchSti:MOR2004, CorSchSti:GEB2008, CorSchSti:OR2007}). 

\citet{Bra:U1968,Bra:TS2005} shows that removing an edge in a network can improve the equilibrium latency for all players in a static model. 
\citet{Dag:TS1998} shows similar paradoxical phenomena in traffic models when queues have physical magnitude.
\citet{Rou:JCSS2006} introduces a measure, called the Braess ratio, that quantifies the extent of Braess's paradox.  
A study of the network topologies for which the paradox may exists can be found in \citet{Mil:GEB2006} for static games and in \citet{MacLarSte:TCS2013} for dynamic games.
An analysis of a dynamic Braess-type paradox in communication networks is provided by \citet{XiaHil:IEEETCSIIEB2013}. 

\subsection{Organization of the paper}

The paper is organized as follows. Section~\ref{se:model} presents the model. 
Section~\ref{se:optimum} contains a characterization of the optimum strategy and cost for the case of constant inflows.
Section~\ref{se:topologies} studies the case of parallel networks and chain-of-parallel networks.  
Section~\ref{se:anarchy}  examines efficiency of equilibria and Braess-type phenomena.
Section~\ref{se:seasonal} proposes an extension to model seasonalities.
Section~\ref{se:conclusion} concludes and proposes some open problems.
All proofs are relegated to the Appendix or to the online Supplementary Material.

\section{The model}\label{se:model}

We study a dynamic congestion game on a general directed network with a single source-destination pair, where each edge has a transit cost and a capacity. Formally, consider a directed multigraph $\mathcal{G}=(V,E)$, where $V$ is a finite set of vertices and $E$ is a finite set of edges. We then define a network $\mathcal{N}=(\mathcal{G}, (\tau_{e})_{e \in E}, (\gamma_{e})_{e \in E})$, such that for each $e\in E$, the quantities $\tau_{e}\in\mathbb{N}$  and $\gamma_{e}\in \mathbb{N}$ are the \emph{free-flow transit cost} and the \emph{capacity} of edge $e$, respectively.

A \emph{path} in the network is a finite sequence of edges $(e_{1},\dots, e_{n})$ such that the head of $e_{i}$ coincides with the tail of $e_{i+1}$ for each $i=1,\dots, n-1$.

We make the following assumptions: 
\begin{enumerate}[(a)]

\item
There are two special vertices, the \emph{source} $s$, which has only outgoing edges, and the \emph{destination} $d$, which has only incoming edges (we use the symbol $d$ for destination, rather than the more common $t$, because we reserve $t$ for time). Source and destination are unique.

\item
For each vertex $v\in V\setminus\{s,d\}$, there exists at least a path from $s$ to $v$ and a path from $v$ to $d$.

\end{enumerate}

We call \emph{route}  a path from $s$ to $d$. The set of all routes is denoted by $\mathcal{R}$. The above assumptions guarantee that any path can be extended to a route.

Time is discrete and, at each stage $t$ finitely many players enter the network at the source and choose a route from $s$ to $d$. Each player represents a unit packet of traffic. For simplicity, we assume that all players have the same size, which we normalize to 1. The dynamics of the model is the following.
\begin{itemize}
\item 
At each stage $t\in\mathbb{N}_{+}$, a finite set $G_{t}$ of players, called the \emph{generation} at time $t$, departs from the source.
For all $t \in \mathbb{N}_{+}$ define,
\begin{equation*}
\delta_{t} = \card(G_{t}) \quad\text{and}\quad \mathcal{D} = \{\delta_{t}\}_{t\in \mathbb{N}_{+}}.
\end{equation*}
Denote $[it]$ the $i$-th player in generation $G_{t}$   (when there is no risk of confusion, the square brackets are removed). 

We thus have an infinite set of players $G:=\cup_{t} G_{t}$. We order this set (anti-lexicographically) by $\lhd$ as follows:
\[
[js] \lhd [it] \quad \text{iff}\quad s < t \quad\text{or}\quad (s = t \ \text{and}\ j < i). 
\]
This order represents priorities: if $[js] \lhd [it]$ and if these two players enter the  edge $e$ at the same time, then $[js]$ exits $e$ before $[it]$. Such a global priority is a natural choice for breaking ties in congestion games, see \citet{FarOlvVet:CJTCS2008} and \citet{WerHolKru:ORP2014}.

\item Each player chooses a route in $\mathcal{R}$.

\item At time $t$ player $[it]$ departs from the source $s$, takes the chosen route, and progresses with steps of size $1$ per unit of time along an edge $e$. After $\tau_e$ time units, she arrives at the head of the edge, where a queue may have formed.

\item The rules for exiting the queue are the following:

\begin{itemize}
\item
All players that entered edge $e$ before $[it]$ and those players $[js] \lhd [it]$ who entered $e$ at the same time as $[it]$ are ahead of $[it]$ in the queue.  That is, there is no over-taking in queues and players are ordered first by time of arrival, then by priority.

\item
At most $\gamma_{e}$ players can exit $e$ simultaneously. When player $[it]$ arrives at the end of $e$, if she finds less than $\gamma_{e}$ players in the queue, then she exits immediately; otherwise, only the first $\gamma_{e}$ players exit at this stage and player $[it]$ waits for one stage. This process repeats until there remain less than $\gamma_{e}$ players in the queue ahead of player $[it]$. Then, $[it]$ exits edge $e$ and continues along the chosen route.
\end{itemize}
\item
The process is repeated until player $[it]$ arrives at the destination $d$ and quits the system.
\end{itemize}

These rules define a \emph{dynamic congestion game} denoted $\Gamma(\mathcal{N}, \mathcal{D})$. Each strategy profile $\sigma \in \mathcal{R}^{G}$ induces queues on edges. We denote $\ell_{it}(\sigma)$ the \emph{latency} suffered by player $[it]$, defined as
\[
\ell_{it}(\sigma)=c_{it}(\sigma)+w_{it}(\sigma),
\]
where $c_{it}(\sigma):=\sum_{e\in r_{it}(\sigma)}\tau_{e}$ is the \emph{transit cost} paid by player $[it]$ and $w_{it}(\sigma)$ is the \emph{waiting cost} paid by player $[it]$, namely, the total number of stages that $[it]$ spends queueing, summed over the edges that she crosses. Both costs are additive, the total cost over the route is the sum of costs over the edges of the route.

We define the \emph{total transit cost} $c_{t}$, the \emph{total waiting cost} $w_{t}$, and the \emph{total latency} $\ell_{t}$ at stage $t$ as follows:
\begin{align*}
c_{t}(\sigma) &= \sum_{[it]\in G_{t}} c_{it}(\sigma), \\
w_{t}(\sigma) &= \sum_{[it]\in G_{t}} w_{it}(\sigma), \\
\ell_{t}(\sigma) &= \sum_{[it]\in G_{t}} \ell_{it}(\sigma) =c_{t}(\sigma)+w_{t}(\sigma).
\end{align*}
For each integer $T$, the \emph{average total latency} over the period $\{1,\dots, T\}$ is
\[
\bar{L}_{T}(\sigma)=\frac{1}{T}\sum_{t=1}^{T} \ell_{t}(\sigma).
\]
If $\lim_{T \to \infty}\bar{L}_{T}(\sigma)$ exists, then it
is called \emph{asymptotic average total latency} for the strategy $\sigma$.

\begin{definition}\label{de:SO} 
A strategy profile $\sigma$ is \emph{(socially) optimal} if
\begin{equation}\label{eq:optimum}
\liminf_{T\to\infty}\bar{L}_{T}(\sigma')\geq\limsup_{T\to\infty}\bar{L}_{T}(\sigma) \text{ for all }\sigma' \in \mathcal{R}^{G}.
\end{equation}
Call $\mathcal{O}(\mathcal{N}, \mathcal{D})$ the set of strategies $\sigma \in \mathcal{R}^G$ for which \eqref{eq:optimum} holds. Then  
\begin{equation}\label{eq:Opt}
\Opt(\mathcal{N}, \mathcal{D}):=\liminf_{T\to\infty}\bar{L}_{T}(\sigma)=\limsup_{T\to\infty}\bar{L}_{T}(\sigma) \quad\text{with }\sigma \in \mathcal{O}(\mathcal{N}, \mathcal{D})
\end{equation}
is called the \emph{optimal latency}.
\end{definition}

\begin{definition}\label{de:SPE} 
A strategy profile $\sigma$ is a \emph{Nash equilibrium} if
\[
\ell_{it}(\sigma)\leq\ell_{it}(\sigma'_{it},\sigma_{-it})\text{ for all }[it]\in G, \text{ for all }\sigma'_{it}\in \mathcal{R},
\]
where $\sigma_{-it}$ indicates the profile of strategies of all players different from $[it]$.

A Nash equilibrium $\sigma$ is a \emph{uniformly fastest route (UFR) equilibrium} if for every player $[it]$ and for every vertex $v$ on the route $\sigma_{it}$, there is no alternative route $\sigma'_{it}$ that allows player $[it]$ to arrive at $v$ earlier than under $\sigma_{it}$. 
\end{definition}

There is obviously an asymmetry between Definitions~\ref{de:SO} and \ref{de:SPE} which stems from the type of rationality driving the two concepts. An optimal strategy is the choice that a long-lived planner would like to take, in order to optimize the long-run social welfare. By contrast, an equilibrium is a strategy profile such that each finitely lived player optimizes her cost given the choices of the other players.

In non-atomic games \citep[see][]{KocSku:TCS2011}, the  Nash-flow-over-time definition is equivalent to assuming that all particles arrive at each intermediate vertex as early as possible, which is also equivalent to requiring that no flow overtakes any other flow. For atomic games, this equivalence does not hold. For instance, take a player who is the last in her generation. The immediate successor is the first in the next generation, and therefore there is a time difference between the departures of these two subsequent players. In that case, it might be that a player does not want to arrive at each intermediate vertex as early as possible. It might also be the case that a player overtakes her predecessor in equilibrium, while arriving at the destination at the same time point. In fact, the three notions which coincides in the non-atomic case, might yield different long-run latencies in the atomic case. Example~\ref{ex:verybadNash} in the online Supplementary Material illustrates this phenomenon.

We denote $\mathcal{E}(\mathcal{N}, \mathcal{D})$ the set of UFR equilibria of the game $\Gamma(\mathcal{N}, \mathcal{D})$. We argue that a UFR equilibrium exists, a similar argument can be found in \citet{WerHolKru:ORP2014}. In an empty network, there is always a shortest route with the property that every intermediate vertex is reached as early as possible, since this is equivalent to the static shortest path problem. If the first player chooses such a route, then that player cannot be overtaken. Taking this choice into account, the second player chooses a route that reaches every intermediate vertex as early as possible, so that   she cannot be overtaken either. Continuing  this procedure iteratively yields a UFR equilibrium. We present this result as a lemma whose proof is given in  the Appendix for completeness.

\begin{lemma}\label{le:existence}
The set $\mathcal{E}(\mathcal{N}, \mathcal{D})$ is not empty.
\end{lemma}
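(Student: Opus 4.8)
The plan is to make the informal sketch given just above the lemma rigorous by constructing an explicit element of $\mathcal{E}(\mathcal{N},\mathcal{D})$ by recursion along the priority order $\lhd$, and then checking separately that no player can be overtaken and that no player can gain by deviating. Since each generation $G_{t}$ is finite, every player $[it]$ has only finitely many $\lhd$-predecessors, so $\lhd$ is a well-order and the recursion is legitimate: assuming the routes of all players $[js]\lhd[it]$ have been fixed, these finitely many predecessors form a well-defined finite subgame whose play determines a deterministic load on every edge at every time, and we let $[it]$ choose a route $\sigma_{it}$ that, against this load, reaches every intermediate vertex on it as early as possible.

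The first step is to verify that such a route exists. Consider a single player added to a fixed deterministic load and entering the source at time $t$. The waiting time she incurs on an edge $e$ is a nondecreasing function of the time at which she reaches the tail of $e$ --- reaching it later, she finds a weakly larger set of higher-priority players ahead of her in the queue --- so entering $e$ earlier makes her exit $e$ no later. This FIFO-type monotonicity propagates forward along any route, and, exactly as in the classical single-source shortest-path argument, it implies that the earliest-arrival times $a(v)$ at the vertices reachable by this lone player obey Bellman-type optimality relations and are attained simultaneously along one route (an earliest-arrival tree); we take $\sigma_{it}$ to be a source-to-destination branch of such a tree, which exists by assumption~(b) on the network.

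The main step is the \emph{no-overtaking invariant}: in the profile $\sigma$ so built, for every pair of players with $[it]\lhd[js]$, player $[js]$ never delays player $[it]$; equivalently, each player's trajectory in the full game coincides with her trajectory in the finite subgame formed by herself and her $\lhd$-predecessors, so that $[it]$ indeed reaches every vertex $v$ of $\sigma_{it}$ at the time $a(v)$ computed when she was processed. This is proved by induction along $\lhd$. If, in the inductive step, some $[js]$ with $[it]\lhd[js]$ were ahead of $[it]$ in a queue, then, choosing the earliest such edge $e$ and letting $u$ be its tail, $[js]$ would reach $u$ strictly before $[it]$. But the $\lhd$-predecessors of $[it]$ form a subset of those of $[js]$, and $[it]$ departs no later than $[js]$; hence a lone player who leaves the source at time $t$ and follows the prefix of $\sigma_{js}$ up to $u$ against the (smaller) load generated by $[it]$'s predecessors would reach $u$ no later than $[js]$ does, contradicting the definition of $a(u)$ as $[it]$'s earliest possible arrival at $u$. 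Beyond the FIFO monotonicity, this argument uses that a lone player is delayed (weakly) by adding load and advanced (weakly) by departing earlier or by removing other players; pinning these monotonicities down carefully from the queueing rules is the main obstacle of the proof.

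Given the invariant, $\sigma$ is a UFR equilibrium. The UFR property holds by construction together with the invariant: $[it]$ reaches every vertex of $\sigma_{it}$ at the earliest time consistent with her predecessors, which by the invariant is also the earliest time consistent with all players. For the equilibrium property, fix $[it]$ and a unilateral deviation $\sigma'_{it}$. In $(\sigma'_{it},\sigma_{-it})$ the $\lhd$-predecessors of $[it]$ are unaffected --- they are of higher priority and, by the same monotonicity argument applied to the subgame they form, insensitive to any lower-priority player --- and deleting all players $[js]$ with $[it]\lhd[js]$ can only advance $[it]$'s arrival at $d$. Hence under the deviation $[it]$ arrives at $d$ no earlier than a lone player following $\sigma'_{it}$ against the load of her predecessors, which by definition of $a(\cdot)$ is no earlier than $a(d)$, her arrival time under $\sigma$. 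So the deviation is not profitable, and $\sigma\in\mathcal{E}(\mathcal{N},\mathcal{D})$, as required.
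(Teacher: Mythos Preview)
Your proof is correct and follows the same approach as the paper: construct $\sigma$ recursively along the priority order $\lhd$, letting each player choose a route that reaches every intermediate vertex as early as possible against the load of her predecessors, then invoke a no-overtaking property to conclude that later players do not affect earlier ones, so the profile is a UFR equilibrium. The paper's proof is a short sketch along exactly these lines; you have supplied considerably more detail, in particular the FIFO monotonicity justifying the existence of an earliest-arrival route, the explicit contradiction argument for the no-overtaking invariant, and the separation of the UFR and Nash properties in the final verification.
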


The quantity
\begin{equation}\label{eq:WEq}
\WEq(\mathcal{N}, \mathcal{D}):=\sup_{\sigma \in\mathcal{E}(\mathcal{N}, \mathcal{D})} \limsup_{T \to \infty}\bar{L}_{T}(\sigma)
\end{equation}
is called  the \emph{worst equilibrium latency}
and the quantity
\begin{equation}\label{eq:BEq}
\BEq(\mathcal{N}, \mathcal{D}):=\inf_{\sigma \in\mathcal{E}(\mathcal{N}, \mathcal{D})} \limsup_{T \to \infty}\bar{L}_{T}(\sigma)
\end{equation}
is called  the \emph{best equilibrium latency}.

\section{Socially optimal strategies}\label{se:optimum}

In this section  we characterize flows and costs generated by optimal strategies. 
Before stating the results, a simple observation is that the number of players entering the network over time has to be compared with the number that the network is able to absorb. 

\begin{definition}
A \emph{cut} in the network $\mathcal{N}$ is a subset of edges $C\subseteq E$ such that each route contains at least one element of $C$. The capacity of a cut $C$ is $\gamma_{C}=\sum_{e\in C}\gamma_{e}$. 
Call $\mathcal{C}(\mathcal{N})$ the set of all cuts in $\mathcal{N}$.
A \emph{minimum cut} is a cut $C$ such that
\[
\gamma_{C} \le \gamma_{C'} \quad \text{for all } C' \in \mathcal{C}(\mathcal{N}).
\]
The capacity $\gamma$ of the network $\mathcal{N}$ is the capacity of any minimum cut.
\end{definition}

Until further notice (see Section \ref{se:seasonal}), we assume that the number of players in each generation is uniform over time, i.e.,  $\delta_{t}=\delta$ for all $t \in \mathbb{N}_{+}$ (abusing notation, the departure sequence $\mathcal{D}$ will be denoted simply by $\delta$). From the max-flow min-cut theorem of  \citet{ForFul:OR1958}, if  $\delta >\gamma$, then the lengths of queues on the edges of the minimum cut diverge to infinity and thus the long-run average total cost is infinite under any strategy profile. Therefore, in this section, we assume $\delta \le \gamma$.

We first recall some usual  concepts of optimality for static flows over networks. A (static) \emph{network flow} $f$ assigns a non-negative flow value $f_{e}$ to each edge $e\in E$ (in our setting, these are  integers). The flow $f$ is \emph{feasible} if it obeys the capacity constraints, i.e., $f_{e}\leq \gamma_{e}$ for each $e\in E$, and flow conservation, i.e., the outflow minus the inflow at each vertex $v\in V\setminus\{s,d\}$ is $0$. The value of a feasible network flow is  the inflow at $d$, in our case this is 
$\delta$.
A  static flow may also be defined over routes. Consider a set of $\delta$ players and the routes that they choose. They induce a static flow over routes $F$ which assigns to each route $r$ an integer $F_{r}$, such that $\sum_{r\in\mathcal{R}}F_{r}=\delta$.
This in turn induces a flow over edges by letting $f_e=\sum_{\{r : e\in r\}}F_{r}$.

The min-cost flow (static) optimization problem is the minimization  of the total transit cost among all feasible flows with a value of $\delta$. Let $f^{*}$ be an optimal feasible solution and $F^{*}$ a corresponding optimal flow over routes. This is the optimal assignment that a planner would chose in a static framework with a single set of $\delta$ players, subject to feasibility.

Back to the dynamic problem, given a strategy profile $\sigma$ and a route $r$, denote $N^{r}_{t}(\sigma)$ the number of players who choose route $r$ at stage $t$ under the strategy profile $\sigma$.

\begin{theorem}\label{th:optimum}
Consider the game $\Gamma(\mathcal{N}, \delta)$, where $\delta \le \gamma$. Let $f^{*}$ be an optimal feasible min-cost network flow with a value of $\delta$ and let $F^{*}$ be the corresponding flow over routes.
Then there exists $\sigma \in\mathcal{O}(\mathcal{N}, \delta)$ such that for each stage $t\in\mathbb{N}_{+}$ and route $r\in\mathcal{R}$,
\begin{equation}\label{eq:optcost}
N^{r}_{t}(\sigma)=F^{*}_{r}.
\end{equation}
\end{theorem}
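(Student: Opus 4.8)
The plan is to show that the stationary profile which replays the static min‑cost flow in every generation is optimal. Define $\sigma$ by letting, in each generation $G_{t}$, exactly $F^{*}_{r}$ players choose route $r$ (the way indices inside a generation are matched to routes will turn out to be irrelevant). We may assume $F^{*}$ is supported on simple routes: deleting the cyclic components of $f^{*}$ keeps the edge flow feasible, preserves its value $\delta$, and cannot increase the transit cost, since any edge lying on such a cycle must carry $\tau_{e}=0$ by optimality of $f^{*}$. By construction $\sigma$ satisfies $N^{r}_{t}(\sigma)=F^{*}_{r}$, so only the optimality of $\sigma$ remains to be proved, and this splits into an easy upper bound and a harder matching lower bound.

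\textbf{Step 1 (the stationary profile is queue‑free).} I would prove by induction on the stage $\theta$ that under $\sigma$ no queue ever forms, so that a player on a route $r=(e_{1},\dots,e_{m})$ departing at stage $t$ enters edge $e_{j}$ exactly at stage $t+\sum_{i<j}\tau_{e_{i}}$. Granting this for all stages before $\theta$: the players that enter a fixed edge $e$ at stage $\theta$ are precisely those $[it]$ on a route $r$ with $e_{j}(r)=e$ and $t=\theta-\sum_{i<j}\tau_{e_{i}}\geq 1$, hence at most $\sum_{\{r:e\in r\}}F^{*}_{r}=f^{*}_{e}\leq\gamma_{e}$ of them; since, by the inductive hypothesis, no queue is carried over at the head of $e$, all of them exit $e$ without delay, which closes the induction (the priority order never matters because nobody ever waits). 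Consequently $\ell_{it}(\sigma)=c_{it}(\sigma)=\sum_{e\in r_{it}}\tau_{e}$ for every player, so $\ell_{t}(\sigma)=\sum_{r}F^{*}_{r}c(r)=\sum_{e}\tau_{e}f^{*}_{e}=:c(f^{*})$ for every $t$, whence $\bar{L}_{T}(\sigma)=c(f^{*})$ for all $T$.

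\textbf{Step 2 (no profile does asymptotically better).} Here is the crux: $\liminf_{T\to\infty}\bar{L}_{T}(\sigma')\geq c(f^{*})$ for every $\sigma'$. If the $\liminf$ is infinite there is nothing to prove; otherwise call it $M$, pick $T_{k}\to\infty$ with $\bar{L}_{T_{k}}(\sigma')\to M$, and (passing to a subsequence) let the averaged edge flows $\bar f^{(T_{k})}_{e}:=\frac{1}{T_{k}}\sum_{t\leq T_{k}}\sum_{\{r:e\in r\}}N^{r}_{t}(\sigma')$, which lie in $[0,\delta]$, converge to a flow $\phi$. As a limit of route‑induced edge flows of value $\delta$, $\phi$ obeys flow conservation and has value $\delta$. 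The key claim is that $\phi$ is feasible: if instead $\phi_{e}\geq\gamma_{e}+2\epsilon$, then for large $k$ at least $(\gamma_{e}+\epsilon)T_{k}$ of the players that depart by stage $T_{k}$ use $e$; since at most $\gamma_{e}$ players exit $e$ per stage, at most $(\gamma_{e}+\tfrac{\epsilon}{2})T_{k}+\gamma_{e}$ of them have exited $e$ by stage $\Theta_{k}:=\lceil(1+\tfrac{\epsilon}{2\gamma_{e}})T_{k}\rceil$, so $\Omega(\epsilon T_{k})$ of them are still in the network at stage $\Theta_{k}$ and hence incur latency $\geq\Theta_{k}-T_{k}=\Omega(\tfrac{\epsilon}{\gamma_{e}}T_{k})$; as all of these departed by $T_{k}$, this forces $\bar{L}_{T_{k}}(\sigma')=\Omega(\tfrac{\epsilon^{2}}{\gamma_{e}}T_{k})\to\infty$, contradicting $\bar L_{T_k}(\sigma')\to M$. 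Thus $\phi$ is a feasible static flow of value $\delta$, so $\sum_{e}\tau_{e}\phi_{e}\geq c(f^{*})$ by optimality of $f^{*}$. Finally, using $\ell_{t}\geq c_{t}$ and the identity $\frac{1}{T_{k}}\sum_{t\leq T_{k}}c_{t}(\sigma')=\sum_{e}\tau_{e}\,\bar f^{(T_{k})}_{e}$, I get $M=\lim_{k}\bar{L}_{T_{k}}(\sigma')\geq\lim_{k}\sum_{e}\tau_{e}\,\bar f^{(T_{k})}_{e}=\sum_{e}\tau_{e}\phi_{e}\geq c(f^{*})$.

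\textbf{Conclusion and main obstacle.} Combining the two steps, $\liminf_{T}\bar{L}_{T}(\sigma')\geq c(f^{*})=\limsup_{T}\bar{L}_{T}(\sigma)$ for every $\sigma'$, so $\sigma\in\mathcal{O}(\mathcal{N},\delta)$, $\Opt(\mathcal{N},\delta)=c(f^{*})$, and $\sigma$ satisfies \eqref{eq:optcost}. The routine part is Step~1: replaying a feasible static flow every stage keeps the network in a queue‑free steady state from the outset, the transient phase merely loading it with \emph{fewer} players than the steady state. The real work is the feasibility claim inside Step~2 — quantifying how a persistent average overload $\epsilon$ on some edge forces an accumulation of latency that grows quadratically in the horizon, which is what rules out strategies that "cheat" by overloading cheap edges and paying in queues.
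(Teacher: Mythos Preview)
Your argument is correct and in fact tighter than the paper's. Both proofs use the same two ingredients: (i) repeating the static min-cost flow $F^{*}$ keeps the network queue-free and realises the cost $c(f^{*})$, and (ii) a persistent average overload on an edge forces the total latency to blow up quadratically in the horizon. The paper, however, deploys (ii) only on edges of a \emph{minimum cut} (its Lemma~\ref{le:limsupLbar}), combines it with the max-flow min-cut theorem to pin the average flow on the min cut to $\gamma_{c}$ (Lemma~\ref{co:FF}), and then argues somewhat informally that ``at the optimum the total waiting time is zero''; the case $\delta<\gamma$ is reduced to $\delta=\gamma$ by pre-pending a fictitious bottleneck edge of capacity $\delta$. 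You instead apply the blow-up argument to \emph{every} edge, conclude directly that the limit edge-flow $\phi$ is feasible with value $\delta$, and get the lower bound $M\geq \sum_{e}\tau_{e}\phi_{e}\geq c(f^{*})$ from the optimality of $f^{*}$ in one line. This handles all $\delta\leq\gamma$ uniformly, avoids the augmentation trick, and makes the inequality $\liminf_{T}\bar L_{T}(\sigma')\geq c(f^{*})$ fully explicit where the paper leaves it to the reader. One small caveat: your reduction to $F^{*}$ supported on simple routes tacitly replaces the given $(f^{*},F^{*})$ by another optimal pair; since the theorem fixes $F^{*}$ in advance, you may want to note that the Step~1 induction goes through for \emph{any} route decomposition of a feasible $f^{*}$ (for each edge $e$ and stage $\theta$, the number of entries into $e$ at $\theta$ is bounded by $f^{*}_{e}$, counting multiplicities if a route revisits $e$).
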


This result says that finding the optimal long-run latency boils down to computing a min-cost static flow. 
This  problem  is well studied in the literature and   algorithms for solving it efficiently are known \citep[see, for instance,][]{ForFul:PUP1962,AhuMagOrl:Prentice1993,Sch:Springer2003A,KorVyg:Springer2012}.

The detailed proof is in the Appendix. The main insight is as follows. Consider first the case of $\delta=\gamma$. Since in this case the inflow is equal to the capacity of the network, if the planner violates the capacity constraints at some stage $t$, then this creates a queue that will remain through time. An excess of players on some edge of the min-cut, can only be compensated by a future deficit on that edge, which entails an excess on some other edge of the min-cut. Consequently, queues can never be undone and the long-run planner is better-off never creating any queue. When $\delta<\gamma$ we can consider an augmented network obtained concatenating an edge of capacity $\delta$ before the origin of the original network. The optimum of this augmented network (now at capacity $\delta$) is the optimum of the original network with a flow $\delta<\gamma$.

Note that this problem is different from finding an earliest arrival flow, where a given set of players (or particles) has to be shipped to the destination with the requirement that each particle arrives as fast as possible
\citep[see, for instance,][]{Gal:MMJ1959,HopTar:MOR2000}. \citet{JarRat:MS1982} has shown that this is equivalent to the problem of having as many players as possible to reach the destination in a prescribed amount of time, or to the problem of minimizing the average time to evacuate the system. In our setting instead, players enter the system infinitely often over time, and the goal of the planner is to minimize the average traveling time, which is achieved by not creating queues.


\section{Equilibria for simple network topologies}\label{se:topologies}
This section describes the impact of the dynamic nature of the model on equilibrium latencies. We  first consider  parallel networks for which we are able to give sharp characterizations of equilibria. Then, we extend the results to chain-of-parallel networks.

\subsection{Parallel networks}\label{suse:parallel}

In a \emph{parallel network}, each route contains a single edge (see Figure \ref{fi:parallelnetwork}).
For such networks, we can compute exactly the optimum and equilibrium costs for uniform departure inflow ($\delta_t=\delta$ for all $t\in\mathbb{N}_+$), with $\delta\leq\gamma$. 

\begin{figure}[h]
\centering
\begin{tikzpicture}[->,>=stealth',shorten >=1pt,auto,node distance=6cm,
  thick,main node/.style={circle,fill=blue!20,draw,minimum size=25pt,font=\sffamily\Large\bfseries},source node/.style={circle,fill=green!20,draw,minimum size=25pt,font=\sffamily\Large\bfseries},dest node/.style={circle,fill=red!20,draw,minimum size=25pt,font=\sffamily\Large\bfseries}]

  \node[source node] (1) {$s$};
  \node[dest node] (2) [right of=1] {$d$};

  \path[every node/.style={font=\sffamily\small}]
    (1) edge [bend right = 60] node[above] {$e_{1}$} (2)
        edge [bend right = 30] node[above] {$e_{2}$} (2)
        edge node [above] {$e_{3}$} (2)
        edge [bend left = 30] node[above] {$e_{4}$} (2)
        edge [bend left = 60] node[above] {$e_{5}$} (2);

\end{tikzpicture}
~\vspace{0cm} \caption{\label{fi:parallelnetwork} Parallel network.}
\end{figure}
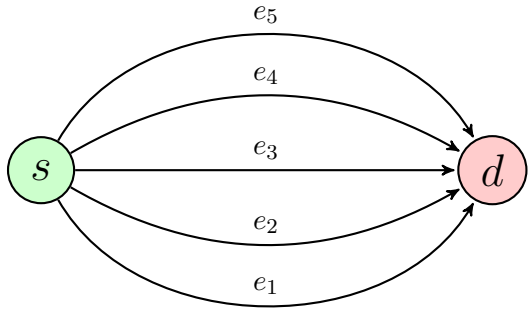

For convenience, we impose an order $\prec$ on the edges such that their lengths are weakly increasing along this order: $i < j \implies e_{i} \prec e_{j} \implies \tau_{e_{i}}\leq\tau_{e_j}$.

Observe that a parallel network admits a unique cut and thus its capacity is simply the sum of the capacities of its edges $\gamma=\sum_{e\in E}\gamma_{e}$.

For each $f \in E$, denote 
$f^{\prec}:=\{e \in E:e \prec f\}\quad\text{and}\quad f^{\precsim}:=f^{\prec} \cup \{f\}$,
where $e_{1}^{\prec} = \varnothing$.
For each $\delta \le \gamma$, there is a unique $f_{\delta}\in E$ such that
\[
\sum_{e \in f_{\delta}^{\prec}} \gamma_{e} < \delta \leq\sum_{e \in f_{\delta}^{\precsim}}\gamma_{e}.
\]
Define the (sub-)network $\mathcal{N}_{\delta}$ with set of edges $f_{\delta}^{\precsim}$, such that each edge $e \in f_{\delta}^{\prec}$ has transit cost $\tau_{e}$ and capacity $\gamma_{e}$, and edge $f_{\delta}$ has transit cost $\tau_{f_{\delta}}$ and capacity $\delta-\sum_{e \in f_{\delta}^{\prec}} \gamma_{e}\le \gamma_{f_{\delta}}$. The total capacity of $\mathcal{N}_{\delta}$ is precisely $\delta$. 

\begin{theorem}\label{th:parallelunif} 
Consider the game $\Gamma(\mathcal{N},\delta)$, where $\mathcal{N}$ is a parallel network and $\delta \le \gamma$. Then
\begin{align*}
\Opt(\mathcal{N}, \delta)&=\sum_{e \in f_{\delta}^{\prec}} \gamma_{e}\tau_{e} + \left(\delta- \sum_{e \in f_{\delta}^{\prec}}  \gamma_{e}\right)\tau_{f_{\delta}},\\
\WEq(\mathcal{N}, \delta)&= \delta \tau_{f_{\delta}},
\end{align*}
and there exists a time $t_{0}$ such that for each $t\geq t_{0}$, 
\[
N^{e}_{t}(\sigma^{\Opt})=N^{e}_{t}(\sigma^{\WEq})=\gamma_{e}\quad \text{for } e \prec f_{\delta} \quad \text{and}\quad N^{f_{\delta}}_{t}(\sigma^{\Opt})=N^{f_{\delta}}_{t}(\sigma^{\WEq})= \delta -\sum_{e \in f_{\delta}^{\prec}}  \gamma_{e}.
\]

\end{theorem}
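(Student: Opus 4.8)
The plan is to fix notation as follows: index the edges $e_1\prec\dots\prec e_m$, write $\tau_i:=\tau_{e_i}$ and $\gamma_i:=\gamma_{e_i}$, let $k$ be the index with $f_\delta=e_k$, and set $\rho:=\delta-\sum_{i<k}\gamma_i$, so $1\le\rho\le\gamma_k$; the target flow vector is $\bar n:=(\gamma_1,\dots,\gamma_{k-1},\rho,0,\dots,0)$ and the target values are $\Opt=\sum_{i<k}\gamma_i\tau_i+\rho\tau_k$ and $\WEq=\delta\tau_k$. The optimum part is essentially a corollary of Theorem~\ref{th:optimum}: that theorem reduces $\Opt(\mathcal N,\delta)$ to the value of a min-cost static flow of value $\delta$ and produces $\sigma^{\Opt}\in\mathcal O(\mathcal N,\delta)$ realizing a corresponding route flow at \emph{every} stage. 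In a parallel network a feasible value-$\delta$ flow is just a vector $(f_{e_i})$ with $0\le f_{e_i}\le\gamma_i$ and $\sum_i f_{e_i}=\delta$, and the obvious exchange argument (if $f_{e_j}<\gamma_j$, $f_{e_\ell}>0$ and $\tau_j\le\tau_\ell$, shift one unit from $e_\ell$ to $e_j$) shows $f^*=\bar n$ is cost-minimal; hence $\Opt(\mathcal N,\delta)$ has the stated value and $N^{e_i}_t(\sigma^{\Opt})=\bar n_i$ for all $t$, so $t_0=1$ works for $\sigma^{\Opt}$.

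For the equilibrium part I would first record a structural simplification: since every route is a single edge and there are no intermediate vertices, a profile is a UFR equilibrium iff each player, taken in the priority order $\lhd$, chooses a latency-minimizing edge against the finitely many higher-priority players, and any such profile is automatically a Nash equilibrium because no player's latency depends on a lower-priority player (overtaking of a higher-priority player is impossible); the only source of multiple UFR equilibria is ties. For a fixed UFR equilibrium $\sigma$, set $n_i(t):=N^{e_i}_t(\sigma)$; the head-queue of $e_i$ obeys the deterministic recursion with arrival stream $n_i(\cdot-\tau_i)$ and service rate $\gamma_i$, and the latency on $e_i$ of a generation-$t$ player equals $\tau_i$ plus her number of queueing stages, which is nonnegative and, within a generation, spreads by at most $\lceil n_i(t)/\gamma_i\rceil$.

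The heart of the proof is the description of these dynamics starting from the empty network. As long as the current latency of the cheapest edges stays below $\tau_2$, only $e_1$ can receive flow (it is strictly best), so its queue grows monotonically; when that latency reaches $\tau_2$ the flow spreads over $\{e_1,e_2\}$ with balanced latencies that keep rising; inductively the set of used edges expands through $e_1,e_2,\dots$ with queues rising monotonically, and — this is where the cut capacity enters — the expansion halts exactly at $e_k$, because once $e_1,\dots,e_k$ are in play their combined capacity $\sum_{i\le k}\gamma_i\ge\delta$ absorbs the entire inflow and pins the common latency at $\tau_k$. Hence, from some stage $t_0$ on, the queue on $e_i$ equals $\gamma_i(\tau_k-\tau_i)$ for $i<k$, edge $e_k$ carries $\rho\le\gamma_k$ with empty queue, edges $e_{k+1},\dots,e_m$ carry nothing when ties are broken towards the lowest index, and — since $\gamma_i(\tau_k-\tau_i)$ is a multiple of $\gamma_i$ — every player on a used edge then has latency exactly $\tau_k$; so $\ell_t(\sigma)=\delta\tau_k$ for $t\ge t_0$ and $\ell_t(\sigma)\le\delta\tau_k$ throughout. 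Taking $\sigma^{\WEq}$ to be the lowest-index tie-breaking equilibrium gives $\limsup_T\bar L_T(\sigma^{\WEq})=\delta\tau_k$ and $N^{e_i}_t(\sigma^{\WEq})=\bar n_i$ for $t\ge t_0$; running the same analysis for an arbitrary $\sigma\in\mathcal E(\mathcal N,\delta)$ (only the transient tie-breaking is unconstrained, and it cannot lift any queue permanently above $\gamma_i(\tau_k-\tau_i)$) yields $\limsup_T\bar L_T(\sigma)\le\delta\tau_k$, hence $\WEq(\mathcal N,\delta)=\delta\tau_k$. Finally one enlarges $t_0$ to also cover $\sigma^{\Opt}$.

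The hard part will be that dynamics step. The deterministic-queue bookkeeping with the transit delays $\tau_i$ and the within-generation spread is routine but lengthy; the genuinely delicate point is ruling out pathological equilibria — whenever an edge's current latency equals the next threshold, tie-breaking is free, and one must show that no run of such choices can make a queue oscillate so as to keep $\ell_t$ above $\delta\tau_k$ infinitely often. The lever is again the cut inequality $\sum_{i\le k}\gamma_i\ge\delta$: whenever all of $e_1,\dots,e_k$ carry large queues their total service rate $\sum_{i\le k}\gamma_i$ exceeds the inflow, so the aggregate queue on $\{e_1,\dots,e_k\}$ drains and, after the transient, cannot exceed $\sum_{i<k}\gamma_i(\tau_k-\tau_i)$; converting this into the clean convergence statement above, and controlling the integer round-ups so that the bound is exactly $\delta\tau_k$ rather than $\delta\tau_k+O(1)$, is where the real work lies.
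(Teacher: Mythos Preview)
Your approach is essentially the same as the paper's: invoke Theorem~\ref{th:optimum} for the optimum (with the exchange argument pinning down $\bar n$ as the min-cost static flow), then for $\WEq$ construct the greedy equilibrium in which each player picks a minimum-latency edge and breaks ties towards the $\prec$-smallest index, and argue that queues fill up monotonically until all used edges have latency $\tau_k$.

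Where the paper differs is in how it executes the ``hard part'' you flag. Rather than arguing qualitatively that the set of used edges expands and the aggregate queue eventually stabilizes, the paper writes down explicit transition times
\[
T_j=\sum_{h=1}^{j}\frac{\sum_{i\le h}\gamma_i}{\delta-\sum_{i\le h}\gamma_i}\,(\tau_{h+1}-\tau_h),
\]
and, for each $j$, a quantity $\alpha_j$ counting how many players into generation $\lfloor T_j\rfloor$ the first indifference with $e_{j+1}$ occurs. The technical core is an inductive claim that $0\le\alpha_j<\delta-\gamma_1$; this is exactly the integer round-up control you anticipate, and it is what guarantees the switch to $e_{j+1}$ happens within the predicted generation, so that from $t>\lfloor T_{n-1}\rfloor$ on every player faces latency exactly $\tau_k$ on every edge of $\mathcal N_\delta$ and the flow is $\bar n$. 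Your drainage argument via $\sum_{i\le k}\gamma_i\ge\delta$ is morally the same mechanism, but the paper's bookkeeping avoids having to reason about oscillations altogether: with lowest-index tie-breaking the ``joint queue'' on $\{e_1,\dots,e_j\}$ is monotone until the next threshold, so there is nothing to rule out. On the upper bound $\WEq\le\delta\tau_k$ the paper is terser than you are (it essentially notes that once the steady state is reached, at most $\delta\le\sum_{i\le k}\gamma_i$ players arrive per stage and no further queue can form), and the fact that lowest-index tie-breaking is worst among UFR equilibria is relegated to a separate lemma used later; your explicit concern about arbitrary tie-breaking is well placed but does not require the extra machinery you sketch.
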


The intuition for the proof is simple.  First, it is clear that in a social optimum, the planner uses only the sub-network 
$\mathcal{N}_{\delta}$, and from Theorem~\ref{th:optimum}, no queues are created. Thus, any optimal strategy sends exactly $\gamma_{e}$ players on each edge $e$ of $\mathcal{N}_\delta$ at each stage. Regarding equilibria, the idea is that the selfish players first fill short edges, thereby creating
queues. As a result, the latencies of these edges increase for future generations, and eventually,
all latencies become equal to the highest transit cost for that (sub)network. From that point on, players are basically indifferent and, as in an optimal strategies, exactly $\gamma_{e}$ players choose edge $e$ at each stage. The formal proof is in the Appendix.

\subsection{Chain-of-parallel networks}\label{suse:chainofparallel}

Let $\mathcal{N}_{1}, \mathcal{N}_{2}$ be two networks with respective source-destination pairs $(s_{1},d_{1})$, 
$(s_{2},d_{2})$. The \emph{series composition} of $\mathcal{N}_{1}$ and $\mathcal{N}_{2}$ is the network $\mathcal{N}=\mathcal{N}_{1}\oplus\mathcal{N}_{2}$ with source $s_{1}$, destination $d_{2}$ and where $d_{1}$ and $s_{2}$ are merged together. A \emph{chain-of-parallel network} is  obtained by composing parallel networks in series. For $h \in \{1, \dots, H\}$, let $\mathcal{N}^{(h)}=\left(E^{(h)}, (\tau_{e})_{e \in E^{(h)}}, (\gamma_{e})_{e \in E^{(h)}}\right)$ be a parallel network and  consider the network $\mathcal{N}_{\ser}(H)$ obtained by composing $\mathcal{N}^{(1)}, \dots, \mathcal{N}^{(H)}$ in series. Clearly, any subnetwork $\mathcal{N}^{(h)}$ is a cut of $\mathcal{N}_{\ser}(H)$. Let $\mathcal{N}^{(*)}$ be a minimum cut of $\mathcal{N}_{\ser}(H)$ and let $\gamma^{(*)}$ be the capacity of $\mathcal{N}^{(*)}$.

We obtain the following characterization for optimal and equilibrium values.

\begin{theorem}\label{th:chain}
Consider the game $\Gamma(\mathcal{N}_{\ser}(H), \gamma^{(*)})$. Then
\begin{align*}
\Opt\left(\mathcal{N}_{\ser}(H), \gamma^{(*)}\right)&=\sum_{h=1}^{H} \Opt\left(\mathcal{N}^{(h)}, \gamma^{(*)}\right),\\
\WEq\left(\mathcal{N}_{\ser}(H), \gamma^{(*)}\right)&=\sum_{h=1}^{H} \WEq\left(\mathcal{N}^{(h)}, \gamma^{(*)}\right).
\end{align*}
\end{theorem}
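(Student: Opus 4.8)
The plan is to prove both equalities by establishing matching upper and lower bounds, exploiting the fact that $\mathcal{N}^{(*)}$ is a minimum cut so that the inflow $\gamma^{(*)}$ equals the capacity of the bottleneck layer. The structural heart of the argument is that in a chain-of-parallel network at capacity $\gamma^{(*)}$, at every minimum-cut layer exactly $\gamma^{(*)}$ players must pass each stage with no room to spare, and this forces a near-decoupling of the layers in the long run: the outflow of layer $h$ feeds layer $h+1$, and since the total capacity of each layer is at least $\gamma^{(*)}$, no layer can permanently ``starve'' or ``flood'' a downstream layer without creating queues that never dissipate. First I would set up the feasibility/flow-conservation bookkeeping: a static min-cost flow of value $\gamma^{(*)}$ on $\mathcal{N}_{\ser}(H)$ decomposes, by the series structure, into a min-cost flow of value $\gamma^{(*)}$ on each $\mathcal{N}^{(h)}$ independently (the transit costs are additive across layers and the only coupling constraint is that the value is the same $\gamma^{(*)}$ in each layer), so Theorem~\ref{th:optimum} immediately gives $\Opt(\mathcal{N}_{\ser}(H),\gamma^{(*)}) = \sum_{h=1}^H \Opt(\mathcal{N}^{(h)},\gamma^{(*)})$ — the optimum statement is essentially a corollary of Theorem~\ref{th:optimum} together with separability of the static problem.

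For the worst-equilibrium statement the two inequalities require different ideas. For the upper bound $\WEq(\mathcal{N}_{\ser}(H),\gamma^{(*)}) \le \sum_h \WEq(\mathcal{N}^{(h)},\gamma^{(*)})$, I would argue that in any UFR equilibrium $\sigma$, the flow entering layer $h$ is eventually periodic with per-period average $\gamma^{(*)}$ (a queue on any edge of a minimum-cut layer can only grow, hence must stabilize, and the inductive structure propagates stabilization layer by layer), and that once the inflow into layer $h$ is ``as spread out as the dynamics of layers $1,\dots,h-1$ allow,'' the behavior inside layer $h$ is dominated by a parallel-network equilibrium on $\mathcal{N}^{(h)}$ at capacity $\gamma^{(*)}$; then Theorem~\ref{th:parallelunif} bounds the contribution of layer $h$ by $\WEq(\mathcal{N}^{(h)},\gamma^{(*)}) = \gamma^{(*)}\tau_{f_{\gamma^{(*)}}^{(h)}}$. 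The subtlety here — and what I expect to be the main obstacle — is that the inflow into layer $h$ from upstream layers need not be uniform (the excerpt explicitly flags that equilibrium flows in chain-of-parallel networks ``can be quite different and even aperiodic''), so one cannot just quote Theorem~\ref{th:parallelunif} verbatim; one needs a robustness version of it showing that any inflow of long-run rate $\gamma^{(*)}$ into a parallel network still yields long-run average latency at most $\gamma^{(*)}\tau_{f_{\gamma^{(*)}}}$ in the worst UFR equilibrium, using the same ``eventually all latencies equalize at the top transit cost'' mechanism.

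For the lower bound $\WEq(\mathcal{N}_{\ser}(H),\gamma^{(*)}) \ge \sum_h \WEq(\mathcal{N}^{(h)},\gamma^{(*)})$, I would construct a specific UFR equilibrium layer by layer that realizes, simultaneously in every layer, the worst parallel-network equilibrium of Theorem~\ref{th:parallelunif}: run the greedy construction of Lemma~\ref{le:existence} so that within each $\mathcal{N}^{(h)}$ the short edges get filled first and queues build up exactly as in the parallel case, checking that the UFR property is preserved across the series composition (a player who is routed uniformly-fastest through each layer is routed uniformly-fastest through the whole chain, since arrival times at the merge vertices $d_h = s_{h+1}$ compose additively). The only thing to verify carefully is that the worst-case behaviors in different layers are mutually compatible — that building queues in layer $h$ does not prevent building the worst queues in layer $h+1$ — which holds because, in the long run, each layer delivers its outflow at the constant rate $\gamma^{(*)}$, exactly the inflow that drives the worst parallel equilibrium downstream. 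Combining the two inequalities gives the theorem; the whole argument is an induction on $H$, with the series-composition step $\mathcal{N}_{\ser}(H) = \mathcal{N}_{\ser}(H-1) \oplus \mathcal{N}^{(H)}$ reducing to a two-layer lemma that I would state and prove first.
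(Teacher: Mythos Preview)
Your argument for the optimum and for the lower bound on $\WEq$ matches the paper's approach: separability of the static min-cost problem plus Theorem~\ref{th:optimum} for the optimum, and a layer-by-layer construction feeding the eventually-uniform outflow of each module into the worst parallel equilibrium of the next for the lower bound.

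There is, however, a genuine gap in your upper-bound argument for $\WEq$. You correctly spot that the inflow into layer $h$ need not be uniform, but you then propose to dispose of this via a ``robustness version'' of Theorem~\ref{th:parallelunif} asserting that any inflow of long-run rate $\gamma^{(*)}$ into a parallel network yields worst-UFR-equilibrium average latency at most $\gamma^{(*)}\tau_{f_{\gamma^{(*)}}}$. That assertion is false: Proposition~\ref{pr:parallelperiod} shows that a periodic inflow $\boldsymbol{\delta}$ at average rate $\gamma$ produces worst equilibrium cost $K\gamma\max_e\tau_e + D(\boldsymbol{\delta})$, which strictly exceeds the uniform-inflow value whenever $D(\boldsymbol{\delta})>0$. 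So a per-layer bound of the kind you want simply does not hold; a burst of players entering layer $h+1$ \emph{does} create extra queuing there beyond $\gamma^{(*)}\tau_{f_{\gamma^{(*)}}^{(h+1)}}$.

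The paper's fix is a cross-layer \emph{compensation} argument rather than a per-layer bound. A module $\mathcal{N}^{(h)}$ can only produce an outflow spike above $\gamma^{(*)}$ if players from two distinct generations exit simultaneously; but then the later generation must have traversed $\mathcal{N}^{(h)}$ at least one unit faster than its predecessor. The resulting saving in layer $h$ exactly offsets the extra unit of queuing that this spike inflicts on layer $h+1$, and the same bookkeeping propagates over successive stages. The UFR property is what guarantees that within each module an equilibrium is played (so per-module costs never exceed the maximum transit cost of that module in the first place) and rules out pathological overtaking that would break the offset. Replace your proposed robustness lemma with this offset argument and the upper bound goes through.
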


The insights are as follows. First, for optimal strategies the modular structure of the graph implies that each subnetwork can be analyzed separately in such a way that no queues are created. Second, modularity implies that the worst equilibrium latency has to be at least the sum of the worst equilibrium latencies of each  subnetwork. The uniformly fastest route property guarantees that the latency cannot be worse.

The above result may  seem straightforward. An important point to consider is that, although the number of players departing from the source is uniform over time, the number of players who exit a module may actually be non-uniform (periodic, or even aperiodic), at equilibrium. The following example illustrates this phenomenon.

\begin{example}\label{ex:chainofparallel}
Consider the chain-of-parallel network given in Figure~\ref{fi:chainofparallel}, where the capacity of each edge is 1 and the transit costs are indicated on the edges. The capacity $\gamma^{(*)}$ of the network is $2$. 

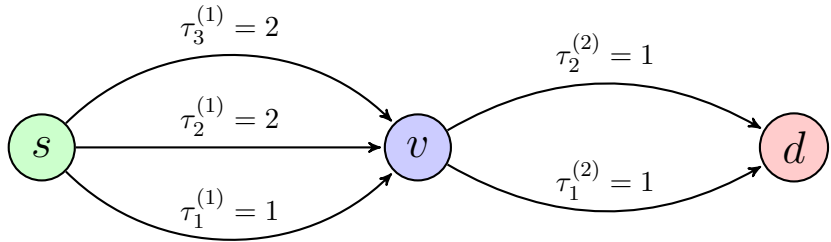
\begin{figure}[h]
\centering
\begin{tikzpicture}[->,>=stealth',shorten >=1pt,auto,node distance=5cm,
  thick,main node/.style={circle,fill=blue!20,draw,minimum size=25pt,font=\sffamily\Large\bfseries},source node/.style={circle,fill=green!20,draw,minimum size=25pt,font=\sffamily\Large\bfseries},dest node/.style={circle,fill=red!20,draw,minimum size=25pt,font=\sffamily\Large\bfseries}]

  \node[source node] (1) {$s$};
  \node[main node] (2) [right of=1] {$v$};
  \node[dest node] (3) [right of=2] {$d$};
  
  \path[every node/.style={font=\sffamily\small}]
    (1) edge [bend right = 45] node[above] {$\tau_{1}^{(1)}=1$} (2)
        edge node [above] {$\tau_{2}^{(1)}=2$} (2)
        edge [bend left = 45] node[above] {$\tau_{3}^{(1)}=2$} (2);

  \path[every node/.style={font=\sffamily\small}]
    (2) edge [bend right = 30] node[above] {$\tau_{1}^{(2)}=1$} (3)
        edge [bend left = 30] node[above] {$\tau_{2}^{(2)}=1$} (3);

\end{tikzpicture}
~\vspace{0cm} 
\caption{\label{fi:chainofparallel} Chain-of-parallel network.}
\end{figure}

Equilibria of this game are described in detail in the online Supplementary Material.

Consider the following strategy profile.
\begin{equation}\label{eq:1stequilibrium}
\sigma_{it}^{\Eq}=
\begin{cases}
e_{1}^{(1)} e_{1}^{(2)} &\text{ for } [it]=[11],\\
e_{1}^{(1)} e_{2}^{(2)} &\text{ for } [it]=[21],\\
e_{1}^{(1)} e_{1}^{(2)} &\text{ for } [it]=[1t] \text{ and } t\geq2,\\
e_{2}^{(1)} e_{2}^{(2)} &\text{ for } [it]=[2t] \text{ and } t\geq2.
\end{cases}
\end{equation}

It is easy to check that this is an equilibrium. The first player $[11]$ takes the fastest route $e_{1}^{(1)} e_{1}^{(2)}$. The second player $[21]$ cannot pay less than a total cost of 3. She does so by taking $e_{1}^{(1)}$ first and queuing after $[11]$ (a cost of 2), then taking $e_{2}^{(2)}$. This choice of the first generation leaves a queue of size 1 on edge  $e_{1}^{(1)} $ for the next generation. The next two players have to pay at least 3 each. They do so by choosing $e_{1}^{(1)} e_{1}^{(2)}$ and $e_{2}^{(1)} e_{2}^{(2)}$. The queue on edge $e_{1}^{(1)} $ is thus re-created for the next generation.

The average total latency of this equilibrium is 6. Due to the indifferences, the  same average total latency can be achieved with the following periodic equilibrium strategy profile.

\begin{equation}\label{eq:2ndequilibrium}
\tilde{\sigma}_{it}^{\Eq}=
\begin{cases}
e_{1}^{(1)} e_{1}^{(2)} &\text{ for } [it]=[1t] \text{ and } t \text{  odd},\\
e_{1}^{(1)} e_{2}^{(2)} &\text{ for } [it]=[2t] \text{ and } t \text{  odd},\\
e_{2}^{(1)} e_{1}^{(2)} &\text{ for } [it]=[1t] \text{ and } t \text{  even},\\
e_{3}^{(1)} e_{2}^{(2)} &\text{ for } [it]=[2t] \text{ and } t \text{  even}.
\end{cases}
\end{equation}

Under this profile, the second player of each odd generation creates a queue on $e_{1}^{(1)}$.  As both players of the even generation take a long route, none of these two players waits in a queue and thus the queue on $e_{1}^{(1)}$ disappears. Since the first player in the following odd generation uses the fast route $e_{1}^{(1)}$  again, she arrives at $v$ at the same time as the previous two players. Therefore, she waits in the queue on $e_{1}^{(2)}$. So, the first player of each odd generation waits in the queue on  $e_{1}^{(2)}$, and the second player waits on $e_{1}^{(1)}$ (except for the very first player). Therefore, the strategy profile $\tilde{\sigma}^{\Eq}$ yields an average latency of $6$, which is the worst equilibrium latency. However, the queues vary periodically with time (one can even exploit the indifferences to construct a more complex equilibrium where queues vary with time in an aperiodic manner).
\end{example}

Even though such periodicities can occur in an UFR equilibrium, we prove that the worst equilibrium cost can always be obtained with stationary strategies.

\section{Efficiency of equilibria and  complex topologies}\label{se:anarchy}

Efficiency of equilibria is a central issue in the theory of congestion games. Several efficiency measures have been proposed, among them, the price of anarchy, the price of stability, and the Braess ratio. Here we use these measures to show how inefficient equilibria can be for dynamic congestion games, and we look at the possible sources of inefficiencies.  We first look at parallel networks. Then, we consider more complex topologies. In this section, we consider a uniform inflow in heavy traffic, i.e., $\delta=\gamma$.

\begin{definition}
Given a game $\Gamma (\mathcal{N}, \gamma)$,
\begin{enumerate}[(a)]
\item
its \emph{price of anarchy} is defined as
\[
\PoA(\mathcal{N}, \gamma):=\frac{\WEq(\mathcal{N}, \gamma)}{\Opt(\mathcal{N}, \gamma)},
\]

\item
its \emph{price of stability}  is defined as
\[
\PoS(\mathcal{N}, \gamma):=\frac{\BEq(\mathcal{N}, \gamma)}{\Opt(\mathcal{N}, \gamma)},
\]

\item
its \emph{Braess ratio} $\BR(\mathcal{N}, \gamma)$ is defined as  the largest factor by which the removal of one or more edges can improve
the latency of traffic in an equilibrium flow.

\end{enumerate}
\end{definition}

\subsection{Parallel networks}

A direct consequence of Theorem \ref{th:parallelunif} is a computation of the price of anarchy for parallel networks.
\begin{corollary}\label{co:poaunif}
Consider the game $\Gamma (\mathcal{N}, \gamma)$, where $\mathcal{N}$ is a parallel network. Then
\[
\PoA(\mathcal{N}, \gamma)\le\frac{\max_{e}\tau_{e}}{\min_{e}\tau_{e}}.
\]
\end{corollary}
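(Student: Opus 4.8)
The plan is to apply Theorem \ref{th:parallelunif} directly with $\delta = \gamma$. When $\delta = \gamma$, the whole network is at capacity, so the subnetwork $\mathcal{N}_\delta$ coincides with all of $\mathcal{N}$; in particular $f_\delta$ is the last edge $e$ in the order $\prec$, so that $\tau_{f_\delta} = \max_e \tau_e$. First I would invoke the formula $\WEq(\mathcal{N}, \gamma) = \gamma\,\tau_{f_\gamma} = \gamma \max_e \tau_e$. Next I would lower-bound $\Opt(\mathcal{N}, \gamma)$: by Theorem \ref{th:parallelunif}, $\Opt(\mathcal{N}, \gamma) = \sum_{e \in f_\gamma^{\prec}} \gamma_e \tau_e + (\gamma - \sum_{e \in f_\gamma^{\prec}} \gamma_e)\tau_{f_\gamma}$, which is a convex combination-type sum of the $\tau_e$ weighted by the capacities (the weights $\gamma_e$ for $e \prec f_\gamma$ together with the residual weight on $f_\gamma$ sum to $\gamma$). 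Since each $\tau_e \ge \min_e \tau_e$, this sum is at least $\gamma \min_e \tau_e$.

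Combining the two estimates gives
\[
\PoA(\mathcal{N}, \gamma) = \frac{\WEq(\mathcal{N}, \gamma)}{\Opt(\mathcal{N}, \gamma)} \le \frac{\gamma \max_e \tau_e}{\gamma \min_e \tau_e} = \frac{\max_e \tau_e}{\min_e \tau_e},
\]
which is the claimed bound. One should note that $\Opt(\mathcal{N},\gamma) > 0$ since $\tau_e \in \mathbb{N}$ and, to have a nontrivial network, at least one edge has positive transit cost — or, if all $\tau_e = 0$, then $\min_e \tau_e = \max_e \tau_e = 0$ and both sides are $\infty$ (or the statement is read as a trivial identity), so the inequality holds vacuously; I would add a one-line remark handling this degenerate case.

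There is essentially no obstacle here: the corollary is a mechanical consequence of the closed-form expressions in Theorem \ref{th:parallelunif}. The only point requiring a moment's care is verifying that the coefficients in the $\Opt$ formula indeed sum to $\gamma$ (so that the lower bound $\gamma \min_e \tau_e$ is legitimate), which is immediate from the defining inequality $\sum_{e \in f_\gamma^{\prec}} \gamma_e < \gamma \le \sum_{e \in f_\gamma^{\precsim}} \gamma_e$ together with the fact that at capacity $\sum_{e\in E}\gamma_e = \gamma$ forces $f_\gamma$ to be the $\prec$-maximal edge and the residual capacity on it to be exactly $\gamma - \sum_{e \in f_\gamma^{\prec}} \gamma_e$.
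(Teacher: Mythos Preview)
Your proposal is correct and is exactly the ``straightforward'' argument the paper alludes to: apply Theorem~\ref{th:parallelunif} at $\delta=\gamma$, so that $f_\gamma$ is the $\prec$-maximal edge, $\WEq=\gamma\max_e\tau_e$, and $\Opt=\sum_{e\in E}\gamma_e\tau_e\ge\gamma\min_e\tau_e$. The paper does not spell out these two lines, merely stating that the inequality is immediate, so your write-up is if anything more explicit than the original.
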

The inequality is straightforward and shows that the price of anarchy admits an upper bound the does not depend on capacities but only on the relative lengths of edges. To see that the bound is tight, consider a parallel network with two parallel edges such that the first is short and wide, $\tau_{1}=1$, $\gamma_{1}=N^{p}$, where $p\in\mathbb{N}_+$, and the second is long and narrow, $\tau_{2}=N$, $\gamma_{2}=1$.   The number of players per stage is the capacity of the network $N^{p}+1$.  This instance is similar to the classical example of Pigou where in equilibrium, the congestion on the fast edge creates a latency which matches the latency of the slow edge. The price of anarchy for this network is $(N^{p}+1)/(N^{p-1}+1)$, which is roughly $N= \max_{e}\tau_{e}/\min_{e}\tau_{e}$ for $p$ sufficiently large.

\subsection{Series-parallel networks}\label{sususe:seriesparallel}

Let $\mathcal{N}_{1}, \mathcal{N}_{2}$ be two networks with respective source-destination pairs $(s_{1},d_{1})$, 
$(s_{2},d_{2})$. The \emph{parallel composition} of $\mathcal{N}_{1}$ and $\mathcal{N}_{2}$ is the network $\mathcal{N}=\mathcal{N}_{1}\vee\mathcal{N}_{2}$ where the sources (resp. destinations) of $\mathcal{N}_{1}$  and $\mathcal{N}_{2}$ are merged together and the set of edges is the disjoint union of $E_{1}$ and $E_{2}$. A \emph{series-parallel network} is a network which can be obtained by iterated parallel and series compositions of networks, starting with a network containing only one edge.

The well-known paradox due to \citet{Bra:U1968,Bra:TS2005}  arises when adding a new edge to a network  increases the worst equilibrium latency. In static games, this paradox can only occur if the networks contains a Wheatstone subnetwork (see Figure~\ref{fi:Wheatstone}), or in other words is not series-parallel \citep{Mil:GEB2006}. \citet{MacLarSte:TCS2013} noticed that in non-atomic dynamic congestion games, Braess's paradox can arise in networks that are series-parallel.

In dynamic congestion games a different sort of Braess's paradox can arise: the presence of initial queues in the network, or increasing the transit costs of an edge may decrease the worst equilibrium latency. The following example  is an adjustment of one of the networks considered in \citet{MacLarSte:TCS2013}  .

\begin{example}\label{ex:strictineq}
Consider the series-parallel network in Figure~\ref{fi:seriesparallel} where the  associated free-flow transit costs and capacities are given. The network has two minimum cuts $\{e_{1},e_{4}\}$ and $\{e_{2},e_{3}, e_{4}\}$ with a capacity of 3, and each edge is part of one cut. Deleting one edge would cause the total cost to explode, thus Braess's paradox cannot happen in its usual form.

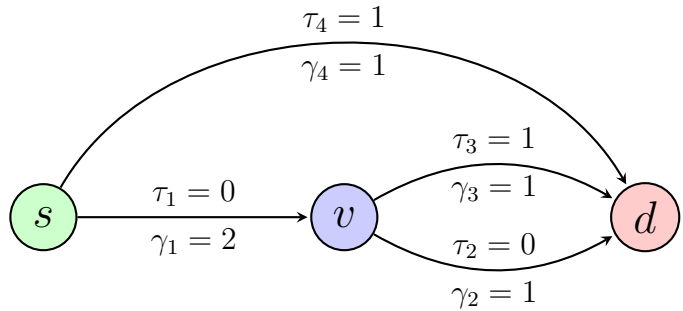
\begin{figure}[H]
\centering
\begin{tikzpicture}[->,>=stealth',shorten >=1pt,auto,node distance=4.5cm,
  thick,main node/.style={circle,fill=blue!20,draw,minimum size=25pt,font=\sffamily\Large\bfseries},source node/.style={circle,fill=green!20,draw,minimum size=25pt,font=\sffamily\Large\bfseries},dest node/.style={circle,fill=red!20,draw,minimum size=25pt,font=\sffamily\Large\bfseries}]  
  
\node[source node] (source) at (0,0) {$s$};
\node[main node] (main) at (4,0) {$v$};
\node[dest node] (dest) at (8,0) {$d$};
\draw[-stealth] (source)[out=60,in=120] to node[above]{$\tau_{4}=1$} node[below]{$\gamma_{4}=1$} (dest);
\draw[-stealth] (source) to node[above]{$\tau_{1}=0$} node[below]{$\gamma_{1}=2$} (main);
\draw[-stealth] (main)[out=30,in=150] to node[above]{$\tau_{3}=1$} node[below]{$\gamma_{3}=1$} (dest);
\draw[-stealth] (main)[out=-30,in=-150] to node[above]{$\tau_{2}=0$} node[below]{$\gamma_{2}=1$} (dest);
\end{tikzpicture}
\caption{Series-parallel network where each edge is part of a minimum cut.}
\label{fi:seriesparallel}
\end{figure}

Consider the following equilibrium strategy.
\begin{equation}\label{eq:equilMacko}
\sigma_{it}^{\Eq}=
\begin{cases}
e_{1} e_{2} &\text{ for } [it]=[11],\\
e_{1} e_{3} &\text{ for } [it]=[21],\\
e_{1} e_{2} &\text{ for } [it]=[31],\\
e_{1} e_{2} &\text{ for } [it]=[1t], t\geq2,\\
e_{4} &\text{ for } [it]=[2t], t\geq2,\\
e_{1} e_{3} &\text{ for } [it]=[3t], t\geq2.
\end{cases}
\end{equation}

To verify that this is indeed an equilibrium, the reader is referred to the online Supplementary Material. The strategy $\sigma^{\Eq}$ yields a latency of 4, which means that the last player of each generation pays more than the maximum free flow transit costs. 

If we have an initial queue of length one on $e_{2}$, or if we increase the transit cost of $e_{2}$ from $0$ to $1$, then in equilibrium one player of each generation chooses $e_{1}e_{2}$, one player chooses $e_{1}e_{3}$ and one player chooses $e_{4}$. This equilibrium is efficient and the latency is $1$ for every player: a paradox! Recall that this paradox occurs in a network for which no Braess's paradox is possible in its classical form. Related paradoxical phenomena have been studied by \citet{Dag:TS1998}, who showed that, in a model with physical queues, decreasing the capacity of an edge can improve the equilibrium flow of a network. Our example shows that this paradox can occur also with point queues that do not spill over preceding edges.

Notice that Figure~\ref{fi:seriesparallel} corresponds to a series-parallel network considered in \citet{MacLarSte:TCS2013}, where the capacity of edge $e_{3}$ is decreased. So this form of Braess' paradox may occur in all series-parallel networks considered in their work.

\end{example}

\subsection{Wheatstone networks}\label{sususe:braessnetwork}

One open question is the impact of the multiplicity of equilibria. The following example will show that for the Wheatstone network, the worst equilibrium has latency costs that are three times higher than the latency costs of the best equilibrium.

\begin{example}\label{ex:Wheatstone}
Consider the Wheatstone network in Figure~\ref{fi:Wheatstone} with associated free-flow transit costs and capacity equal to $1$ for all edges. The capacity of the network is $2$.
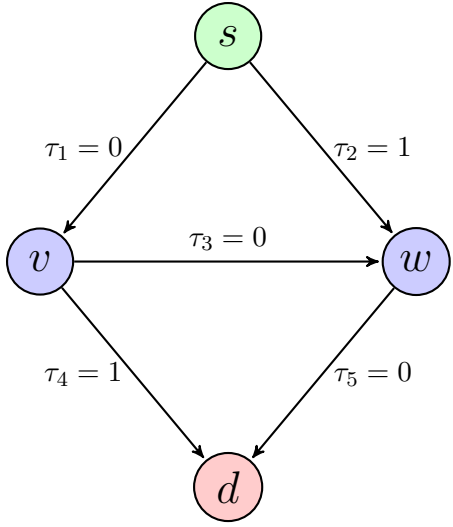
\begin{figure}[h]
\centering
\begin{tikzpicture}[->,>=stealth',shorten >=1pt,auto,node distance=4.5cm,
  thick,main node/.style={circle,fill=blue!20,draw,minimum size=25pt,font=\sffamily\Large\bfseries},source node/.style={circle,fill=green!20,draw,minimum size=25pt,font=\sffamily\Large\bfseries},dest node/.style={circle,fill=red!20,draw,minimum size=25pt,font=\sffamily\Large\bfseries}]  

  \node[source node] (1)  at (0,3) {$s$};
  \node[main node] (2) at (-2.5,0) {$v$};
  \node[dest node] (3) at (0,-3) {$d$};
  \node[main node] (4) at (2.5,0) {$w$};

  \path[every node/.style={font=\sffamily\small}]
    (1) edge node [right] {$\tau_{2}=1$} (4)
        edge node[left,color=black] {$\tau_{1}=0$}  (2)
    (2) edge node [left] {$\tau_{4}=1$} (3)
        edge node {$\tau_{3}=0$} (4)
    (4) edge node [right,color=black] {$\tau_{5}=0$}  (3);

\end{tikzpicture}
~\vspace{0cm} \caption{\label{fi:Wheatstone} Wheatstone network.}
\end{figure}

Consider the following UFR equilibrium strategy 
\begin{equation}\label{eq:Wheatstoneequil}
\sigma_{it}^{\Eq}=\begin{cases}
e_{1} e_{3} e_{5} &\text{ for } [it]=[i1],i=1,2,\\
e_{1} e_{3} e_{5} &\text{ for } [it]=[12],\\
e_{2} e_{5} &\text{ for } [it]=[22],\\
e_{1} e_{3} e_{5} &\text{ for } [it]=[13],\\
e_{1} e_{4} &\text{ for } [it]=[23],\\
e_{2} e_{5} &\text{ for } [it]=[14],\\
e_{1} e_{3} e_{5} &\text{ for } [it]=[24],\\
e_{1} e_{4} &\text{ for } [it]=[1t],t\geq5,\\
e_{2} e_{5} &\text{ for } [it]=[2t],t\geq5.
\end{cases}
\end{equation}
The strategy $\sigma^{\Eq}$ yields a latency of 6. If we remove edge $e_{3}$ from the network, then the worst equilibrium latency improves by a factor of three (from $6$ to $2$) and is equal to the optimum latency of the network. This appears even more paradoxical than in a static model, since the edge $e_{3}$ is not used in equilibrium in steady state.
\end{example}

We can generalize the results of the Wheatstone network by considering  Braess's graphs as defined by \citet{Rou:JCSS2006}. Using these graphs, we can construct an example  where multiple equilibria exist, some of them efficient and some others unboundedly bad. Details can be found in the online Supplementary Material.

\begin{proposition}\label{pr:pospoa}
For every even integer $n$, there exists a network $\mathcal{N}=(\mathcal{G},(\tau_{e})_{e\in E},(\gamma_{e})_{e\in E})$ in which $\mathcal{G}$ has $n$ vertices such that
\begin{align*}
\PoA(\mathcal{N},\gamma)&=\BR(\mathcal{N},\gamma)=n-1,\\
\PoS(\mathcal{N},\gamma)&=1.
\end{align*}
\end{proposition}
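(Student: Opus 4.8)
The plan is to build $\mathcal{N}$ by stacking together many copies of a Braess-type gadget in series, in the spirit of the iterated construction of \citet{Rou:JCSS2006}, and then to combine the Wheatstone analysis of Example~\ref{ex:Wheatstone} with the modularity result of Theorem~\ref{th:chain}. Concretely, I would let each "layer" be (a scaled version of) the Wheatstone network of Figure~\ref{fi:Wheatstone}, with all capacities equal to $1$, so that $\gamma^{(*)}=2$ and each layer has a fixed number of internal vertices; chaining $H$ such layers in series produces a graph with $n=\Theta(H)$ vertices, and one tunes the per-layer vertex count and the $H$ so that exactly $n-1$ appears as the ratio for every even $n$ (this is the bookkeeping that fixes the constants; a layer on a suitable constant number of vertices plus an appropriate choice of transit costs along the layers will do it, exactly as in the Braess-graph construction). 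The key structural point is that a series composition of Wheatstone layers is itself a chain-of-parallel network once the "diagonal" edge $e_{3}$ of each layer is removed, so Theorems~\ref{th:chain} and~\ref{th:parallelunif} apply to the reduced network.

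The three claims are then handled separately. For $\PoS(\mathcal{N},\gamma)=1$: exhibit the stationary strategy profile in which, in each layer, the $\gamma^{(*)}=2$ players split across the two "outer" $s$–$v$–$d$ paths (the analogue of $e_1e_4$ and $e_2e_5$ in Example~\ref{ex:Wheatstone}), never using the diagonal; by Theorem~\ref{th:optimum} (or directly, since no capacity constraint is violated in any layer) this creates no queues, it is a UFR equilibrium, and its asymptotic average latency equals $\Opt$, hence $\BEq=\Opt$. For $\PoA(\mathcal{N},\gamma)=n-1$: construct, layer by layer, the bad UFR equilibrium — the per-layer analogue of the strategy~\eqref{eq:Wheatstoneequil}, in which early generations all rush the fast diagonal path, build a permanent queue, and in steady state every player pays the full transit cost of the layer including the diagonal detour. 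Here one invokes $\WEq(\mathcal{N}_{\ser}(H),\gamma^{(*)})=\sum_h \WEq(\mathcal{N}^{(h)},\gamma^{(*)})$ from Theorem~\ref{th:chain}: the per-layer worst equilibrium latency is a factor $n-1$ times the per-layer optimum (the single-layer statement is exactly what Example~\ref{ex:Wheatstone} proves for the factor $3$ case, and the Braess-graph layers of \citet{Rou:JCSS2006} push this to $n-1$), and summing preserves the ratio. For $\BR(\mathcal{N},\gamma)=n-1$: removing the diagonal edge from every layer turns $\mathcal{N}$ into a chain-of-parallel network on which, by Theorem~\ref{th:parallelunif}, the worst equilibrium already equals the optimum; hence the worst equilibrium latency drops by the factor $n-1$ computed above, giving $\BR\ge n-1$, and the matching upper bound $\BR\le n-1$ follows from the same per-layer bound on how much edge removal can help (no layer's latency can drop below its optimum, and the optimum-to-worst ratio per layer is at most $n-1$).

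The main obstacle I anticipate is not conceptual but combinatorial: verifying that the "bad" strategy profile is genuinely a UFR equilibrium layer by layer, and in particular checking that the non-uniform (periodic or aperiodic) outflow produced by one Wheatstone layer — exactly the phenomenon flagged in Example~\ref{ex:chainofparallel} — still lets the next layer sustain its own bad equilibrium. This is where Theorem~\ref{th:chain} does the heavy lifting: because it equates the worst chain equilibrium with the sum of worst per-layer equilibria regardless of the internal timing, it suffices to establish the single-layer bad equilibrium against a worst-case (uniform-rate) inflow and then appeal to the theorem, rather than track the flow through all $H$ layers explicitly. I would therefore organize the proof as: (i) describe the gadget and the value of $n$ in terms of $H$; (ii) quote the single-layer computations (optimum, and the factor-$(n-1)$ worst equilibrium and Braess ratio) from the Braess-graph construction, adapting Example~\ref{ex:Wheatstone}'s argument; (iii) lift everything to the chain via Theorems~\ref{th:chain} and~\ref{th:parallelunif}; (iv) read off the three equalities. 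The details of step (ii) — the explicit transient strategy and the priority-order queue accounting — are the part that needs care, and I would relegate them, as the authors do elsewhere, to the Supplementary Material.
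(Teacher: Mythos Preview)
Your proposal has two genuine gaps.

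First, Theorem~\ref{th:chain} is stated and proved only for chain-of-parallel networks, i.e., series compositions in which every module $\mathcal{N}^{(h)}$ is itself a parallel network. A Wheatstone layer (or any Braess-type gadget retaining its diagonal edge) is not parallel, so you cannot invoke the identity $\WEq(\mathcal{N}_{\ser}(H),\gamma^{(*)})=\sum_h \WEq(\mathcal{N}^{(h)},\gamma^{(*)})$ to compute the worst equilibrium latency of your chain. You correctly observe that after deleting the diagonals the chain becomes chain-of-parallel, but that only helps for the lower bound on $\BR$, not for $\PoA$ on the original network. The ``heavy lifting'' you assign to Theorem~\ref{th:chain} is therefore unavailable precisely where you need it.

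Second, even if the sum formula held, chaining does not amplify the ratio. If every layer satisfies $\WEq^{(h)}/\Opt^{(h)}=3$, then $\sum_h\WEq^{(h)}\big/\sum_h\Opt^{(h)}=3$ as well, regardless of $H$. Stacking Wheatstone layers in series can thus never yield $\PoA=n-1$; at best it gives $\PoA=3$ on arbitrarily many vertices. Your parenthetical remark that ``the Braess-graph layers of \citet{Rou:JCSS2006} push this to $n-1$'' is in fact the whole construction, and once you use a single Braess $k$-th graph the chain is irrelevant. This is exactly what the paper does: it takes one Braess graph on $n=2k+2$ vertices, identifies the $k+1$ length-$1$ paths $Q_i$ and the $k$ length-$0$ paths $P_i$, and then argues directly on this single network. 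It exhibits a best UFR equilibrium in which the first $k$ generations use the paths $P_i$ and all later generations use the $Q_i$, with no queue ever created, giving $\BEq=\Opt=k+1$; it describes in detail a worst UFR equilibrium whose transient (of order $2k^2$ stages) builds queues until every $Q_i$ has latency $2k+1$, giving $\WEq=(k+1)(2k+1)$; and it notes that deleting the edges $(v_i,w_i)$ leaves a genuinely parallel network with $\WEq=k+1$, whence $\BR=2k+1=n-1$. No modularity theorem is invoked anywhere.

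A smaller but related point: your proposed $\PoS$ profile, in which players ``never use the diagonal'', is not an equilibrium. In an empty network the very first player strictly prefers the cost-$0$ route $e_1e_3e_5$ to either cost-$1$ outer route, so she deviates. The paper's best equilibrium does let the early generations use the fast paths $P_i$ during a short transient and only afterwards settles on the $Q_i$; the delicate part is arranging this transient so that no queue is created.
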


\section{Seasonal inflows on parallel network}\label{se:seasonal}

In this section, we consider an inflow sequence which is a periodic function of time: there exists an integer $K$ such that $\delta_{t+K}=\delta_t$ for all $t$. Considering parallel networks once more, we provide a characterization of optimum and equilibrium costs with periodic inflow. 

From the max-flow min-cut theorem of  \citet{ForFul:OR1958}, if the average number of players $(\sum_{k=1}^K\delta_{k})/K$ exceeds the capacity $\gamma$ of the network, then the lengths of queues on the edges of the minimum cut diverge to infinity and thus the long-run average total cost is infinite under any strategy profile. We assume from now on that $\gamma=\frac{1}{K}\sum_{k=1}^K\delta_{k}$.

Let $\mathbb{N}_{K}(\gamma)$ be the set of $K$-dimensional integer vectors $\boldsymbol{\delta}=(\delta_{1},\dots, \delta_{K})$ such that $\sum_{k=1}^{K} \delta_{k} = \gamma K$. A $K$-periodic inflow sequence will be identified with a vector $\boldsymbol{\delta}\in \mathbb{N}_{K}(\gamma)$.   
We denote $\Gamma(\mathcal{N}, K, \boldsymbol{\delta})$ the game with $K$-periodic inflow sequence given by $\boldsymbol{\delta}$.

For each integer $p$, the total latency over the period $\{pK+1,\dots, (p+1)K\}$ is
\[
L_{p}(\sigma)=\sum_{t=pK+1}^{(p+1)K}\ell_{t}(\sigma).
\]

The \emph{average total latency} over $P$ periods is
\[
\widetilde{L}_{P}(\sigma)=\frac1P\sum_{p=1}^P L_{p}(\sigma).
\]
If $\lim_{P \to \infty}\widetilde{L}_{P}(\sigma)$ exists, then it
is called \emph{asymptotic average total latency} for the strategy $\sigma$.\\

We want to capture how optimum and equilibrium costs are affected by seasonality. The main point is that at peak hours, the inflow exceeds the capacity, and therefore queues build up, even if all the flow is controlled by the planner. This is exemplified as follows.

\begin{example}\label{ex:period600}
Consider a parallel network having two edges $e_{1}, e_{2}$ each connecting the source to the destination. We assume  $\gamma_{e_{1}}=\gamma_{e_{2}}=1$, $\tau_{e_{1}}=1$ and  $\tau_{e_{2}}=2$. The capacity of the network is thus $2$. 

Consider the $3$-periodic sequence of departures  $\boldsymbol{\delta}= (6,0,0)$. Then, the following strategy profile that allocates three player per period to each edge is optimal. 
\[
\sigma^{\Opt}_{it}=
\begin{cases}
e_{1} & \text{for } i \text{ odd,}\\
e_{2} & \text{for } i \text{ even.}
\end{cases}
\]
To see it, consider the  first two players and send the first one to $e_{1}$, the second one to $e_{2}$. Then, the next two players have to queue at least for one period, so it is as if they had departed one period later and it is optimal to send one of them to $e_{1}$ and the other to $e_{2}$. Now, the remaining two players have to queue at least two periods, so it is as if they had departed two periods later, and it is again optimal to  send one of them to $e_{1}$ and the other to $e_{2}$.

The total latency over a period of time $\{1,2,3\}$ (modulo 3) is $15$, that is, $3$ times the single-period optimal total latency that we would have if departures were uniform $(2,2,2)$ plus the added cost of $6$ induced by the waiting times: two players pay an extra cost of $1$ and two players pay an extra cost of $2$.

Consider now the following equilibrium strategy $\sigma^{\Eq}$. For $t = 1$ we let
\[
\sigma^{\Eq}_{it}=
\begin{cases}
e_{1} & \text{for $i=1$ or $i$ even,} \\
e_{2} & \text{for $i>2$, odd,}
\end{cases}
\]
therefore the latencies for the first six players are
\begin{align*}
\ell_{11}(\sigma^{\Eq})&=1, \ \ell_{21}(\sigma^{\Eq})=2, \ \ell_{31}(\sigma^{\Eq})=2, \\ \ell_{41}(\sigma^{\Eq})&=3,\ \ell_{51}(\sigma^{\Eq})=3, \ \ell_{61}(\sigma^{\Eq})=4.
\end{align*}
For $t \ge 4$ and $t=1 \mod 3$, 
\[
\sigma^{\Eq}_{it}=
\begin{cases}
e_{1} & \text{for $i$ odd,}\\
e_{2} & \text{for $i$ even,}
\end{cases}
\]
and
\begin{align*}
\ell_{1t}(\sigma^{\Eq})&=2, \ \ell_{2t}(\sigma^{\Eq})=2, \ \ell_{3t}(\sigma^{\Eq})=3, \\ \ell_{4t}(\sigma^{\Eq})&=3,\ \ell_{5t}(\sigma^{\Eq})=4, \ \ell_{6t}(\sigma^{\Eq})=4.
\end{align*}
It is easy to check that this is an equilibrium. It is constructed in such a way that each player chooses $e_1$ when he is indifferent between the two edges. This choice makes it the worst equilibrium. In the steady state, the total equilibrium payoff over a $3$-period is $18$, that is $3$ times the single-period equilibrium total latency when departures are uniform $(2,2,2)$ plus the added cost of $6$ induced by the waiting times.
\end{example}

We define now a quantity that measures the non-uniformity of the inflow.
Define the following binary relation on $\mathbb{N}_{K}(\gamma)$.

\begin{definition} For any two elements $\boldsymbol{\delta}, \boldsymbol{\delta}' \in \mathbb{N}_{K}(\gamma)$, we say that \emph{$\boldsymbol{\delta}'$ is obtained from $\boldsymbol{\delta}$ by an elementary operation} (denote it $\boldsymbol{\delta}\to \boldsymbol{\delta}'$), if there exists a stage $t$  such that 
\begin{align*}
\delta_{t} &> \gamma,\\
\delta'_{t}&=\delta_{t}-1, \\
\delta'_{t+1}&=\delta_{t+1}+1, \\
\delta'_{k}&=\delta_{k} \text{ for $k\notin\{t,t+1\}$},
\end{align*}
where indices are considered modulo $K$. 
\end{definition}

Denote $\boldsymbol{\gamma}_{K} = (\gamma, \dots, \gamma) \in \mathbb{N}_{K}(\gamma)$ the uniform vector. Consider the directed graph representing the above binary relation $\to$ and denote $D(\boldsymbol{\delta})$ the distance in this graph from $\boldsymbol{\delta}$ to  $\boldsymbol{\gamma}_{K}$. An  elementary operation $\boldsymbol{\delta}\to \boldsymbol{\delta}'$ consists in moving one unit from a slot where the capacity is over-filled, to the next slot. Note that indices are considered modulo $K$, so this definition is invariant under circular permutation.  Any $\boldsymbol{\delta}\neq \boldsymbol{\gamma}_{K}$ has at least one successor in the graph and $\boldsymbol{\gamma}_{K}$ is the only element with no successor. Then, $D(\boldsymbol{\delta})$ is the minimum number of elementary operations needed to transform $\boldsymbol{\delta}$ into $\boldsymbol{\gamma}_{K}$. See Figure~\ref{fi:600to222}.

\begin{figure}[h]
  \tikzstyle{vertex}=[ball color=black!10,circle,minimum size=17pt,inner sep=0pt]
  \tikzstyle{redvertex}=[ball color=red,circle,minimum size=17pt,inner sep=0pt]
  \tikzstyle{emptyvertex}=[circle,draw=red!75,minimum size=17pt,inner sep=0pt]  
  \tikzstyle{vertextime}=[minimum size=17pt,inner sep=0pt]
  
\centering
\begin{tikzpicture}[shorten >=1pt,->]

  \foreach \name/\x in {1/1, 2/2, 3/3}
    \node[vertextime] (T-\name) at (\x, 0) {$\name$};
  
  \foreach \name/\x in {1/1, 2/2, 3/3, 4/4, 5/5, 6/6}
    \node[vertex] (G-\name) at (1, \x) {};
    
\end{tikzpicture}
\qquad
\begin{tikzpicture}[shorten >=1pt,->]

  \foreach \name/\x in {1/1, 2/2, 3/3}
    \node[vertextime] (T-\name) at (\x, 0) {$\name$};
  
  \foreach \name/\x in {1/1, 2/2, 3/3, 4/4, 5/5}
    \node[vertex] (G-\name) at (1, \x) {};
  \foreach \name/\x in {6/6}
    \node[emptyvertex] (G-\name) at (1, \x) {};    
  \foreach \name/\x in {1/1}
    \node[redvertex] (F-\name) at (2, \x) {};  
  \draw (G-6) .. controls +(-30:1cm) .. (F-1);    
\end{tikzpicture}
\qquad
\begin{tikzpicture}[shorten >=1pt,->]

  \foreach \name/\x in {1/1, 2/2, 3/3}
    \node[vertextime] (T-\name) at (\x, 0) {$\name$};
  
  \foreach \name/\x in {1/1, 2/2, 3/3, 4/4}
    \node[vertex] (G-\name) at (1, \x) {};
  \foreach \name/\x in {5/5}
    \node[emptyvertex] (G-\name) at (1, \x) {}; 
  \foreach \name/\x in {1/1}
    \node[vertex] (F-\name) at (2, \x) {};       
  \foreach \name/\x in {2/2}
    \node[redvertex] (F-\name) at (2, \x) {};  
  \draw (G-5) .. controls +(-30:1cm) .. (F-2);    
\end{tikzpicture}
\qquad
\begin{tikzpicture}[shorten >=1pt,->]

  \foreach \name/\x in {1/1, 2/2, 3/3}
    \node[vertextime] (T-\name) at (\x, 0) {$\name$};
  
  \foreach \name/\x in {1/1, 2/2, 3/3}
    \node[vertex] (G-\name) at (1, \x) {};
  \foreach \name/\x in {4/4}
    \node[emptyvertex] (G-\name) at (1, \x) {};     
  \foreach \name/\x in {1/1, 2/2}
    \node[vertex] (F-\name) at (2, \x) {};       
   \foreach \name/\x in {3/3}
    \node[redvertex] (F-\name) at (2, \x) {};      
  \draw (G-4) .. controls +(-30:1cm) .. (F-3);    
\end{tikzpicture}

\bigskip
\bigskip

\begin{tikzpicture}[shorten >=1pt,->]

  \foreach \name/\x in {1/1, 2/2, 3/3}
    \node[vertextime] (T-\name) at (\x, 0) {$\name$};
  
  \foreach \name/\x in {1/1, 2/2, 3/3}
    \node[vertex] (G-\name) at (1, \x) {};
  \foreach \name/\x in {1/1, 2/2}
    \node[vertex] (F-\name) at (2, \x) {}; 
  \foreach \name/\x in {3/3}
    \node[emptyvertex] (F-\name) at (2, \x) {};           
  \foreach \name/\x in {1/1}
    \node[redvertex] (H-\name) at (3, \x) {};  
  \draw (F-3) .. controls +(-30:1cm) .. (H-1);    
\end{tikzpicture}
\qquad
\begin{tikzpicture}[shorten >=1pt,->]

  \foreach \name/\x in {1/1, 2/2, 3/3}
    \node[vertextime] (T-\name) at (\x, 0) {$\name$};
  
  \foreach \name/\x in {1/1, 2/2}
    \node[vertex] (G-\name) at (1, \x) {};
  \foreach \name/\x in {3/3}
    \node[emptyvertex] (G-\name) at (1, \x) {};     
  \foreach \name/\x in {1/1, 2/2}
    \node[vertex] (F-\name) at (2, \x) {};       
  \foreach \name/\x in {3/3}
    \node[redvertex] (F-\name) at (2, \x) {};  
  \foreach \name/\x in {1/1}
    \node[vertex] (H-\name) at (3, \x) {};      
  \draw (G-3) -- (F-3);    
\end{tikzpicture}
\qquad
\begin{tikzpicture}[shorten >=1pt,->]

  \foreach \name/\x in {1/1, 2/2, 3/3}
    \node[vertextime] (T-\name) at (\x, 0) {$\name$};
  
  \foreach \name/\x in {1/1, 2/2}
    \node[vertex] (G-\name) at (1, \x) {};
  \foreach \name/\x in {1/1, 2/2}
    \node[vertex] (F-\name) at (2, \x) {};  
  \foreach \name/\x in {3/3}
    \node[emptyvertex] (F-\name) at (2, \x) {};          
  \foreach \name/\x in {1/1}
    \node[vertex] (H-\name) at (3, \x) {};  
  \foreach \name/\x in {2/2}
    \node[redvertex] (H-\name) at (3, \x) {};         
  \draw (F-3) .. controls +(-30:1cm) .. (H-2);    
\end{tikzpicture}
\qquad
\begin{tikzpicture}[shorten >=1pt,->]

  \foreach \name/\x in {1/1, 2/2, 3/3}
    \node[vertextime] (T-\name) at (\x, 0) {$\name$};
  
  \foreach \name/\x in {1/1, 2/2}
    \node[vertex] (G-\name) at (1, \x) {};  
  \foreach \name/\x in {1/1, 2/2}
    \node[vertex] (F-\name) at (2, \x) {};       
  \foreach \name/\x in {1/1, 2/2}
    \node[vertex] (H-\name) at (3, \x) {};      
\end{tikzpicture}
~\vspace{0cm} \caption{\label{fi:600to222} Operations needed to transform $(6,0,0)$ into $(2,2,2)$.}
\end{figure}
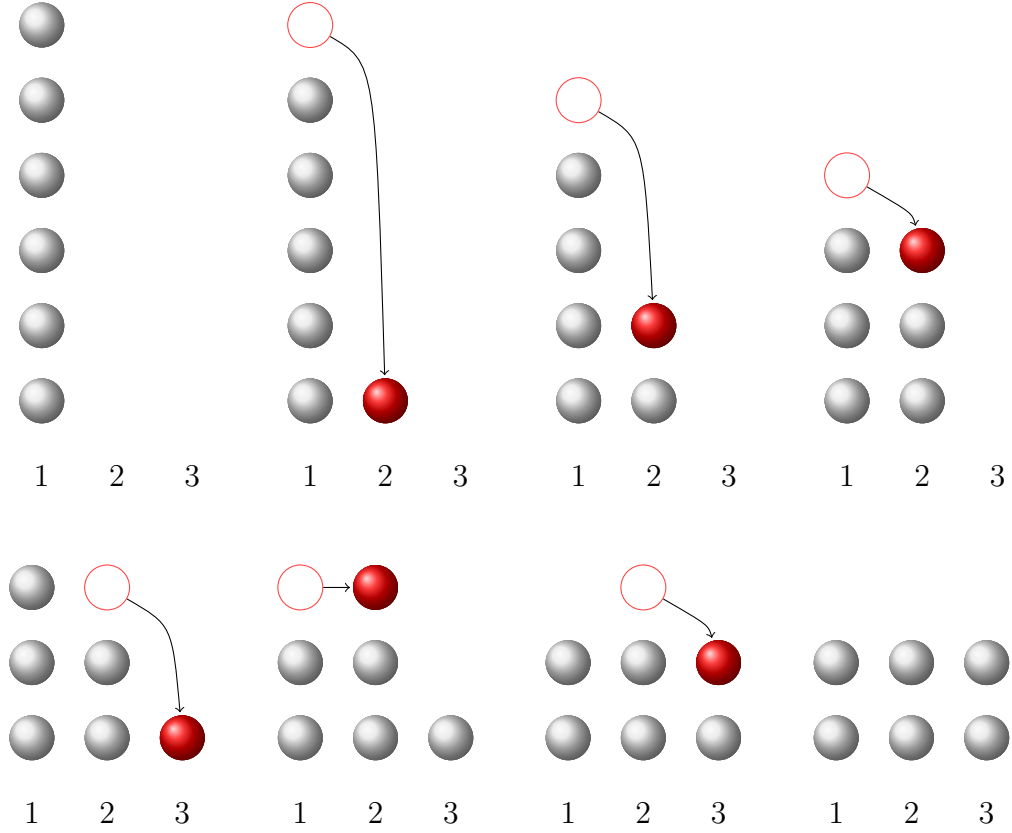

Proposition~\ref{pr:parallelperiod} below states that the quantity  $D(\boldsymbol{\delta})$ measures the total waiting time incurred at the optimum and at the worst equilibrium by the players due to non-uniform departures.

\begin{proposition}\label{pr:parallelperiod} 
Consider the game $\Gamma(\mathcal{N},K, \boldsymbol{\delta})$, where $\mathcal{N}$ is a parallel network and $\boldsymbol{\delta}\in\mathbb{N}_{K}(\gamma)$. Then
\begin{align*}
\Opt(\mathcal{N},K, \boldsymbol{\delta})&=K\sum_{e\in E}\gamma_{e}\tau_{e}+
D(\boldsymbol{\delta}),\\
\WEq(\mathcal{N}, K, \boldsymbol{\delta})&=K\gamma\max_{e\in E}\tau_{e}+D(\boldsymbol{\delta}).
\end{align*}
\end{proposition}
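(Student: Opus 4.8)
The plan is to reduce everything to a comparison of cumulative arrival curves at the source and at the destination, which is legitimate precisely because a parallel network has a single cut. Write $A_u=\sum_{t=1}^{u}\delta_t$ for the number of players that have left the source by the end of stage $u$, and, for a profile $\sigma$, write $C_u(\sigma)$ for the number that have reached $d$ by the end of stage $u$. Since the unique cut of a parallel network is the whole edge set, a player reaches $d$ exactly when she leaves her edge, at most $\gamma=\sum_e\gamma_e$ players leave per stage, and, up to boundary terms that vanish in the limit, the per-period total latency equals the ``area between the curves'', $\sum_{u=pK+1}^{(p+1)K}\ell_u(\sigma)=\sum_u\bigl(A_u-C_u(\sigma)\bigr)$. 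I will use the greedily smoothed curve $\widehat A_u=\min\{A_u,\widehat A_{u-1}+\gamma\}$. Because $\tfrac1K\sum_k\delta_k=\gamma$, a short Lindley/fixed-point argument shows that $\widehat A$ becomes exactly uniform ($\widehat A_u-\widehat A_{u-1}=\gamma$) after finitely many stages, and that $q_u:=A_u-\widehat A_u$ is bounded and eventually $K$-periodic. A purely combinatorial lemma identifies the resulting queue area with the graph distance of the definition: writing $n_k$ for the number of clockwise unit moves across the arc $(k,k+1)$ of the cycle $\mathbb Z/K$ in any sequence realising $\boldsymbol\delta\to\dots\to\boldsymbol\gamma_K$, conservation forces $n_k-n_{k-1}=\delta_k-\gamma$, hence $n_k=n_0+S_k$ with $S_k=\sum_{j\le k}(\delta_j-\gamma)$; minimising $\sum_k n_k$ subject to $n_k\ge0$ pins $n_k=S_k-\min_j S_j=q_k$, and executing this transportation plan in an appropriate order gives a matching upper bound, so $\sum_{u=pK+1}^{(p+1)K}q_u=D(\boldsymbol\delta)$ for large $p$. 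Finally, since the average inflow equals the network capacity, in any profile with bounded queues every edge is used at exactly its capacity on average, so the per-period transit cost converges to $K\sum_e\gamma_e\tau_e$ (profiles with unbounded queues cost infinitely much and are irrelevant); this isolates the role of waiting and gives meaning to the two terms in the claimed formulas.

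For the optimum, let $\bar C$ be an earliest-arrival curve for the parallel network fed by $A$, i.e. the destination curve of a flow over time that, for every deadline simultaneously, delivers the maximum number of players to $d$; such a curve exists (Gale), and in a parallel network it is explicit — smooth $A$ to $\widehat A$, then route $\gamma_e$ of the $\gamma$ players dispatched at each stage onto edge $e$, which creates no further queue. By the defining property, $C_u(\sigma)\le\bar C_u$ for every feasible profile, so the optimal per-period latency equals $\sum_u(A_u-\bar C_u)$, attained by the profile that realises $\bar C$ (one may also exhibit it concretely: pile the excess players of each peak stage onto the shortest edge so its head queue reproduces the smoothing queue, and route the rest with no queue). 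Splitting $A_u-\bar C_u=(A_u-\widehat A_u)+(\widehat A_u-\bar C_u)$, the first term sums to $D(\boldsymbol\delta)$ over a period and $\widehat A_u-\bar C_u=\sum_e\gamma_e\tau_e$ in the steady state, so $\Opt(\mathcal N,K,\boldsymbol\delta)=K\sum_e\gamma_e\tau_e+D(\boldsymbol\delta)$.

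For the worst equilibrium the lower bound comes from the profile in which every player always picks the shortest available edge, breaking ties towards shorter edges — the natural generalisation of the worst equilibrium of Theorem~\ref{th:parallelunif} and of Example~\ref{ex:period600}. One checks it is a UFR equilibrium (in the steady state the equilibrium queues make all edge latencies equal to $\tau_{\max}$, so every route is uniformly fastest) and that its steady-state per-period latency is exactly $K\gamma\tau_{\max}+D(\boldsymbol\delta)$: every player ends up with transit-plus-wait equal to $\tau_{\max}$, and the peak-hour players additionally incur precisely the smoothing delays, which sum to $D(\boldsymbol\delta)$ per period. For the upper bound, compare an arbitrary UFR equilibrium with the hypothetical ``single pipe'' of one edge of length $\tau_{\max}$ and capacity $\gamma$ fed by $A$, whose destination curve is $\check C_u=\widehat A_{u-\tau_{\max}}$ and whose area over a period is $\sum_u q_u+\sum_u(\widehat A_u-\widehat A_{u-\tau_{\max}})\to D(\boldsymbol\delta)+K\gamma\tau_{\max}$. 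It then suffices to show that in any UFR equilibrium the destination curve dominates the single-pipe one, $C_u(\sigma)\ge\check C_u$ — intuitively, no selfish player tolerates arriving later than if she were forced onto the longest edge, so the equilibrium cannot deliver players slower than the worst conceivable pipe.

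The step ``$C_u(\sigma)\ge\check C_u$ in every UFR equilibrium'' is the main obstacle, and the place where the real work lies. The subtlety is exactly the one flagged after Definition~\ref{de:SPE}: the nonatomic equivalences (no flow overtakes another; every particle arrives as early as possible at every vertex) fail for atomic players, so one cannot simply invoke a first-in-first-out ordering. Instead one must argue, through the global priority order $\lhd$ and the bounded-queue property, how queues on the longest edge can build up under the uniformly-fastest-route condition, and track the highest-priority players stage by stage — the periodic-inflow analogue of the argument underlying Theorem~\ref{th:parallelunif}. The other ingredients that need care are routine by comparison: the transportation identity $\sum_u q_u=D(\boldsymbol\delta)$ above, and the boundedness and eventual periodicity of all relevant queues in heavy traffic, which guarantees that the $\limsup$'s and $\liminf$'s in the definitions of $\Opt$ and $\WEq$ are genuine limits, so that the ``steady-state per-period value'' used throughout is well defined.
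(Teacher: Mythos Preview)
Your approach is genuinely different from the paper's, and for the optimum it is both correct and more conceptual. The paper proves both formulas by induction on $D(\boldsymbol{\delta})$: if $\boldsymbol{\delta}\to\boldsymbol{\delta}'$, one shows that postponing the departure of a suitably chosen over-capacity player by one stage changes neither the optimal nor the worst-equilibrium assignment of any other player, and saves exactly one unit of waiting cost, so both $\Opt$ and $\WEq$ drop by exactly one. Your cumulative-curve framework replaces this recursion by a direct identification: you interpret $D(\boldsymbol{\delta})$ as the minimum cost of a transportation problem on the cycle $\mathbb Z/K$, observe that the optimal plan $n_k=S_k-\min_jS_j$ coincides with the steady-state Lindley queue $q_k=A_k-\widehat A_k$, and then split the area $A_u-\bar C_u$ into the smoothing part and the residual $\widehat A_u-\bar C_u=\sum_e\gamma_e\tau_e$. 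This is correct (the heavy-traffic observation that bounded queues force every edge to run at capacity on average, fixing the transit term, is exactly the right way to isolate the two contributions), and it gives a cleaner explanation of \emph{why} $D(\boldsymbol{\delta})$ appears than the paper's induction does.

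For $\WEq$, however, there is a real gap. Your lower bound is fine, but the upper bound hinges entirely on the inequality $C_u(\sigma)\ge\check C_u$ for every UFR equilibrium, which you correctly flag as ``the main obstacle'' and then do not prove. The difficulty is not merely that nonatomic FIFO fails: even in a parallel network, player $m$'s best response depends on \emph{where} players $1,\dots,m-1$ went, not just on their arrival times, so an inductive comparison with the single pipe does not go through player by player, and a direct edge-by-edge outflow bound does not obviously sum to $\widehat A_{u-\tau_{\max}}$. Filling this in requires exactly the kind of stage-by-stage queue accounting you allude to, and this is precisely what the paper's inductive method is designed to avoid. By comparing $\boldsymbol{\delta}$ with a neighbour $\boldsymbol{\delta}'$ that differs by a single postponed departure, the paper reduces the equilibrium upper bound to a \emph{local} statement --- moving one player forward by one stage raises $\WEq$ by exactly one --- which it establishes by an explicit (if somewhat intricate) construction together with a lemma characterising worst-equilibrium tie-breaking. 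So the two routes trade places: yours is cleaner for $\Opt$, but for the $\WEq$ upper bound the paper's induction on $D(\boldsymbol{\delta})$ is doing real work that your sketch has not yet replaced.
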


The main idea of the proof is that if $\boldsymbol{\delta}'$ is obtained from $\boldsymbol{\delta}$ by an elementary operation, then the optimum and the worst equilibrium under $\boldsymbol{\delta}'$ are obtained from the optimum and the worst equilibrium under $\boldsymbol{\delta}$ by letting a player postpone her departure by one unit of time. In other words, since a player departing over capacity has to queue anyhow, it would save one unit of total cost if this player's departure were postponed to the next unit of time. The formal proof is in the online Supplementary Material.

Regarding the impact on efficiency, it is easy to see that the ratio $\WEq(\mathcal{N}, K, \boldsymbol{\delta})/\Opt(\mathcal{N},K, \boldsymbol{\delta})$ is decreasing in $D(\boldsymbol{\delta})$. Intuitively, when seasonality is high, the planner has to create queues, and thus the optimum tends to resemble the equilibrium.

\section{Conclusions and open problems}\label{se:conclusion}

In this paper, we have considered dynamic congestion games with atomic players and common source-destination pair. 
We have shown that when the inflow rate is uniform over time and does not exceed the capacity of the network, an optimal dynamic flow never create queues. This result is independent of the topology of the network.
For special topologies such as parallel network, we have provided exact computation of optimum and equilibrium costs. An important insight is that optimum and equilibrium flows eventually coincide, but the transient phase before reaching the steady state induces an important difference in costs. 

We have studied efficiency of equilibria and have shown that the price of anarchy is unbounded, even for parallel networks. 
We also found that there exist networks that admit efficient equilibria, but for which both the price of anarchy and the Braess ratio are arbitrarily large. This shows that multiplicity of equilibria in atomic games may have a significant impact.

We have shown that several Braess-type paradoxes can occur in atomic dynamic network games. First of all, we have the usual Braess's paradox according to which removing an edge from a network can improve the equilibrium cost. Unlike what happens in static games, this paradox can occur also in networks that do not include a Wheatstone subnetwork. Moreover, we can have another paradoxical phenomenon for which initial queues in the network reduce the equilibrium cost. Alternatively, increasing the transit cost of an edge may reduce the equilibrium cost.

Finally, we have studied the impact of seasonality of inflow by considering parallel networks and periodic inflow sequences. The main result is that the optimum cost and the equilibrium cost are shifted upwards by the same amount which is interpreted as a measure of  seasonality.

We think of this work as  a first attempt to understand atomic dynamic congestion games. Several problems remain open, among them:

\begin{enumerate}[(a)]

\item
Are queues always bounded in equilibrium? If yes, how much worse can the equilibrium costs be, compared to the socially optimal costs, and what would be a characterization of this cost for a given network?

\item
We found a new kind of paradoxical phenomenon: the presence of an initial queue improves the equilibrium latency. In which networks does such paradox exist? And by how much can the latency improve?

\item
Based on  many examples that we have solved numerically, we conjecture that $D(\boldsymbol{\delta})$ is always an upper bound of the extra equilibrium cost due to seasonality. Is this true?
\end{enumerate}

\subsection*{Acknowledgments}

We thank three referees, an Associate Editor, and the Area Editor Professor Asuman Ozdaglar, whose insightful observations allowed us to expand the scope of our analysis. We are indebted to Ludovic Renou, who provided helpful comments and to Thomas Rivera whose suggestions helped us improve the exposition. We thank Roberto Cominetti and Jos\'e Correa for some insightful conversations.

\appendix

\section{Proofs}

\subsection*{Proofs of Section~\ref{se:model}}

\begin{proof}[Proof of Lemma~\ref{le:existence}]

We prove the existence of a uniformly fastest route equilibrium of the game $\Gamma(\mathcal{N}, \mathcal{D})$. Define the strategy profile $\sigma\in\mathcal{R}^G$ as follows. In an empty network there is always a shortest route with the property that every intermediate vertex is reached as early as possible, since that case is equivalent to the static shortest path problem. Let player $[11]$ choose a route with that property. We define the strategy for each other player $[it]$ iteratively. Given the choices of players $[js] \lhd [it]$, let player $[it]$ choose a route such that each intermediate vertex is reached as early as possible. A slight modification of Dijkstra's algorithm can be used to compute such path. Let us argue that the above strategy profile $\sigma$ is a UFR equilibrium.

By definition of $\sigma$, a player $[js]$ does not influence the costs of a player $[it]$ with $[it] \lhd [js]$, since $[js]$ does not overtake $[it]$. Hence the latency of a player $[it]$ does not depend on a player $[js]$ with $[it] \lhd [js]$. So player $[it]$ has the same latency value she had when she chose her route. Since she chose a  shortest route with the property that every intermediate vertex is reached as early as possible, the strategy profile is a UFR equilibrium.
\end{proof}

\subsection*{Proofs of Section~\ref{se:optimum}}

\begin{proof}[Proof of Theorem \ref{th:optimum}]

We start proving the theorem for the case $\delta=\gamma$, i.e., when the inflow is at capacity.
The proof starts with two lemmas. The first lemma actually holds for any $\delta \le \gamma$.

For edge $e$ and each stage $t$, denote $y_{t}^{e}(\sigma)$ the number of players who \emph{enter} edge $e$ at stage $t$ under strategy $\sigma$ and 
\[
\bar{y}_{T}^{e}(\sigma) = \frac{1}{T} \sum_{t\leq T}y_{t}^{e}(\sigma).
\]

\begin{lemma}\label{le:limsupLbar}
Let $C$ be a minimum cut and $c\in C$. If a strategy $\sigma$ is such that $\limsup_{T}\bar{y}_{T}^{c}(\sigma)>\gamma_{c}$, then $\limsup_{T}\bar{L}_{T}(\sigma)=+\infty$.
\end{lemma}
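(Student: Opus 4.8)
The plan is to show that an average inflow into $c$ exceeding its capacity $\gamma_{c}$ forces an unboundedly growing queue at the head of $c$, whose ``area'' is a lower bound for the cumulative cost and grows quadratically in time. I would argue via cumulative counts. For each stage $s$ put $a_{s}:=\sum_{t\le s} y^{c}_{t}(\sigma)$ (the number of players that have entered $c$ by stage $s$, so that $\bar y^{c}_{T}(\sigma)=a_{T}/T$), let $d_{s}$ be the number that have left $c$ by stage $s$, and set $O_{s}:=a_{s}-d_{s}\ge 0$, the occupancy of $c$ (players in transit on $c$ or queueing at its head) at the end of stage $s$. From the dynamics I get, for every $s$, that $d_{s}\le \gamma_{c}s$ (at most $\gamma_{c}$ exits per stage) and that $O_{s+1}\ge O_{s}-\gamma_{c}$ (the occupancy of $c$ cannot fall by more than $\gamma_{c}$ in one stage, since $y^{c}_{s+1}\ge 0$). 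The key accounting step is
\[
\sum_{s=1}^{S} O_{s}\ \le\ \sum_{t=1}^{S}\ell_{t}(\sigma),
\]
because $\sum_{s\le S}O_{s}$ counts the player-stages spent on $c$ during $\{1,\dots,S\}$, every such player has already departed the source and so lies in some generation $G_{t}$ with $t\le S$, and the stages she spends on $c$ (transit plus waiting there) are part of her latency $\ell_{it}(\sigma)$. Hence $\bar L_{S}(\sigma)\ge \tfrac1S\sum_{s\le S}O_{s}$ for every $S$.

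Then I would bring in the hypothesis. If $\limsup_{T}a_{T}/T>\gamma_{c}$, choose $\epsilon>0$ and $S_{k}\uparrow\infty$ with $a_{S_{k}}\ge(\gamma_{c}+2\epsilon)S_{k}$; then $O_{S_{k}}\ge a_{S_{k}}-\gamma_{c}S_{k}\ge 2\epsilon S_{k}$, and iterating $O_{s+1}\ge O_{s}-\gamma_{c}$ gives $O_{S_{k}+j}\ge 2\epsilon S_{k}-j\gamma_{c}\ge \epsilon S_{k}$ for all $0\le j\le m_{k}:=\lfloor \epsilon S_{k}/\gamma_{c}\rfloor$. Writing $S'_{k}:=S_{k}+m_{k}$,
\[
\bar L_{S'_{k}}(\sigma)\ \ge\ \frac{1}{S'_{k}}\sum_{s=1}^{S'_{k}}O_{s}\ \ge\ \frac{(m_{k}+1)\,\epsilon S_{k}}{S'_{k}}\ \ge\ \frac{\epsilon^{2}S_{k}^{2}/\gamma_{c}}{(1+\epsilon/\gamma_{c})\,S_{k}}\ =\ \frac{\epsilon^{2}}{\gamma_{c}+\epsilon}\,S_{k}\ \longrightarrow\ \infty ,
\]
whence $\limsup_{T}\bar L_{T}(\sigma)=+\infty$.

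\textbf{Main obstacle.} The step that needs the most care is the accounting inequality $\sum_{s\le S}O_{s}\le\sum_{t\le S}\ell_{t}(\sigma)$: one must check that the player-stages counted on the left are charged only to generations $\le S$ (this is exactly why the statement is phrased in terms of \emph{entries} into $c$, since a player on $c$ at a stage $\le S$ must have left the source by stage $S$) and that queueing at $c$ together with transiting $c$ is a genuine component of each player's latency as defined in the model. A second, easily overlooked point is the \emph{direction} of the growth argument: a large value of $a_{S_{k}}$ says nothing about $a_{s}$ for $s<S_{k}$, so one cannot argue backwards; instead one must exploit $O_{s+1}\ge O_{s}-\gamma_{c}$ to propagate the congestion \emph{forward} from $S_{k}$ over an interval of length $\Theta(S_{k})$, long enough for the Ces\`aro average to blow up. Finally, note that the minimum-cut hypothesis is not actually used — only the capacity $\gamma_{c}$ of the single edge $c$ enters the argument — and the degenerate case $\gamma_{c}=0$ makes the conclusion trivial, so one may assume $\gamma_{c}\ge 1$.
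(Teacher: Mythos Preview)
Your proof is correct and rests on the same core idea as the paper's: excess average inflow into $c$ forces the number of players on $c$ to be of order $\Theta(T_k)$ along a subsequence, which makes the cumulative cost of order $\Theta(T_k^2)$ and hence the Ces\`aro average diverge. The bookkeeping differs slightly: the paper argues directly that at time $T_k$ some player has waiting time $w\approx \alpha T_k/\gamma_c$ and that there are $\gamma_c$ players with each smaller waiting time, giving a total waiting cost $\approx \gamma_c w^2/2$, all charged to generations $\le T_k$; you instead use the clean occupancy--time inequality $\sum_{s\le S}O_s\le \sum_{t\le S}\ell_t(\sigma)$ and propagate the large occupancy \emph{forward} over an interval of length $\Theta(S_k)$ before averaging at $S'_k$. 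Your version is arguably tighter in its justification, and your remark that the minimum-cut hypothesis is never used is correct --- the paper's proof does not invoke it either; only the single-edge bound ``at most $\gamma_c$ players exit $c$ per stage'' matters.
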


\begin{proof} 

Take a strategy $\sigma$,  a minimum cut $C$  and an edge $c\in C$ such that $\limsup_{T}\bar{y}_{T}^{c}(\sigma)>\gamma_{c}$. Then, there exists $\alpha>0$ and a subsequence $\{T_{k}\}$ such that along this subsequence we have $\bar{y}_{T_{k}}^{c}(\sigma)\geq \gamma_{c}+\alpha$. Since at most $\gamma_{c}$ players can exit edge $c$ at any given time, this implies that there exists a player who has a waiting time of $w=\lfloor \alpha T_{k}/\gamma_c\rfloor$. This in turn implies that for each integer $s< w$, there exist $\gamma_c$ players who have waiting time $s$. Thus, the total waiting time adds up to at least
\[
\gamma_c\cdot(1+\dots+w-1)=\frac{\gamma_c\cdot(w-1)\cdot w}{2}
\]
and  the average waiting time at stage $T_{k}$ is such that
\[
\bar{w}_{T_{k}}(\sigma)\geq \frac{(\lfloor \alpha T_{k}/\gamma_c\rfloor-1)\cdot\lfloor \alpha T_{k}/\gamma_c\rfloor}{2T_{k}}.
\]
The r.h.s.\ diverges as $k\to\infty$, which concludes the proof. 
\end{proof}

For each stage $T$ and edge $e$, denote $x_{T}^{e}(\sigma)$ the number of players who \emph{exit} edge $e$ at stage $T$ under strategy $\sigma$ and 
\[
\bar{x}_{T}^{e}(\sigma) = \frac{1}{T} \sum_{t\leq T}x_{t}^{e}(\sigma).
\]

\begin{lemma}\label{co:FF}
If a strategy $\sigma$ is such that  $\limsup_{T}\bar{L}_{T}(\sigma)<+\infty$, then, for any  minimum cut $C$ and $c\in C$, we have
\[
\lim_{T}\bar{y}_{T}^{c}(\sigma)=\lim_{T}\bar x_{T}^{c}(\sigma)=\gamma_{c}.
\]
\end{lemma}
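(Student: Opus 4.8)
The plan is to turn the hypothesis $\limsup_{T}\bar L_{T}(\sigma)<+\infty$ into a \emph{linear waiting budget} and then use it, together with the upper bound already delivered by Lemma~\ref{le:limsupLbar}, to squeeze each $\bar y_{T}^{c}$ — and afterwards each $\bar x_{T}^{c}$ — down to $\gamma_{c}$. Throughout, fix a minimum cut $C$, so that $\sum_{c\in C}\gamma_{c}=\gamma=\delta$, and recall that exactly $\delta$ players depart in each of the first $T$ generations, each along a route that, $C$ being a cut, contains at least one edge of $C$.

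First I would record the waiting budget: since $w_{t}(\sigma)\le\ell_{t}(\sigma)$, we get $\sum_{t\le T}w_{t}(\sigma)\le T\bar L_{T}(\sigma)$, hence $\sum_{t\le T}w_{t}(\sigma)\le MT$ for some constant $M$ and all large $T$ (no queue ever fails to clear, since otherwise some $\ell_{t}=+\infty$ and $\limsup_{T}\bar L_{T}=+\infty$). So the total number of player-stages spent queueing up to time $T$ is $O(T)$. The contrapositive of Lemma~\ref{le:limsupLbar} already gives $\limsup_{T}\bar y_{T}^{c}(\sigma)\le\gamma_{c}$ for each $c\in C$; I still need a complementary lower bound on the \emph{sum} over the cut. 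For that I would estimate the number $B_{T}$ of players of the first $T$ generations that have not entered any edge of $C$ by time $T$: such a player, having departed at time $t$, has spent $T-t$ stages on the part of her route preceding its first $C$-edge, of which at most $D:=\max_{r\in\mathcal R}\sum_{e\in r}\tau_{e}$ are transit, hence at least $T-t-D$ are queueing already done. Since at most $\delta$ players depart per stage, pitting the resulting quadratic-in-$B_{T}$ lower bound for the accumulated waiting against the linear budget $MT$ forces $B_{T}=O(\sqrt T)=o(T)$. Then each of the $\delta T-B_{T}$ players that have entered an edge of $C$ contributes at least $1$ to $\sum_{c\in C}\sum_{t\le T}y_{t}^{c}(\sigma)=T\sum_{c\in C}\bar y_{T}^{c}(\sigma)$, so $\liminf_{T}\sum_{c\in C}\bar y_{T}^{c}(\sigma)\ge\delta=\sum_{c\in C}\gamma_{c}$, and for each fixed $c_{0}\in C$,
\[
\gamma_{c_{0}}=\sum_{c\in C}\gamma_{c}-\sum_{c\ne c_{0}}\gamma_{c}\le\liminf_{T}\Big(\sum_{c\in C}\bar y_{T}^{c}\Big)-\sum_{c\ne c_{0}}\limsup_{T}\bar y_{T}^{c}\le\liminf_{T}\bar y_{T}^{c_{0}}\le\limsup_{T}\bar y_{T}^{c_{0}}\le\gamma_{c_{0}},
\]
whence $\lim_{T}\bar y_{T}^{c}(\sigma)=\gamma_{c}$.

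For the exit rates I would write $T\bar x_{T}^{c}(\sigma)=T\bar y_{T}^{c}(\sigma)-R_{T}^{c}$, where $R_{T}^{c}$ counts the players currently in transit on $c$ or waiting in its queue at time $T$. The in-transit count is $O(1)$ — at most $\tau_{c}$ times the bounded number of players that can reach the tail of $c$ in one stage — while a queue of length $q$ at the head of $c$ at time $T$ must already have generated $\Omega(q^{2})$ player-stages of delay on $c$ (per stage it can grow by only a bounded amount, yet it drains by $\gamma_{c}$), so the waiting budget forces $q=O(\sqrt T)$. Hence $R_{T}^{c}=o(T)$ and $\bar x_{T}^{c}(\sigma)=\bar y_{T}^{c}(\sigma)-R_{T}^{c}/T\to\gamma_{c}$.

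The main obstacle is the pair of $o(T)$ estimates: showing that at most $O(\sqrt T)$ players can be stuck upstream of $C$, and that a backlog of $q$ players at an edge head already costs $\Omega(q^{2})$ units of delay, both obtained by playing a quadratic lower bound against the linear bound $\sum_{t\le T}w_{t}(\sigma)=O(T)$. Everything else is bookkeeping at the cut.
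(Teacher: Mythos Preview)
Your proof is correct and follows essentially the same skeleton as the paper's: use Lemma~\ref{le:limsupLbar} for the upper bound $\limsup_T\bar y_T^{c}\le\gamma_c$, use conservation across the minimum cut (total entries into $C$ are asymptotically $\delta=\sum_{c}\gamma_c$) for the matching lower bound, squeeze, and then pass to the exit averages. The paper simply invokes a Ford--Fulkerson-type balancing argument (``a deficit on $c$ forces an excess on some $c'$'') and declares that the outflows follow, whereas you make both steps quantitative via the linear waiting budget and the $O(\sqrt{T})$ bounds on $B_T$ and on the queue length --- a more careful route to the same conclusion.
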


\begin{proof}
Consider such a strategy $\sigma$ and a minimum cut $C$. Thanks to Lemma~\ref{le:limsupLbar}, for each edge $c\in C$, $\limsup_{T}\bar{y}_{T}^{c}(\sigma)\leq \gamma_{c}$. If there exists an edge $c\in C$ such that $\liminf_{T}\bar{y}_{T}^{c}(\sigma) < \gamma_{c}$, then there must be another edge $c'\in C$ such that $\limsup_{T}\bar{y}_{T}^{c'}(\sigma) > \gamma_{c}$. This is a consequence of the Ford and Fulkerson theorem. If for $\alpha>0$, $\bar{y}_{T_{k}}^{c}(\sigma)\leq \gamma_{c}-\alpha$ along a subsequence $\{T_{k}\}$, then there is a deficit of players on edge $c$ which has to be compensated by an an excess of players on another edge $c'$ of the minimum cut $C$. As a consequence of Lemma~\ref{le:limsupLbar}, this results in an unbounded average cost.

Finally, if the inflows satisfy  
\[
\lim_{T}\bar{y}_{T}^{c}(\sigma)=\gamma_{c},\quad \forall c\in C,
\]
then the outflows $\bar{x}_{T}^{c}(\sigma), c\in C$, satisfy it as well.
\end{proof}

We may now conclude the proof of Theorem \ref{th:optimum} when $\delta=\gamma$. Lemmas~\ref{le:limsupLbar} and \ref{co:FF} imply that, to guarantee $\limsup_{T}\bar{L}_{T}(\sigma)<+\infty$,  flows have to match capacities on every minimal cut. Thus, in order to minimize the asymptotic average latency, there should remain no queues, i.e., at the optimum the total waiting time is zero. A simple way to achieve that is to repeat a static flow with no queues at each stage. By construction, the min-cost flow $f^{*}$ has a value $\gamma$, the capacity of the network, and satisfies $f^{*}_e\leq\gamma_e$ for each edge $e$. Thus, repeating the assignment $F^{*}$ at each stage yields an asymptotic average latency $L^{*}$, which is the value of the min-cost static flow problem. This is clearly the best that can be achieved without creating queues and therefore this is the optimal asymptotic average latency. 
Notice that under this assignment \emph{all} edges are queue free, not just on the edges of the minimum cut.

Now, the case  $\delta < \gamma$ can be treated by augmenting the network with a fictitious edge $f$ of capacity $\delta$ and length $0$, whose tail is the new source and whose head is the old source. This new edge $f$ is clearly the unique minimum cut of the new network. This way, we obtain a game where the inflow is at capacity, and so we can apply the first part of the proof. Since the output  of $f$ is the input of the original network, and is constantly $\delta$, the result follows.
\end{proof}

\subsection*{Proofs of Section~\ref{se:topologies}}

\begin{proof}[Proof of Proposition~\ref{th:parallelunif}]
First, by Theorem~\ref{th:optimum}, the optimal  latency can be computed by sending the capacity number of players on each edge of the subnetwork $\mathcal{N}_{\delta}$. Hence the result follows.

Second, we show that there exists an equilibrium such that each player pays the transit cost of $\tau_{f_{\delta}}$. To simplify notation, the proof below assumes that $\delta=\gamma$. A similar proof can be given if $\delta<\gamma$. Call $n$ the number of edges in $E$.

We start by defining several times. Let $T_0=0$ and, for all $j\in\{1,\ldots,n-1\}$,
\begin{equation*}
T_j=\sum_{k=1}^j\left(\frac{\sum_{i=1}^k\gamma_i}{\delta-\sum_{i=1}^k\gamma_i}\cdot(\tau_{k+1}-\tau_k)\right).
\end{equation*}
Denote $\underline{T}_j=\left\lfloor T_j\right\rfloor$ for all $j\in\{1,\ldots,n-1\}$.

Define a strategy profile $\sigma\in\mathcal{R}^G$ for $\Gamma(\mathcal{N},\delta)$ as follows. For all $[it]\in G$, choose $e\in E$ with minimum latency, and if there are multiple edges with minimum latency, then choose among these  the first one in the order $\prec$.

We divide the proof into three parts: (i) stages $t$ with $t\leq \underline{T}_1$, (ii) stages $t$ with $\underline{T}_{j-1}<t\leq \underline{T}_j$ for $j\in\{2,\ldots,n-1\}$ and (iii) stages $t$ with $t>\underline{T}_{n-1}$. Note that part (ii) is redundant if $n=2$.

\medskip

\noindent
(i) Each player who sees a queue of size $\gamma_1\cdot(\tau_2-\tau_1)=(\delta-\gamma_1)\cdot T_1$  on $e_1$, faces a waiting cost of $\tau_2-\tau_1$, and consequently is indifferent between $e_1$ and $e_2$. If all $[it]$ with $t<\underline{T}_1$ choose $e_1$, then the queue on $e_1$ contains $(\delta-\gamma_1)\cdot \underline{T}_1$ players at the start of stage $\underline{T}_1$. Define 
\begin{equation*}
\alpha_1=(\delta-\gamma_1)\cdot(T_1- \underline{T}_1).
\end{equation*}
This  is the number of players of $G_{\underline{T}_1}$ needed before a player is indifferent between $e_1$ and $e_2$. 
Since $0\leq\alpha_1<(\delta-\gamma_1)$, we know that player $[\alpha_1+1,\underline{T}_1]$ sees a queue of size $(\delta-\gamma_1)\cdot T_1\cdot$  on $e_1$ (and is indifferent between $e_1$ and $e_2$), and that player $[\alpha_1+\gamma_1+1,\underline{T}_1]$ is the first player to choose $e_2$. In other words, at all stages $t$ with $t< \underline{T}_1$ all players choose $e_1$. 

\medskip

\noindent
(ii) For $j\in\{2,\ldots,n-1\}$, the following analysis holds true iteratively. Consider the sum of the queues on $e_1, \ldots, e_j$ that have grown  starting from the first  player who was indifferent between $e_1, \ldots, e_j$ onwards. We call this the joint queue of $e_1, \ldots, e_j$. The joint queue of $e_1, \ldots, e_j$ contains $\max\{0,\delta-\sum_{i=1}^j\gamma_i-\alpha_{j-1}\}$ players at the start of stage $\underline{T}_{j-1}+1$.

Each player who sees a queue of size $\sum_{i=1}^j\gamma_i\cdot(\tau_{j+1}-\tau_j)=(\delta-\sum_{i=1}^j\gamma_i)\cdot(T_j-T_{j-1})$  on the joint queue of $e_1, \ldots, e_j$, is indifferent between $e_1$, \ldots, $e_j$ and $e_{j+1}$. If all $[it]$ with $\underline{T}_{j-1}<t<\underline{T}_j$ choose one edge in $\{e_1, \ldots, e_j\}$, then the joint queue of $e_1, \ldots, e_j$ contains $\max\{0,\delta-\sum_{i=1}^j\gamma_i-\alpha_{j-1}\}+(\delta-\sum_{i=1}^j\gamma_i)\cdot(\underline{T}_j-\underline{T}_{j-1}-1)$ players at the start of stage $\underline{T}_j$. Define 
\begin{equation*}
\alpha_j=\left(\delta-\sum_{i=1}^j\gamma_i\right)\cdot(T_j- \underline{T}_j+\underline{T}_{j-1}+1-T_{j-1})-\max\left\{0,\delta-\sum_{i=1}^j\gamma_i-\alpha_{j-1}\right\}.
\end{equation*}
Thus $\alpha_j$ is the number of players needed in $G_{\underline{T}_j}$ before a player is indifferent between $e_1$, \ldots, $e_j$ and $e_{j+1}$.

\begin{claim}
We have
$0\leq\alpha_j<\delta-\gamma_1$ for all $j\in\{2,\ldots,n-1\}$.
\end{claim}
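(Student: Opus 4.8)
The plan is to prove both inequalities together by induction on $j$, taking $j=2$ as the base case and reading off $0\le\alpha_{1}<\delta-\gamma_{1}$ from part~(i). Throughout write $s_{j}=\sum_{i=1}^{j}\gamma_{i}$ and $r_{j}=\delta-s_{j}$, so that $r_{1}\ge r_{2}\ge\cdots\ge r_{n-1}>0$ and $r_{j}$ is the rate at which the joint queue of $e_{1},\dots,e_{j}$ grows once all of these edges carry flow. The single identity I would record first is $r_{j}(T_{j}-T_{j-1})=s_{j}(\tau_{j+1}-\tau_{j})$, which is immediate from the definition of $T_{j}$ and is an integer because $s_{j},\tau_{j+1},\tau_{j}\in\mathbb{N}$. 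Using it to rewrite
\[
\alpha_{j}=r_{j}(T_{j}-T_{j-1})-r_{j}(\underline T_{j}-\underline T_{j-1}-1)-\max\{0,\,r_{j}-\alpha_{j-1}\},
\]
and using $\underline T_{j},\underline T_{j-1}\in\mathbb{Z}$ together with $\alpha_{j-1}\in\mathbb{Z}$ (inductive hypothesis), shows $\alpha_{j}\in\mathbb{Z}$. Hence it suffices to establish the strict bounds $\alpha_{j}>-1$ and $\alpha_{j}<\delta-\gamma_{1}$.

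For the lower bound I would use the interpretation $\alpha_{j}=\Theta_{j}-Q_{j}$, where $\Theta_{j}=s_{j}(\tau_{j+1}-\tau_{j})$ is the size the joint queue of $e_{1},\dots,e_{j}$ must reach before a player becomes indifferent between $e_{1},\dots,e_{j}$ and $e_{j+1}$, and $Q_{j}=\max\{0,r_{j}-\alpha_{j-1}\}+r_{j}(\underline T_{j}-\underline T_{j-1}-1)$ is that joint queue at the start of generation $G_{\underline T_{j}}$, as computed in part~(ii). The queue grows at rate $r_{j}$ and would reach $\Theta_{j}$ at real time $T_{j}$, while $\underline T_{j}=\lfloor T_{j}\rfloor\le T_{j}$; so morally it has not overshot the threshold at stage $\underline T_{j}$, i.e.\ $Q_{j}\le\Theta_{j}$ and $\alpha_{j}\ge0$. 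Making this rigorous is where the real work sits: right after $e_{j}$ comes into service the queue does not grow exactly linearly, because the excess $r_{j}$ cannot in general be distributed among $e_{1},\dots,e_{j}$ in the integer proportions that keep the latencies of these edges balanced, and it is precisely the correction term $\max\{0,r_{j}-\alpha_{j-1}\}$ that absorbs this discrepancy. I would feed the inductive bound $0\le\alpha_{j-1}<\delta-\gamma_{1}$ (equivalently $\alpha_{j-1}<r_{j-1}$) into the estimate of this correction to conclude $Q_{j}\le\Theta_{j}$.

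For the upper bound I would first simplify, using $\alpha_{j-1}\ge0$, to the compact form
\[
\alpha_{j}=r_{j}\bigl(\{T_{j}\}-\{T_{j-1}\}\bigr)+\min\{\alpha_{j-1},\,r_{j}\},
\]
where $\{\cdot\}$ denotes the fractional part, and then treat the cases $\alpha_{j-1}\ge r_{j}$ and $\alpha_{j-1}<r_{j}$ separately. Because $\{T_{j}\}<1$ and $\{T_{j-1}\}\ge0$, the first summand is strictly below $r_{j}$, and $\min\{\alpha_{j-1},r_{j}\}\le r_{j}$; since the crude conclusion $\alpha_{j}<2r_{j}$ is still short of the target, the argument must additionally use that $r_{j}(\{T_{j}\}-\{T_{j-1}\})$ equals the integer $r_{j}(T_{j}-T_{j-1})-r_{j}(\underline T_{j}-\underline T_{j-1})$ and combine this with the inductive bound on $\alpha_{j-1}$ and with $s_{j}\ge\gamma_{1}$. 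The main obstacle throughout is the same one: the joint queue of $e_{1},\dots,e_{j}$ is built up by atomic unit players, so near the moment $e_{j}$ starts being used its size deviates from the continuous prediction by a bounded amount, and since the lower and upper bounds are coupled through this $\max$/$\min$ term, the induction has to be organized so that this $O(1)$ slack is re-absorbed at every step and never accumulates across the $j$ steps.
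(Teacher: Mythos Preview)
Your proposal remains a sketch: you correctly derive the compact form $\alpha_j = r_j(\{T_j\}-\{T_{j-1}\})+\min\{\alpha_{j-1},r_j\}$ and correctly note $\alpha_j\in\mathbb{Z}$, but the induction you describe cannot close with the hypothesis you choose to carry. Assume only $0\le\alpha_{j-1}<\delta-\gamma_1$. In the case $\alpha_{j-1}<r_j$ your formula gives $\alpha_j=r_j(\{T_j\}-\{T_{j-1}\})+\alpha_{j-1}$; when $\{T_j\}<\{T_{j-1}\}$ the first summand is negative, and obtaining $\alpha_j\ge0$ then forces $\alpha_{j-1}\ge r_j(\{T_{j-1}\}-\{T_j\})$, which does not follow from $\alpha_{j-1}\ge0$ alone. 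The same difficulty mirrors on the upper side: when $\{T_j\}>\{T_{j-1}\}$ the positive first summand pushes $\alpha_j$ above $\alpha_{j-1}$, and nothing in your hypothesis prevents accumulation across several steps --- as you yourself flag but never resolve. (Incidentally, your parenthetical ``equivalently $\alpha_{j-1}<r_{j-1}$'' is a slip: $\delta-\gamma_1=r_1$, not $r_{j-1}$.)

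The paper's remedy is precisely what your last sentence alludes to but never writes down: it carries a \emph{strengthened} inductive hypothesis. For the lower bound it proves
\[
r_j-\alpha_{j-1}\le r_j\bigl(1-\{T_{j-1}\}\bigr),\qquad\text{i.e.}\quad \alpha_{j-1}\ge r_j\{T_{j-1}\},
\]
which, fed into your compact form, immediately yields $\alpha_j\ge r_j\{T_j\}\ge r_{j+1}\{T_j\}$, re-establishing the hypothesis at the next index. For the upper bound it carries
\[
\alpha_{j-1}\le r_j\{T_{j-1}\}+\sum_{i=2}^{j}\gamma_i\{T_{i-1}\};
\]
one step of your recursion then gives $\alpha_j\le r_{j+1}\{T_j\}+\sum_{i=2}^{j+1}\gamma_i\{T_{i-1}\}$, and the right-hand side is strictly below $r_{j+1}+\sum_{i=2}^{j+1}\gamma_i=\delta-\gamma_1$. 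So the missing piece in your argument is the explicit pair of auxiliary inequalities that makes the induction self-sustaining; once you name them, the verification is a short algebraic check that fits nicely with your compact form.
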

\begin{proof} 
First, we show that $\alpha_j\geq 0$ for all $j\in\{2,\ldots,n-1\}$. Notice that if $\delta-\sum_{i=1}^j\gamma_i-\alpha_{j-1}\leq0$, then the result follows. So assume that $\delta-\sum_{i=1}^j\gamma_i-\alpha_{j-1}>0$. We prove by induction that $\delta-\sum_{i=1}^j\gamma_i-\alpha_{j-1}\leq(\delta-\sum_{i=1}^j\gamma_i)\cdot(\underline{T}_{j-1}+1-T_{j-1})$ for all $j\in\{2,\ldots,n-1\}$.

For $j=2$,
\begin{align*}
\delta-\sum_{i=1}^2\gamma_i-\alpha_1&=\left(\delta-\sum_{i=1}^2\gamma_i\right)\cdot(\underline{T}_1+1-T_1)-\gamma_2\cdot(T_1-\underline{T}_1)\\
&\leq\left(\delta-\sum_{i=1}^2\gamma_i\right)\cdot(\underline{T}_1+1-T_1).
\end{align*} 

Suppose the inequality holds true for $j\in\{2,\ldots,n-2\}$. Then
\begin{align*}
\delta-\sum_{i=1}^{j+1}\gamma_i-\alpha_j&=\left(\delta-\sum_{i=1}^{j+1}\gamma_i\right)\cdot(\underline{T}_j-T_j+T_{j-1}-\underline{T}_{j-1})\\
&\quad-\gamma_{j+1}\cdot(T_j-\underline{T}_j+\underline{T}_{j-1}+1-T_{j-1})+\max\left\{0,\delta-\sum_{i=1}^j\gamma_i-\alpha_{j-1}\right\}\\
&\leq\left(\delta-\sum_{i=1}^{j+1}\gamma_i\right)\cdot(\underline{T}_j+1-T_j)-\gamma_{j+1}\cdot(T_j-\underline{T}_j)\\
&\leq\left(\delta-\sum_{i=1}^{j+1}\gamma_i\right)\cdot(\underline{T}_j+1-T_j),
\end{align*}
where the first inequality follows from the induction hypothesis.

The above result implies
\begin{equation*}
\alpha_j\geq\left(\delta-\sum_{i=1}^{j+1}\gamma_i\right)\cdot(T_j-\underline{T}_j)\geq 0.
\end{equation*}

Second, we show that $\alpha_j<\delta-\gamma_1$ for all $j\in\{2,\ldots,n-1\}$. We prove by induction that $\delta-\sum_{i=1}^j\gamma_i-\alpha_{j-1}\geq(\delta-\sum_{i=1}^j\gamma_i)\cdot(\underline{T}_{j-1}+1-T_{j-1})-\sum_{i=2}^j\gamma_i\cdot(T_{i-1}-\underline{T}_{i-1})$ for all $j\in\{2,\ldots,n-1\}$.

For $j=2$,
\begin{align*}
\delta-\sum_{i=1}^2\gamma_i-\alpha_1&=\left(\delta-\sum_{i=1}^2\gamma_i\right)\cdot(\underline{T}_1+1-T_1)-\gamma_2\cdot(T_1-\underline{T}_1).
\end{align*} 

Suppose the inequality holds true for $j\in\{2,\ldots,n-2\}$. Then
\begin{align*}
\delta-\sum_{i=1}^{j+1}\gamma_i-\alpha_j&=\left(\delta-\sum_{i=1}^{j+1}\gamma_i\right)\cdot(\underline{T}_j-T_j+T_{j-1}-\underline{T}_{j-1})\\
&\quad-\gamma_{j+1}\cdot(T_j-\underline{T}_j+\underline{T}_{j-1}+1-T_{j-1})+\max\left\{0,\delta-\sum_{i=1}^j\gamma_i-\alpha_{j-1}\right\}\\
&\geq\left(\delta-\sum_{i=1}^{j+1}\gamma_i\right)\cdot(\underline{T}_j+1-T_j)-\sum_{i=2}^{j+1}\gamma_i\cdot(T_{i-1}-\underline{T}_{i-1}),
\end{align*}
where the inequality follows from the induction hypothesis.

The above result implies
\begin{equation*}
\alpha_j\leq\left(\delta-\sum_{i=1}^{j+1}\gamma_i\right)\cdot(T_j-\underline{T}_j)+\sum_{i=2}^{j+1}\gamma_i\cdot(T_{i-1}-\underline{T}_{i-1})<\delta-\gamma_1.
\end{equation*}
This concludes the proof of the claim.
\end{proof}

Since $0\leq\alpha_j<\delta-\gamma_1$, we know that player $[\alpha_j+1,\underline{T}_j]$ sees a queue of size $\sum_{i=j+1}^n\gamma_i\cdot(T_j-T_{j-1})\cdot$  on the joint queue of $e_1, \ldots, e_j$ (and is indifferent between $e_1, \ldots, e_j$ and $e_{j+1}$). So, if player $[\alpha_j+\sum_{i=1}^{j}\gamma_i+1,\underline{T}_j]$ exists, then she  is the first player to choose $e_{j+1}$, and if player $[\alpha_j+\sum_{i=1}^{j}\gamma_i+1,\underline{T}_j]$ does not exist, then player $[\sum_{i=1}^j\gamma_i+1,\underline{T}_j+1]$ is the first player to choose $e_{j+1}$.

\medskip

\noindent
(iii) For all stages $t$ with $t>\underline{T}_{n-1}$, player $[1t]$ faces a latency of $\tau_n$ on each $e\in E$ and therefore is indifferent between $e_1$, \ldots, and $e_n$. So the first $\gamma_1$ players choose $e_1$, the second $\gamma_2$ players choose $e_2$, \ldots, and the last $\gamma_n$ players choose $e_n$. This implies no additional queue is created during this stage.
Since at most $\gamma$ players arrive at each stage, individual costs cannot become higher than $\tau_n$. 

\medskip

Third, notice that both in the socially optimal strategy and the worst equilibrium flows on each edge of the subnetwork $\mathcal{N}_{\delta}$ are eventually equal to capacity.
\end{proof}

\bibliographystyle{artbibst}
\bibliography{bibdynamiccongestion}


\newpage

\section{Supplementary material}

\subsection*{Section~\ref{se:topologies}}

\begin{proof}[Proof of Theorem~\ref{th:chain}]
\noindent \emph{Optimum}.
It is clear that a minimum cut of $\mathcal{N}_{\ser}(H)$ is a subnetwork $\mathcal{N}^{(h)}$ with minimum capacity. 
Denote $\mathcal{N}^{(*)}$  such a minimum cut of $\mathcal{N}_{\ser}(H)$ and  $\gamma^{(*)}$  its capacity  which is also the capacity of the whole network $\mathcal{N}_{\ser}(H)$. If the size of each generation is $\delta=\gamma^{(*)}$, then each subnetwork $\mathcal{N}^{(h)}$ has a capacity at least $\delta$. 
Thus, the planner can choose the global flow in order to minimize the cost on each subnetwork separately, which is clearly the best achievable total cost.

\medskip

\noindent\emph{Equilibrium}.
Consider again $\mathcal{N}^{(*)}$, a minimum cut of $\mathcal{N}_{\ser}(H)$ with capacity $\gamma^{(*)}$. 
First, we show that there is an equilibrium with corresponding latency equal to the sum of the worst latencies of each module. Consider the subnetwork $\mathcal{N}^{(1)}$. 
The worst equilibrium cost on $\mathcal{N}^{(1)}$ with corresponding strategy profile is given by  Theorem~\ref{th:parallelunif}.  Now, from the structure of this equilibrium, from some time onwards, there are $\gamma^{(*)}$ players outgoing from $\mathcal{N}^{(1)}$ at each stage. 
Since the output of $\mathcal{N}^{(1)}$ is at most $\gamma^{(*)}$ in earlier stages, the long-run worst equilibrium cost for the next modules is the same as under a constant inflow of  $\gamma^{(*)}$. 
Hence the latency of this equilibrium is the sum of the worst latencies of each module.

Second, we show that the sum of the worst latencies is the worst equilibrium latency for this network. 
If the inflow of each module is constant from some point onwards, then the above equilibrium is the worst equilibrium.  The UFR property assures that within each module an equilibrium is played. Therefore, on each (sub)module, costs are at most the maximum transit costs. 

However, a module $\mathcal{N}^{(h)}$ with a capacity larger than $\gamma^{(*)}$ is able to produce a non-uniform outflow. As long as this outflow is below $\gamma^{(*)}$, the latency of the following module $\mathcal{N}^{(h+1)}$ is at most the worst latency of $\mathcal{N}^{(h+1)}$. 
Let $t^{*}$ be the first period in which the outflow of $\mathcal{N}^{(h)}$ is above $\gamma^{(*)}$. 
Each player that departs after $\gamma^{(*)}$ players already departed (potentially) faces an additional queue in $\mathcal{N}^{(h+1)}$. However, in order to obtain an outflow above $\gamma^{(*)}$, players from two different generations must leave at the same moment. 
This implies that all players from the second generation have a latency which is one unit below the latency of the first generation. So the additional queue that will be created in $\mathcal{N}^{(h+1)}$ is offset by the decrease in latency in $\mathcal{N}^{(h)}$. 
A similar idea applies to subsequent periods in which this additional queue is maintained. Hence overall the equilibrium latency cannot be worse than the sum of the worst latencies. 
\end{proof}

\noindent
\emph{Details of Example~\ref{ex:chainofparallel}.}
Consider the chain-of-parallel network given in Figure~\ref{fi:chainofparallelSM}, where the capacity of each edge is 1 and the transit costs are indicated on the edges. The capacity $\gamma^{(*)}$ of the network is $2$.

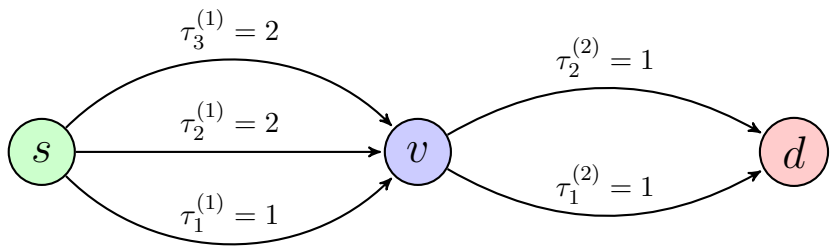
\begin{figure}[h]
\centering
\begin{tikzpicture}[->,>=stealth',shorten >=1pt,auto,node distance=5cm,
  thick,main node/.style={circle,fill=blue!20,draw,minimum size=25pt,font=\sffamily\Large\bfseries},source node/.style={circle,fill=green!20,draw,minimum size=25pt,font=\sffamily\Large\bfseries},dest node/.style={circle,fill=red!20,draw,minimum size=25pt,font=\sffamily\Large\bfseries}]

  \node[source node] (1) {$s$};
  \node[main node] (2) [right of=1] {$v$};
  \node[dest node] (3) [right of=2] {$d$};
  
  \path[every node/.style={font=\sffamily\small}]
    (1) edge [bend right = 45] node[above] {$\tau_{1}^{(1)}=1$} (2)
        edge node [above] {$\tau_{2}^{(1)}=2$} (2)
        edge [bend left = 45] node[above] {$\tau_{3}^{(1)}=2$} (2);

  \path[every node/.style={font=\sffamily\small}]
    (2) edge [bend right = 30] node[above] {$\tau_{1}^{(2)}=1$} (3)
        edge [bend left = 30] node[above] {$\tau_{2}^{(2)}=1$} (3);

\end{tikzpicture}
~\vspace{0cm}  
\caption{\label{fi:chainofparallelSM}Chain-of-parallel network.}
\end{figure}

\begin{figure}[H]
\centering
\tikzstyle{main node}=[circle,fill=blue!20,draw,minimum size=25pt,font=\sffamily\Large\bfseries]
\tikzstyle{source node}=[circle,fill=green!20,draw,minimum size=25pt,font=\sffamily\Large\bfseries]
\tikzstyle{dest node}=[circle,fill=red!20,draw,minimum size=25pt,font=\sffamily\Large\bfseries]
\tikzstyle{fake node}=[circle,minimum size=22pt]
\tikzstyle{small blue node}=[blue,fill=white,font=\sffamily\bfseries]
\tikzstyle{small red node}=[red,fill=white,font=\sffamily\bfseries]
\tikzstyle{small black node}=[black,fill=white,font=\sffamily\bfseries]
\tikzstyle{small node}=[font=\sffamily\bfseries]

\begin{tikzpicture}[->,>=stealth',shorten >=1pt,auto,node distance=4.5cm,thick]  

  \node[source node] (source) {$s$};
  \node[main node] (main) [right of=source] {$v$};
  \node[dest node] (dest) [right of=main] {$d$};
  
  \draw[-stealth] (source)[red, out=-30,in=-150] to node[above]{} node[below]{} (main);
  \draw[-stealth] (main)[red, out=-30,in=-150] to node[above]{} node[below]{} (dest);
\node at (12,0) {$e_{1}^{(1)},e_{1}^{(2)}$};

\end{tikzpicture}

\begin{tikzpicture}[->,>=stealth',shorten >=1pt,auto,node distance=4.5cm,thick]  

  \node[source node] (source) {$s$};
  \node[main node] (main) [right of=source] {$v$};
  \node[dest node] (dest) [right of=main] {$d$};
  
\draw[-stealth] (source)[blue, out=-30,in=-150] to node[above]{} node[below]{} (main);
\draw[-stealth] (main)[blue, out=30,in=150] to node[above]{} node[below]{} (dest);
\node at (12,0) {$e_{1}^{(1)},e_{2}^{(2)}$};
\end{tikzpicture}

\begin{tikzpicture}[->,>=stealth',shorten >=1pt,auto,node distance=4.5cm,thick]  
  
  \node[source node] (source) {$s$};
  \node[main node] (main) [right of=source] {$v$};
  \node[dest node] (dest) [right of=main] {$d$};

\draw[-stealth] (source)[green] to node[above]{} node[below]{} (main);
\draw[-stealth] (main)[green, out=-30,in=-150] to node[above]{} node[below]{} (dest);
\node at (12,0) {$e_{2}^{(1)},e_{1}^{(2)}$};
\end{tikzpicture}

\begin{tikzpicture}[->,>=stealth',shorten >=1pt,auto,node distance=4.5cm,thick]  
  
  \node[source node] (source) {$s$};
  \node[main node] (main) [right of=source] {$v$};
  \node[dest node] (dest) [right of=main] {$d$};

\draw[-stealth] (source)[brown] to node[above]{} node[below]{} (main);
\draw[-stealth] (main)[brown, out=30,in=150] to node[above]{} node[below]{} (dest);
\node at (12,0) {$e_{2}^{(1)},e_{2}^{(2)}$};
\end{tikzpicture}

\begin{tikzpicture}[->,>=stealth',shorten >=1pt,auto,node distance=4.5cm,thick]  
  
  \node[source node] (source) {$s$};
  \node[main node] (main) [right of=source] {$v$};
  \node[dest node] (dest) [right of=main] {$d$};

\draw[-stealth] (source)[out=30,in=150] to node[above]{} node[below]{} (main);
\draw[-stealth] (main)[out=30,in=150] to node[above]{} node[below]{} (dest);
\node at (12,0) {$e_{3}^{(1)},e_{2}^{(2)}$};
\end{tikzpicture}

~\vspace{0cm} 
\caption{Route color code.}
\label{fi:chainofparallelcolorcode} 
\end{figure}

Consider the following  strategy profile.
\begin{equation}\label{eq:1stequilibriumSM}
\sigma_{it}^{\Eq}=
\begin{cases}
e_{1}^{(1)} e_{1}^{(2)} &\text{ for } [it]=[11],\\
e_{1}^{(1)} e_{2}^{(2)} &\text{ for } [it]=[21],\\
e_{1}^{(1)} e_{1}^{(2)} &\text{ for } [it]=[1t] \text{ and } t\geq2,\\
e_{2}^{(1)} e_{2}^{(2)} &\text{ for } [it]=[2t] \text{ and } t\geq2.
\end{cases}
\end{equation}

\begin{figure}[H]
\centering
\tikzstyle{main node}=[circle,fill=blue!20,draw,minimum size=25pt,font=\sffamily\Large\bfseries]
\tikzstyle{source node}=[circle,fill=green!20,draw,minimum size=25pt,font=\sffamily\Large\bfseries]
\tikzstyle{dest node}=[circle,fill=red!20,draw,minimum size=25pt,font=\sffamily\Large\bfseries]
\tikzstyle{fake node}=[circle,minimum size=22pt]
\tikzstyle{small blue node}=[blue,fill=white,font=\sffamily\bfseries]
\tikzstyle{small red node}=[red,fill=white,font=\sffamily\bfseries]
\tikzstyle{small green node}=[green,fill=white,font=\sffamily\bfseries]
\tikzstyle{small black node}=[black,fill=white,font=\sffamily\bfseries]
\tikzstyle{small brown node}=[brown,fill=white,font=\sffamily\bfseries]
\tikzstyle{small node}=[font=\sffamily\bfseries]

\begin{tikzpicture}[->,>=stealth',shorten >=1pt,auto,node distance=4.5cm,thick]  

  \node[source node] (1) {$s$};
  \node[main node] (2) [right of=1] {$v$};
  \node[dest node] (3) [right of=2] {$d$};
 
\draw[-stealth] (source)[] to node[above]{} node[below]{} (main);
\draw[-stealth] (source)[out=40,in=140] to node[above]{} node[below]{} (main);
\draw[-stealth] (source)[out=-40,in=-140] to node[above]{} node[below]{} (main);

\draw[-stealth] (main)[out=40,in=140] to node[above]{} node[below]{} (dest);  
\draw[-stealth] (main)[out=-40,in=-140] to node[above]{} node[below]{} (dest);  

\node[small red node] at (5.3,-0.6) {${}_{[11]}$};
\node[small blue node] at (3.6,-0.6) {${}_{[21]}$};

\node at (12,0) {$t=2$};

\end{tikzpicture}

\begin{tikzpicture}[->,>=stealth',shorten >=1pt,auto,node distance=4.5cm,thick]  

  \node[source node] (1) {$s$};
  \node[main node] (2) [right of=1] {$v$};
  \node[dest node] (3) [right of=2] {$d$};
 
\draw[-stealth] (source)[] to node[above]{} node[below]{} (main);
\draw[-stealth] (source)[out=40,in=140] to node[above]{} node[below]{} (main);
\draw[-stealth] (source)[out=-40,in=-140] to node[above]{} node[below]{} (main);

\draw[-stealth] (main)[out=40,in=140] to node[above]{} node[below]{} (dest);  
\draw[-stealth] (main)[out=-40,in=-140] to node[above]{} node[below]{} (dest);  

\node[small red node] at (10,0) {${}_{[11]}$};
\node[small blue node] at (5.3,0.7) {${}_{[21]}$};
\node[small red node] at (3.6,-0.6) {${}_{[12]}$};
\node[small brown node] at (2.3,0) {${}_{[22]}$};

\node at (12,0) {$t=3$};

\end{tikzpicture}

\begin{tikzpicture}[->,>=stealth',shorten >=1pt,auto,node distance=4.5cm,thick]  

  \node[source node] (1) {$s$};
  \node[main node] (2) [right of=1] {$v$};
  \node[dest node] (3) [right of=2] {$d$};
 
\draw[-stealth] (source)[] to node[above]{} node[below]{} (main);
\draw[-stealth] (source)[out=40,in=140] to node[above]{} node[below]{} (main);
\draw[-stealth] (source)[out=-40,in=-140] to node[above]{} node[below]{} (main);

\draw[-stealth] (main)[out=40,in=140] to node[above]{} node[below]{} (dest);  
\draw[-stealth] (main)[out=-40,in=-140] to node[above]{} node[below]{} (dest);  

\node[small blue node] at (10,0) {${}_{[21]}$};
\node[small red node] at (5.3,-0.6) {${}_{[12]}$};
\node[small brown node] at (5.3,0.6) {${}_{[22]}$};
\node[small red node] at (3.6,-0.6) {${}_{[13]}$};
\node[small brown node] at (2.3,0) {${}_{[23]}$};

\node at (12,0) {$t=4$};

\end{tikzpicture}

\begin{tikzpicture}[->,>=stealth',shorten >=1pt,auto,node distance=4.5cm,thick]  

  \node[source node] (1) {$s$};
  \node[main node] (2) [right of=1] {$v$};
  \node[dest node] (3) [right of=2] {$d$};
 
\draw[-stealth] (source)[] to node[above]{} node[below]{} (main);
\draw[-stealth] (source)[out=40,in=140] to node[above]{} node[below]{} (main);
\draw[-stealth] (source)[out=-40,in=-140] to node[above]{} node[below]{} (main);

\draw[-stealth] (main)[out=40,in=140] to node[above]{} node[below]{} (dest);  
\draw[-stealth] (main)[out=-40,in=-140] to node[above]{} node[below]{} (dest);  

\node[small brown node] at (10.3,0) {${}_{[22]\textcolor{red}{[12]}}$};
\node[small red node] at (5.3,-0.6) {${}_{[13]}$};
\node[small brown node] at (5.3,0.6) {${}_{[23]}$};
\node[small red node] at (3.6,-0.6) {${}_{[14]}$};
\node[small brown node] at (2.3,0) {${}_{[24]}$};

\node at (12,0) {$t=5$};

\end{tikzpicture}

\begin{tikzpicture}[->,>=stealth',shorten >=1pt,auto,node distance=4.5cm,thick]  

  \node[source node] (1) {$s$};
  \node[main node] (2) [right of=1] {$v$};
  \node[dest node] (3) [right of=2] {$d$};
 
\draw[-stealth] (source)[] to node[above]{} node[below]{} (main);
\draw[-stealth] (source)[out=40,in=140] to node[above]{} node[below]{} (main);
\draw[-stealth] (source)[out=-40,in=-140] to node[above]{} node[below]{} (main);

\draw[-stealth] (main)[out=40,in=140] to node[above]{} node[below]{} (dest);  
\draw[-stealth] (main)[out=-40,in=-140] to node[above]{} node[below]{} (dest);  

\node[small brown node] at (10.3,0) {${}_{[23]\textcolor{red}{[13]}}$};
\node[small red node] at (5.3,-0.6) {${}_{[14]}$};
\node[small brown node] at (5.3,0.6) {${}_{[24]}$};
\node[small red node] at (3.6,-0.6) {${}_{[15]}$};
\node[small brown node] at (2.3,0) {${}_{[25]}$};

\node at (12,0) {$t=6$};

\end{tikzpicture}

~\vspace{0cm} 
\caption{Example~\ref{ex:chainofparallel}, equilibrium \eqref{eq:1stequilibriumSM}}
\label{fi:chainofparallel1steq}
\end{figure}

Figure~\ref{fi:chainofparallel1steq} shows that this is an equilibrium. The first player $[11]$ takes the fastest route $e_{1}^{(1)} e_{1}^{(2)}$. The second player $[21]$ cannot pay less than a total cost of 3. She does so by taking $e_{1}^{(1)}$ first and queuing after $[11]$ (a cost of 2), then taking $e_{2}^{(2)}$. This choice of the first generation leaves a queue of size 1 on edge  $e_{1}^{(1)} $ for the next generation. The next two players have to pay at least 3 each. They do so by choosing $e_{1}^{(1)} e_{1}^{(2)}$ and $e_{2}^{(1)} e_{2}^{(2)}$. The queue on edge $e_{1}^{(1)} $ is thus re-created for the next generation.

The same average total latency of 6 can be achieved with the following periodic equilibrium strategy profile (see Figure~\ref{fi:chainofparallel2ndeq}).

\begin{equation}\label{eq:2ndequilibriumSM}
\tilde{\sigma}_{it}^{\Eq}=
\begin{cases}
e_{1}^{(1)} e_{1}^{(2)} &\text{ for } [it]=[1t] \text{ and } t \text{  odd},\\
e_{1}^{(1)} e_{2}^{(2)} &\text{ for } [it]=[2t] \text{ and } t \text{  odd},\\
e_{2}^{(1)} e_{1}^{(2)} &\text{ for } [it]=[1t] \text{ and } t \text{  even},\\
e_{3}^{(1)} e_{2}^{(2)} &\text{ for } [it]=[2t] \text{ and } t \text{  even}.
\end{cases}
\end{equation}

\begin{figure}[H]
\centering
\tikzstyle{main node}=[circle,fill=blue!20,draw,minimum size=25pt,font=\sffamily\Large\bfseries]
\tikzstyle{source node}=[circle,fill=green!20,draw,minimum size=25pt,font=\sffamily\Large\bfseries]
\tikzstyle{dest node}=[circle,fill=red!20,draw,minimum size=25pt,font=\sffamily\Large\bfseries]
\tikzstyle{fake node}=[circle,minimum size=22pt]
\tikzstyle{small blue node}=[blue,fill=white,font=\sffamily\bfseries]
\tikzstyle{small red node}=[red,fill=white,font=\sffamily\bfseries]
\tikzstyle{small green node}=[green,fill=white,font=\sffamily\bfseries]
\tikzstyle{small black node}=[black,fill=white,font=\sffamily\bfseries]
\tikzstyle{small node}=[font=\sffamily\bfseries]

\begin{tikzpicture}[->,>=stealth',shorten >=1pt,auto,node distance=4.5cm,thick]  

  \node[source node] (1) {$s$};
  \node[main node] (2) [right of=1] {$v$};
  \node[dest node] (3) [right of=2] {$d$};
 
\draw[-stealth] (source)[] to node[above]{} node[below]{} (main);
\draw[-stealth] (source)[out=40,in=140] to node[above]{} node[below]{} (main);
\draw[-stealth] (source)[out=-40,in=-140] to node[above]{} node[below]{} (main);

\draw[-stealth] (main)[out=40,in=140] to node[above]{} node[below]{} (dest);  
\draw[-stealth] (main)[out=-40,in=-140] to node[above]{} node[below]{} (dest);  

\node[small red node] at (5.3,-0.6) {${}_{[11]}$};
\node[small blue node] at (3.6,-0.6) {${}_{[21]}$};

\node at (12,0) {$t=2$};

\end{tikzpicture}

\begin{tikzpicture}[->,>=stealth',shorten >=1pt,auto,node distance=4.5cm,thick]  

  \node[source node] (1) {$s$};
  \node[main node] (2) [right of=1] {$v$};
  \node[dest node] (3) [right of=2] {$d$};
 
\draw[-stealth] (source)[] to node[above]{} node[below]{} (main);
\draw[-stealth] (source)[out=40,in=140] to node[above]{} node[below]{} (main);
\draw[-stealth] (source)[out=-40,in=-140] to node[above]{} node[below]{} (main);

\draw[-stealth] (main)[out=40,in=140] to node[above]{} node[below]{} (dest);  
\draw[-stealth] (main)[out=-40,in=-140] to node[above]{} node[below]{} (dest);  

\node[small red node] at (10,0) {${}_{[11]}$};
\node[small blue node] at (5.3,0.7) {${}_{[21]}$};
\node[small green node] at (2.3,0) {${}_{[12]}$};
\node[small black node] at (2.3,1) {${}_{[22]}$};

\node at (12,0) {$t=3$};

\end{tikzpicture}

\begin{tikzpicture}[->,>=stealth',shorten >=1pt,auto,node distance=4.5cm,thick]  

  \node[source node] (1) {$s$};
  \node[main node] (2) [right of=1] {$v$};
  \node[dest node] (3) [right of=2] {$d$};
 
\draw[-stealth] (source)[] to node[above]{} node[below]{} (main);
\draw[-stealth] (source)[out=40,in=140] to node[above]{} node[below]{} (main);
\draw[-stealth] (source)[out=-40,in=-140] to node[above]{} node[below]{} (main);

\draw[-stealth] (main)[out=40,in=140] to node[above]{} node[below]{} (dest);  
\draw[-stealth] (main)[out=-40,in=-140] to node[above]{} node[below]{} (dest);  

\node[small green node] at (5.5,-0.6) {${}_{\textcolor{red}{[13]}[12]}$};
\node[small blue node] at (10,0) {${}_{[21]}$};
\node[small black node] at (5.3,0.7) {${}_{[22]}$};
\node[small red node] at (3.6,-0.6) {${}_{[13]}$};
\node[small blue node] at (3.6,-0.6) {${}_{[23]}$};

\node at (12,0) {$t=4$};

\end{tikzpicture}

\begin{tikzpicture}[->,>=stealth',shorten >=1pt,auto,node distance=4.5cm,thick]  

  \node[source node] (1) {$s$};
  \node[main node] (2) [right of=1] {$v$};
  \node[dest node] (3) [right of=2] {$d$};
 
\draw[-stealth] (source)[] to node[above]{} node[below]{} (main);
\draw[-stealth] (source)[out=40,in=140] to node[above]{} node[below]{} (main);
\draw[-stealth] (source)[out=-40,in=-140] to node[above]{} node[below]{} (main);

\draw[-stealth] (main)[out=40,in=140] to node[above]{} node[below]{} (dest);  
\draw[-stealth] (main)[out=-40,in=-140] to node[above]{} node[below]{} (dest);  

\node[small black node] at (10.3,0) {${}_{[22]\textcolor{green}{[12]}}$};
\node[small red node] at (8.1,-0.7) {${}_{[13]}$};
\node[small blue node] at (5.3,0.7) {${}_{[23]}$};
\node[small green node] at (2.3,0) {${}_{[14]}$};
\node[small black node] at (2.3,1) {${}_{[24]}$};

\node at (12,0) {$t=5$};

\end{tikzpicture}

\begin{tikzpicture}[->,>=stealth',shorten >=1pt,auto,node distance=4.5cm,thick]  

  \node[source node] (1) {$s$};
  \node[main node] (2) [right of=1] {$v$};
  \node[dest node] (3) [right of=2] {$d$};
 
\draw[-stealth] (source)[] to node[above]{} node[below]{} (main);
\draw[-stealth] (source)[out=40,in=140] to node[above]{} node[below]{} (main);
\draw[-stealth] (source)[out=-40,in=-140] to node[above]{} node[below]{} (main);

\draw[-stealth] (main)[out=40,in=140] to node[above]{} node[below]{} (dest);  
\draw[-stealth] (main)[out=-40,in=-140] to node[above]{} node[below]{} (dest);  

\node[small blue node] at (10.3,0) {${}_{[23]\textcolor{red}{[13]}}$};
\node[small green node] at (5.5,-0.6) {${}_{\textcolor{red}{[15]}[14]}$};
\node[small black node] at (5.3,0.7) {${}_{[24]}$};
\node[small blue node] at (3.6,-0.6) {${}_{[25]}$};

\node at (12,0) {$t=6$};

\end{tikzpicture}

\begin{tikzpicture}[->,>=stealth',shorten >=1pt,auto,node distance=4.5cm,thick]  

  \node[source node] (1) {$s$};
  \node[main node] (2) [right of=1] {$v$};
  \node[dest node] (3) [right of=2] {$d$};
 
\draw[-stealth] (source)[] to node[above]{} node[below]{} (main);
\draw[-stealth] (source)[out=40,in=140] to node[above]{} node[below]{} (main);
\draw[-stealth] (source)[out=-40,in=-140] to node[above]{} node[below]{} (main);

\draw[-stealth] (main)[out=40,in=140] to node[above]{} node[below]{} (dest);  
\draw[-stealth] (main)[out=-40,in=-140] to node[above]{} node[below]{} (dest);  

\node[small black node] at (10.3,0) {${}_{[24]\textcolor{green}{[14]}}$};
\node[small red node] at (8.1,-0.7) {${}_{[15]}$};
\node[small blue node] at (5.3,0.7) {${}_{[25]}$};
\node[small green node] at (2.3,0) {${}_{[16]}$};
\node[small black node] at (2.3,1) {${}_{[26]}$};

\node at (12,0) {$t=7$};

\end{tikzpicture}

~\vspace{0cm} 
\caption{Example~\ref{ex:chainofparallel}, equilibrium \eqref{eq:2ndequilibriumSM}}
\label{fi:chainofparallel2ndeq}
\end{figure}

Under this profile, the second player of each odd generation creates a queue on $e_{1}^{(1)}$.  As both players of the even generation take a long route, none of these two players waits in a queue and thus the queue on $e_{1}^{(1)}$ disappears. Since the first player in the following odd generation uses the fast route $e_{1}^{(1)}$  again, she arrives at $v$ at the same time as the previous two players. Therefore, she waits in the queue on $e_{1}^{(2)}$. So, the first player of each odd generation waits in the queue on  $e_{1}^{(2)}$, and the second player waits on $e_{1}^{(1)}$ (except for the very first player).

\subsection*{Section~\ref{se:anarchy}}

\bigskip
\noindent
\emph{Details of Example~\ref{ex:strictineq}.}
Consider the series-parallel network in Figure~\ref{fi:seriesparallelSM} where the  associated free-flow transit costs and capacities are given. The network has two minimum cuts $\{e_{1},e_{4}\}$ and $\{e_{2},e_{3}, e_{4}\}$ with a capacity of 3, and each edge is part of one cut.

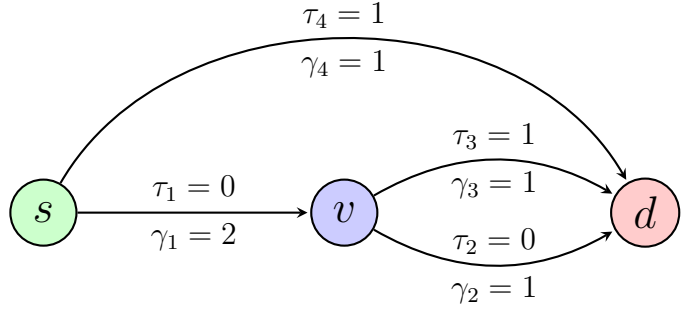
\begin{figure}[H]
\centering
\begin{tikzpicture}[->,>=stealth',shorten >=1pt,auto,node distance=4.5cm,
  thick,main node/.style={circle,fill=blue!20,draw,minimum size=25pt,font=\sffamily\Large\bfseries},source node/.style={circle,fill=green!20,draw,minimum size=25pt,font=\sffamily\Large\bfseries},dest node/.style={circle,fill=red!20,draw,minimum size=25pt,font=\sffamily\Large\bfseries}]  
  
\node[source node] (source) at (0,0) {$s$};
\node[main node] (main) at (4,0) {$v$};
\node[dest node] (dest) at (8,0) {$d$};
\draw[-stealth] (source)[out=60,in=120] to node[above]{$\tau_{4}=1$} node[below]{$\gamma_{4}=1$} (dest);
\draw[-stealth] (source) to node[above]{$\tau_{1}=0$} node[below]{$\gamma_{1}=2$} (main);
\draw[-stealth] (main)[out=30,in=150] to node[above]{$\tau_{3}=1$} node[below]{$\gamma_{3}=1$} (dest);
\draw[-stealth] (main)[out=-30,in=-150] to node[above]{$\tau_{2}=0$} node[below]{$\gamma_{2}=1$} (dest);
\end{tikzpicture}
\caption{Series-parallel network where each cut is a minimum cut.}
\label{fi:seriesparallelSM}
\end{figure}

Consider the following equilibrium strategy.
\begin{equation}\label{eq:equilMackoSM}
\sigma_{it}^{\Eq}=
\begin{cases}
e_{1} e_{2} &\text{ for } [it]=[11],\\
e_{1} e_{3} &\text{ for } [it]=[21],\\
e_{1} e_{2} &\text{ for } [it]=[31],\\
e_{1} e_{2} &\text{ for } [it]=[1t], t\geq2,\\
e_{4} &\text{ for } [it]=[2t], t\geq2,\\
e_{1} e_{3} &\text{ for } [it]=[3t], t\geq2.
\end{cases}
\end{equation}
To verify that this is indeed an equilibrium, the reader is referred to Figures~\ref{fi:seriesparallelcolorcode} and \ref{fi:Macko}.

\begin{figure}[H]
\centering
\tikzstyle{main node}=[circle,fill=blue!20,draw,minimum size=25pt,font=\sffamily\Large\bfseries]
\tikzstyle{source node}=[circle,fill=green!20,draw,minimum size=25pt,font=\sffamily\Large\bfseries]
\tikzstyle{dest node}=[circle,fill=red!20,draw,minimum size=25pt,font=\sffamily\Large\bfseries]
\tikzstyle{fake node}=[circle,minimum size=22pt]
\tikzstyle{small blue node}=[blue,fill=white,font=\sffamily\bfseries]
\tikzstyle{small red node}=[red,fill=white,font=\sffamily\bfseries]
\tikzstyle{small black node}=[black,fill=white,font=\sffamily\bfseries]
\tikzstyle{small node}=[font=\sffamily\bfseries]

\begin{tikzpicture}[->,>=stealth',shorten >=1pt,auto,node distance=4.5cm,thick]  
  
\node[source node] (source) at (0,0) {$s$};
\node[main node] (main) at (4,0) {$v$};
\node[dest node] (dest) at (8,0) {$d$};
\node[fake node] (source plus) at (0,0.1) {};
\node[fake node] (source minus) at (0,-0.1) {};
\node[fake node] (main plus) at (4,0.1) {};
\node[fake node] (main minus) at (4,-0.1) {};

\draw[-stealth] (source)[red] to node[above]{} node[below]{} (main);
\draw[-stealth] (main)[red, out=-30,in=-150] to node[above]{} node[below]{} (dest);
\node at (12,0) {$e_{1},e_{2}$};
\end{tikzpicture}

\begin{tikzpicture}[->,>=stealth',shorten >=1pt,auto,node distance=4.5cm,thick]  
  
\node[source node] (source) at (0,0) {$s$};
\node[main node] (main) at (4,0) {$v$};
\node[dest node] (dest) at (8,0) {$d$};
\node[fake node] (source plus) at (0,0.1) {};
\node[fake node] (source minus) at (0,-0.1) {};
\node[fake node] (main plus) at (4,0.1) {};
\node[fake node] (main minus) at (4,-0.1) {};

\draw[-stealth] (source)[blue] to node[above]{} node[below]{} (main);

\draw[-stealth] (main)[blue, out=30,in=150] to node[above]{} node[below]{} (dest);

\node at (12,0) {$e_{1},e_{3}$};
\end{tikzpicture}
\begin{tikzpicture}[->,>=stealth',shorten >=1pt,auto,node distance=4.5cm,thick]  
  
\node[source node] (source) at (0,0) {$s$};
\node[main node] (main) at (4,0) {$v$};
\node[dest node] (dest) at (8,0) {$d$};
\node[fake node] (source plus) at (0,0.1) {};
\node[fake node] (source minus) at (0,-0.1) {};
\node[fake node] (main plus) at (4,0.1) {};
\node[fake node] (main minus) at (4,-0.1) {};

\draw[-stealth] (source)[out=60,in=120] to node[above]{} node[below]{} (dest);

\node at (12,0) {$e_{4}$};
\end{tikzpicture}

\caption{\label{fi:seriesparallelcolorcode}Route color code}

\end{figure}
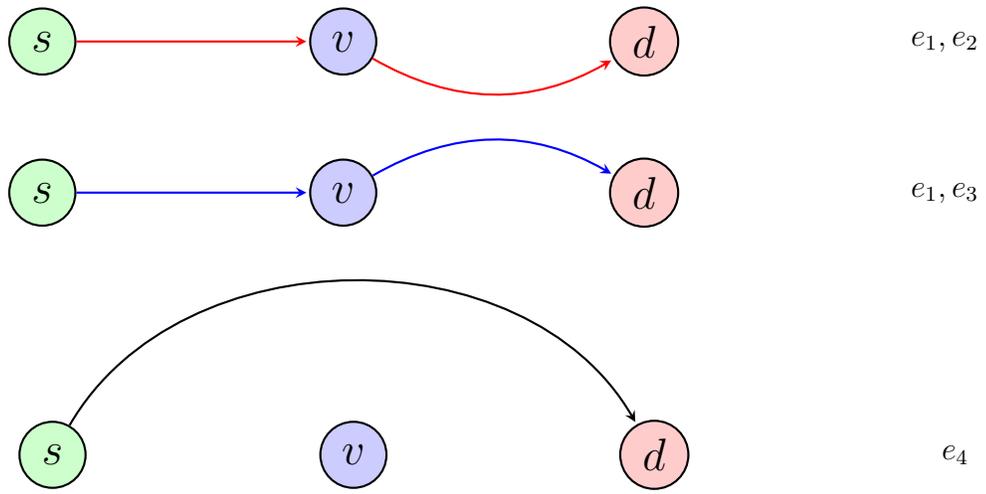

\begin{figure}[H]
\centering
\tikzstyle{main node}=[circle,fill=blue!20,draw,minimum size=25pt,font=\sffamily\Large\bfseries]
\tikzstyle{source node}=[circle,fill=green!20,draw,minimum size=25pt,font=\sffamily\Large\bfseries]
\tikzstyle{dest node}=[circle,fill=red!20,draw,minimum size=25pt,font=\sffamily\Large\bfseries]
\tikzstyle{fake node}=[circle,minimum size=22pt]
\tikzstyle{small blue node}=[blue,fill=white,font=\sffamily\bfseries]
\tikzstyle{small red node}=[red,fill=white,font=\sffamily\bfseries]
\tikzstyle{small black node}=[black,fill=white,font=\sffamily\bfseries]
\tikzstyle{small node}=[font=\sffamily\bfseries]

\begin{tikzpicture}[->,>=stealth',shorten >=1pt,auto,node distance=4.5cm,thick]  
  
\node[source node] (source) at (0,0) {$s$};
\node[main node] (main) at (4,0) {$v$};
\node[dest node] (dest) at (8,0) {$d$};
\draw[-stealth] (source)[out=60,in=120] to node[above]{} node[below]{} (dest);
\draw[-stealth] (source) to node[above]{} node[below]{} (main);
\draw[-stealth] (main)[out=30,in=150] to node[above]{} node[below]{} (dest);
\draw[-stealth] (main)[out=-30,in=-150] to node[above]{} node[below]{} (dest);
\node[small red node] at (3,0) {${}_{[31]}$};
\node[small blue node] at (4.8,.5) {${}_{[21]}$};
\node[small red node] at (9,0) {${}_{[11]}$};
\node at (12,0) {$t=1$};
\end{tikzpicture}

\begin{tikzpicture}[->,>=stealth',shorten >=1pt,auto,node distance=4.5cm,thick]  
\node[source node] (source) at (0,0) {$s$};
\node[main node] (main) at (4,0) {$v$};
\node[dest node] (dest) at (8,0) {$d$};
\draw[-stealth] (source)[out=60,in=120] to node[above]{} node[below]{} (dest);
\draw[-stealth] (source) to node[above]{} node[below]{} (main);
\draw[-stealth] (main)[out=30,in=150] to node[above]{} node[below]{} (dest);
\draw[-stealth] (main)[out=-30,in=-150] to node[above]{} node[below]{} (dest);
\node[small blue node] at (3,0) {${}_{[32]}$};
\node[small black node] at (0.7,1) {${}_{[22]}$};
\node[small red node] at (7,-.5) {${}_{[12]}$};
\node[small node] at (9.2,0) {${}_{\textcolor{red}{[31]},\textcolor{blue}{[21]}}$};
\node at (12,0) {$t=2$};
\end{tikzpicture}

\begin{tikzpicture}[->,>=stealth',shorten >=1pt,auto,node distance=4.5cm,thick]   
\node[source node] (source) at (0,0) {$s$};
\node[main node] (main) at (4,0) {$v$};
\node[dest node] (dest) at (8,0) {$d$};
\draw[-stealth] (source)[out=60,in=120] to node[above]{} node[below]{} (dest);
\draw[-stealth] (source) to node[above]{} node[below]{} (main);
\draw[-stealth] (main)[out=30,in=150] to node[above]{} node[below]{} (dest);
\draw[-stealth] (main)[out=-30,in=-150] to node[above]{} node[below]{} (dest);
\node[small blue node] at (3,0) {${}_{[33]}$};
\node[small black node] at (0.7,1) {${}_{[23]}$};
\node[small red node] at (7,-.5) {${}_{[13]}$};
\node[small blue node] at (5,.5) {${}_{[32]}$};
\node[small node] at (9.2,0) {${}_{[22],\textcolor{red}{[12]}}$};
\node at (12,0) {$t=3$};
\end{tikzpicture}

\begin{tikzpicture}[->,>=stealth',shorten >=1pt,auto,node distance=4.5cm,thick]  
  
\node[source node] (source) at (0,0) {$s$};
\node[main node] (main) at (4,0) {$v$};
\node[dest node] (dest) at (8,0) {$d$};
\draw[-stealth] (source)[out=60,in=120] to node[above]{} node[below]{} (dest);
\draw[-stealth] (source) to node[above]{} node[below]{} (main);
\draw[-stealth] (main)[out=30,in=150] to node[above]{} node[below]{} (dest);
\draw[-stealth] (main)[out=-30,in=-150] to node[above]{} node[below]{} (dest);
\node[small blue node] at (3,0) {${}_{[34]}$};
\node[small black node] at (0.7,1) {${}_{[24]}$};
\node[small red node] at (7,-.5) {${}_{[14]}$};
\node[small blue node] at (5,.5) {${}_{[33]}$};
\node[small node] at (9.5,0) {${}_{[23],\textcolor{red}{[13]},\textcolor{blue}{[32]}}$};
\node at (12,0) {$t=4$};
\end{tikzpicture}

\begin{tikzpicture}[->,>=stealth',shorten >=1pt,auto,node distance=4.5cm,thick]  
  
\node[source node] (source) at (0,0) {$s$};
\node[main node] (main) at (4,0) {$v$};
\node[dest node] (dest) at (8,0) {$d$};
\draw[-stealth] (source)[out=60,in=120] to node[above]{} node[below]{} (dest);
\draw[-stealth] (source) to node[above]{} node[below]{} (main);
\draw[-stealth] (main)[out=30,in=150] to node[above]{} node[below]{} (dest);
\draw[-stealth] (main)[out=-30,in=-150] to node[above]{} node[below]{} (dest);
\node[small blue node] at (3,0) {${}_{[35]}$};
\node[small black node] at (0.7,1) {${}_{[25]}$};
\node[small red node] at (7,-.5) {${}_{[15]}$};
\node[small blue node] at (5,.5) {${}_{[34]}$};
\node[small node] at (9.5,0) {${}_{[24],\textcolor{red}{[14]},\textcolor{blue}{[33]}}$};
\node at (12,0) {$t=5$};
\end{tikzpicture} 

\caption{Equilibrium~\eqref{eq:equilMackoSM}.}
\label{fi:Macko}
\end{figure}

\bigskip
\noindent
\emph{Details of Example~\ref{ex:Wheatstone}.}
Consider the Wheatstone network in Figure~\ref{fi:WheatstoneSM} with associated free-flow transit costs and capacity equal to $1$ for all edges. The capacity of the network is $2$.
\begin{figure}[h]
\centering
\begin{tikzpicture}[->,>=stealth',shorten >=1pt,auto,node distance=4.5cm,
  thick,main node/.style={circle,fill=blue!20,draw,minimum size=25pt,font=\sffamily\Large\bfseries},source node/.style={circle,fill=green!20,draw,minimum size=25pt,font=\sffamily\Large\bfseries},dest node/.style={circle,fill=red!20,draw,minimum size=25pt,font=\sffamily\Large\bfseries}, scale=.8]  

  \node[source node] (1)  at (0,3) {$s$};
  \node[main node] (2) at (-2.5,0) {$v$};
  \node[dest node] (3) at (0,-3) {$d$};
  \node[main node] (4) at (2.5,0) {$w$};

  \path[every node/.style={font=\sffamily\small}]
    (1) edge node [right] {$\tau_{2}=1$} (4)
        edge node[left,color=black] {$\tau_{1}=0$}  (2)
    (2) edge node [left] {$\tau_{4}=1$} (3)
        edge node {$\tau_{3}=0$} (4)
    (4) edge node [right,color=black] {$\tau_{5}=0$}  (3);

\end{tikzpicture}
~\vspace{0cm} \caption{\label{fi:WheatstoneSM} Wheatstone network.}
\end{figure}
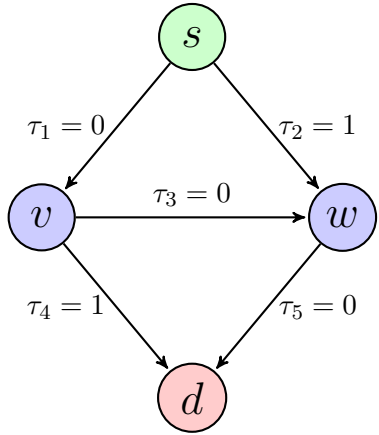


\begin{figure}[H]
\centering
\tikzstyle{main node}=[circle,fill=blue!20,draw,minimum size=25pt,font=\sffamily\Large\bfseries]
\tikzstyle{source node}=[circle,fill=green!20,draw,minimum size=25pt,font=\sffamily\Large\bfseries]
\tikzstyle{dest node}=[circle,fill=red!20,draw,minimum size=25pt,font=\sffamily\Large\bfseries]
\tikzstyle{fake node}=[circle,minimum size=22pt]
\tikzstyle{small blue node}=[blue,fill=white,font=\sffamily\bfseries]
\tikzstyle{small red node}=[red,fill=white,font=\sffamily\bfseries]
\tikzstyle{small black node}=[black,fill=white,font=\sffamily\bfseries]
\tikzstyle{small node}=[font=\sffamily\bfseries]

\begin{tikzpicture}[->,>=stealth',shorten >=1pt,auto,node distance=4.5cm,thick,scale=.8] 

  \node[source node] (1)  at (0,3) {$s$};
  \node[main node] (2) at (-2.5,0) {$v$};
  \node[dest node] (3) at (0,-3) {$d$};
  \node[main node] (4) at (2.5,0) {$w$};
  \node[fake node] (5) at (-.2,3.1) {};
  \node[fake node] (6) at (0,3) {};
  \node[fake node] (7) at (-2.7,0.1) {};
  \node[fake node] (8) at (-2.5,0) {};
  \node[fake node] (9) at (-.1,-3) {};
  \node[fake node] (10) at (.1,-3.1) {};
  \node[fake node] (11) at (2.6,-.1) {};
  \node[fake node] (12) at (2.4,0) {};
    
\draw[-stealth] (1)[] to  (2);

\draw[-stealth] (2)[] to  (3);

\node at (0,-4.5) {$e_{1},e_{4}$};
\node at (0,-5) {};
\end{tikzpicture}
\qquad
\begin{tikzpicture}[->,>=stealth',shorten >=1pt,auto,node distance=4.5cm,thick,scale=.8] 

  \node[source node] (1)  at (0,3) {$s$};
  \node[main node] (2) at (-2.5,0) {$v$};
  \node[dest node] (3) at (0,-3) {$d$};
  \node[main node] (4) at (2.5,0) {$w$};
  \node[fake node] (5) at (-.2,3.1) {};
  \node[fake node] (6) at (0,3) {};
  \node[fake node] (7) at (-2.7,0.1) {};
  \node[fake node] (8) at (-2.5,0) {};
  \node[fake node] (9) at (-.1,-3) {};
  \node[fake node] (10) at (.1,-3.1) {};
  \node[fake node] (11) at (2.6,-.1) {};
  \node[fake node] (12) at (2.4,0) {};
    
\draw[-stealth] (1)[red] to  (2);
\draw[-stealth] (2)[red] to  (4);
\draw[-stealth] (4)[red] to  (3);

\node at (0,-4.5) {$e_{1},e_{3},e_{5}$};
\node at (0,-5) {};
\end{tikzpicture}
\qquad
\begin{tikzpicture}[->,>=stealth',shorten >=1pt,auto,node distance=4.5cm,thick,scale=.8] 

  \node[source node] (1)  at (0,3) {$s$};
  \node[main node] (2) at (-2.5,0) {$v$};
  \node[dest node] (3) at (0,-3) {$d$};
  \node[main node] (4) at (2.5,0) {$w$};
  \node[fake node] (5) at (-.2,3.1) {};
  \node[fake node] (6) at (0,3) {};
  \node[fake node] (7) at (-2.7,0.1) {};
  \node[fake node] (8) at (-2.5,0) {};
  \node[fake node] (9) at (-.1,-3) {};
  \node[fake node] (10) at (.1,-3.1) {};
  \node[fake node] (11) at (2.6,-.1) {};
  \node[fake node] (12) at (2.4,0) {};
    
\draw[-stealth] (1)[blue] to  (4);
\draw[-stealth] (4)[blue] to  (3);

\node at (0,-4.5) {$e_{2},e_{5}$};
\node at (0,-5) {};
\end{tikzpicture}
~\vspace{0cm} \caption{\label{fi:Wheatstonecolorcode} Wheatstone network color code.}\end{figure}
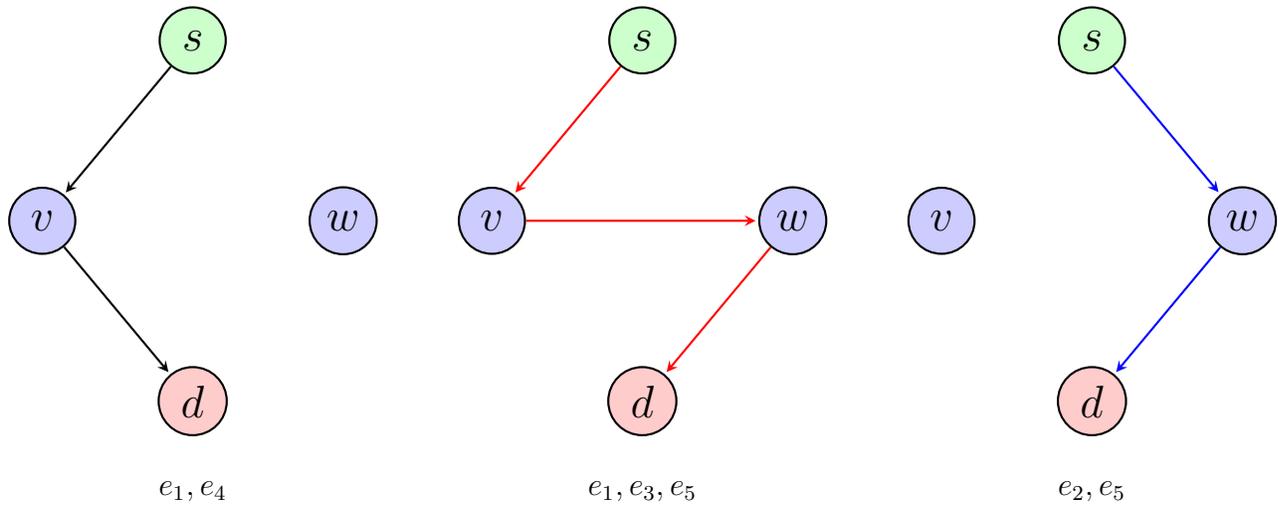

Consider the following equilibrium strategy 
\begin{equation}\label{eq:WheatstoneequilSM}
\sigma_{it}^{\Eq}=\begin{cases}
e_{1} e_{3} e_{5} &\text{ for } [it]=[i1],i=1,2,\\
e_{1} e_{3} e_{5} &\text{ for } [it]=[12],\\
e_{2} e_{5} &\text{ for } [it]=[22],\\
e_{1} e_{3} e_{5} &\text{ for } [it]=[13],\\
e_{1} e_{4} &\text{ for } [it]=[23],\\
e_{2} e_{5} &\text{ for } [it]=[14],\\
e_{1} e_{3} e_{5} &\text{ for } [it]=[24],\\
e_{1} e_{4} &\text{ for } [it]=[1t],t\geq5,\\
e_{2} e_{5} &\text{ for } [it]=[2t],t\geq5.
\end{cases}
\end{equation}
We refer to Figure \ref{fi:Wheatstoneequilibrium} to check that this is indeed an equilibrium.

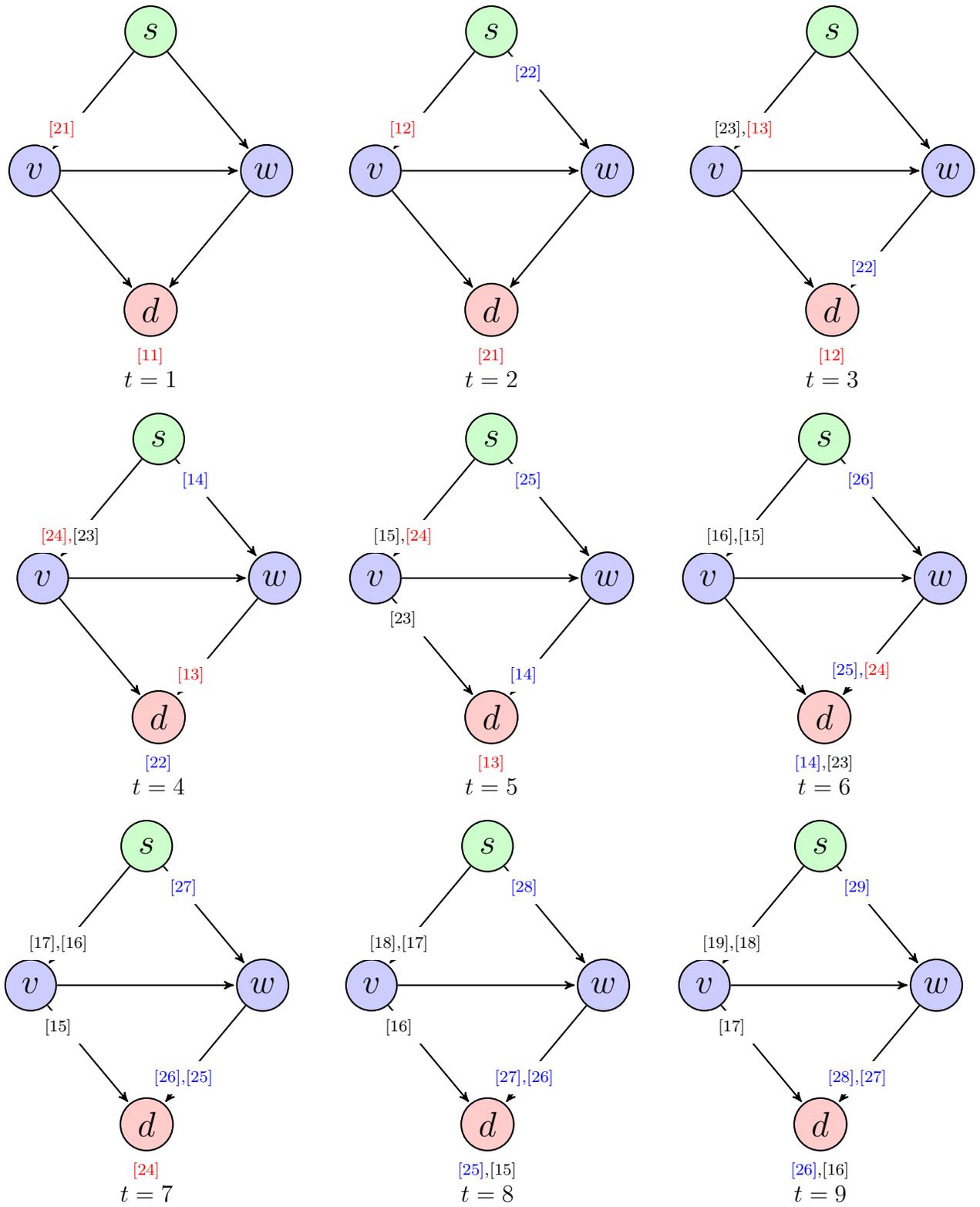
\begin{figure}[H]
\centering
\tikzstyle{main node}=[circle,fill=blue!20,draw,minimum size=25pt,font=\sffamily\Large\bfseries]
\tikzstyle{source node}=[circle,fill=green!20,draw,minimum size=25pt,font=\sffamily\Large\bfseries]
\tikzstyle{dest node}=[circle,fill=red!20,draw,minimum size=25pt,font=\sffamily\Large\bfseries]
\tikzstyle{fake node}=[circle,minimum size=22pt]
\tikzstyle{small blue node}=[blue,fill=white,font=\sffamily\bfseries]
\tikzstyle{small red node}=[red,fill=white,font=\sffamily\bfseries]
\tikzstyle{small black node}=[black,fill=white,font=\sffamily\bfseries]
\tikzstyle{small node}=[font=\sffamily\bfseries]

\begin{tikzpicture}[->,>=stealth',shorten >=1pt,auto,node distance=4.5cm,thick,scale=.8] 

  \node[source node] (1)  at (0,3) {$s$};
  \node[main node] (2) at (-2.5,0) {$v$};
  \node[dest node] (3) at (0,-3) {$d$};
  \node[main node] (4) at (2.5,0) {$w$};

  \path[every node/.style={font=\sffamily\small}]
    (1) edge node [right] {} (4)
        edge node[left,color=black] {}  (2)
    (2) edge node [left] {} (3)
        edge node {} (4)
    (4) edge node [right,color=black] {}  (3);
    
\node[small red node] at (0,-4) {${}_{[11]}$};
\node[small red node] at (-1.9,.9) {${}_{[21]}$};
\node at (0,-4.5) {$t=1$};
\node at (0,-5) {};
\end{tikzpicture}
\qquad
\begin{tikzpicture}[->,>=stealth',shorten >=1pt,auto,node distance=4.5cm,thick,scale=.8] 

  \node[source node] (1)  at (0,3) {$s$};
  \node[main node] (2) at (-2.5,0) {$v$};
  \node[dest node] (3) at (0,-3) {$d$};
  \node[main node] (4) at (2.5,0) {$w$};

  \path[every node/.style={font=\sffamily\small}]
    (1) edge node [right] {} (4)
        edge node[left,color=black] {}  (2)
    (2) edge node [left] {} (3)
        edge node {} (4)
    (4) edge node [right,color=black] {}  (3);
    
\node[small blue node] at (.8,2.1) {${}_{[22]}$};
\node[small red node] at (-1.9,.9) {${}_{[12]}$};
\node[small red node] at (0,-4) {${}_{[21]}$};
\node at (0,-4.5) {$t=2$};
\node at (0,-5) {};
\end{tikzpicture}
\qquad
\begin{tikzpicture}[->,>=stealth',shorten >=1pt,auto,node distance=4.5cm,thick,scale=.8] 

  \node[source node] (1)  at (0,3) {$s$};
  \node[main node] (2) at (-2.5,0) {$v$};
  \node[dest node] (3) at (0,-3) {$d$};
  \node[main node] (4) at (2.5,0) {$w$};

  \path[every node/.style={font=\sffamily\small}]
    (1) edge node [right] {} (4)
        edge node[left,color=black] {}  (2)
    (2) edge node [left] {} (3)
        edge node {} (4)
    (4) edge node [right,color=black] {}  (3);
    
\node[small black node] at (-1.9,.9) {${}_{[23],\textcolor{red}{[13]}}$};
\node[small blue node] at (.7,-2.1) {${}_{[22]}$};
\node[small red node] at (0,-4) {${}_{[12]}$};
\node at (0,-4.5) {$t=3$};
\node at (0,-5) {};
\end{tikzpicture}
\qquad
\begin{tikzpicture}[->,>=stealth',shorten >=1pt,auto,node distance=4.5cm,thick,scale=.8] 

  \node[source node] (1)  at (0,3) {$s$};
  \node[main node] (2) at (-2.5,0) {$v$};
  \node[dest node] (3) at (0,-3) {$d$};
  \node[main node] (4) at (2.5,0) {$w$};

  \path[every node/.style={font=\sffamily\small}]
    (1) edge node [right] {} (4)
        edge node[left,color=black] {}  (2)
    (2) edge node [left] {} (3)
        edge node {} (4)
    (4) edge node [right,color=black] {}  (3);

\node[small blue node] at (.8,2.1) {${}_{[14]}$};    
\node[small red node] at (-1.9,.9) {${}_{[24],\textcolor{black}{[23]}}$};
\node[small red node] at (.7,-2.1) {${}_{[13]}$};
\node[small blue node] at (0,-4) {${}_{[22]}$};
\node at (0,-4.5) {$t=4$};
\node at (0,-5) {};
\end{tikzpicture}\qquad
\begin{tikzpicture}[->,>=stealth',shorten >=1pt,auto,node distance=4.5cm,thick,scale=.8] 

  \node[source node] (1)  at (0,3) {$s$};
  \node[main node] (2) at (-2.5,0) {$v$};
  \node[dest node] (3) at (0,-3) {$d$};
  \node[main node] (4) at (2.5,0) {$w$};

  \path[every node/.style={font=\sffamily\small}]
    (1) edge node [right] {} (4)
        edge node[left,color=black] {}  (2)
    (2) edge node [left] {} (3)
        edge node {} (4)
    (4) edge node [right,color=black] {}  (3);

\node[small blue node] at (.8,2.1) {${}_{[25]}$};        
\node[small black node] at (-1.9,.9) {${}_{[15],\textcolor{red}{[24]}}$};
\node[small black node] at (-1.9,-.9) {${}_{[23]}$};
\node[small blue node] at (.7,-2.1) {${}_{[14]}$};
\node[small red node] at (0,-4) {${}_{[13]}$};
\node at (0,-4.5) {$t=5$};
\node at (0,-5) {};
\end{tikzpicture}\qquad
\begin{tikzpicture}[->,>=stealth',shorten >=1pt,auto,node distance=4.5cm,thick,scale=.8] 

  \node[source node] (1)  at (0,3) {$s$};
  \node[main node] (2) at (-2.5,0) {$v$};
  \node[dest node] (3) at (0,-3) {$d$};
  \node[main node] (4) at (2.5,0) {$w$};

  \path[every node/.style={font=\sffamily\small}]
    (1) edge node [right] {} (4)
        edge node[left,color=black] {}  (2)
    (2) edge node [left] {} (3)
        edge node {} (4)
    (4) edge node [right,color=black] {}  (3);

\node[small blue node] at (.8,2.1) {${}_{[26]}$};    
\node[small black node] at (-1.9,.9) {${}_{[16],[15]}$};
\node[small blue node] at (.8,-2) {${}_{[25],\textcolor{red}{[24]}}$};
\node[small node] at (0,-4) {${}_{\textcolor{blue}{[14]},[23]}$};
\node at (0,-4.5) {$t=6$};
\node at (0,-5) {};
\end{tikzpicture}
\qquad
\begin{tikzpicture}[->,>=stealth',shorten >=1pt,auto,node distance=4.5cm,thick,scale=.8] 

  \node[source node] (1)  at (0,3) {$s$};
  \node[main node] (2) at (-2.5,0) {$v$};
  \node[dest node] (3) at (0,-3) {$d$};
  \node[main node] (4) at (2.5,0) {$w$};

  \path[every node/.style={font=\sffamily\small}]
    (1) edge node [right] {} (4)
        edge node[left,color=black] {}  (2)
    (2) edge node [left] {} (3)
        edge node {} (4)
    (4) edge node [right,color=black] {}  (3);

\node[small blue node] at (.8,2.1) {${}_{[27]}$};    
\node[small black node] at (-1.9,.9) {${}_{[17],[16]}$};
\node[small blue node] at (.8,-2) {${}_{[26],[25]}$};
\node[small black node] at (-1.9,-.9) {${}_{[15]}$};
\node[small node] at (0,-4) {${}_{\textcolor{red}{[24]}}$};
\node at (0,-4.5) {$t=7$};
\node at (0,-5) {};
\end{tikzpicture}
\qquad
\begin{tikzpicture}[->,>=stealth',shorten >=1pt,auto,node distance=4.5cm,thick,scale=.8] 

  \node[source node] (1)  at (0,3) {$s$};
  \node[main node] (2) at (-2.5,0) {$v$};
  \node[dest node] (3) at (0,-3) {$d$};
  \node[main node] (4) at (2.5,0) {$w$};

  \path[every node/.style={font=\sffamily\small}]
    (1) edge node [right] {} (4)
        edge node[left,color=black] {}  (2)
    (2) edge node [left] {} (3)
        edge node {} (4)
    (4) edge node [right,color=black] {}  (3);

\node[small blue node] at (.8,2.1) {${}_{[28]}$};    
\node[small black node] at (-1.9,.9) {${}_{[18],[17]}$};
\node[small blue node] at (.8,-2) {${}_{[27],[26]}$};
\node[small black node] at (-1.9,-.9) {${}_{[16]}$};
\node[small node] at (0,-4) {${}_{\textcolor{blue}{[25]},[15]}$};
\node at (0,-4.5) {$t=8$};
\node at (0,-5) {};
\end{tikzpicture}\qquad
\begin{tikzpicture}[->,>=stealth',shorten >=1pt,auto,node distance=4.5cm,thick,scale=.8] 

  \node[source node] (1)  at (0,3) {$s$};
  \node[main node] (2) at (-2.5,0) {$v$};
  \node[dest node] (3) at (0,-3) {$d$};
  \node[main node] (4) at (2.5,0) {$w$};

  \path[every node/.style={font=\sffamily\small}]
    (1) edge node [right] {} (4)
        edge node[left,color=black] {}  (2)
    (2) edge node [left] {} (3)
        edge node {} (4)
    (4) edge node [right,color=black] {}  (3);

\node[small blue node] at (.8,2.1) {${}_{[29]}$};    
\node[small black node] at (-1.9,.9) {${}_{[19],[18]}$};
\node[small blue node] at (.8,-2) {${}_{[28],[27]}$};
\node[small black node] at (-1.9,-.9) {${}_{[17]}$};
\node[small node] at (0,-4) {${}_{\textcolor{blue}{[26]},[16]}$};
\node at (0,-4.5) {$t=9$};
\node at (0,-5) {};
\end{tikzpicture}
~\vspace{0cm} \caption{\label{fi:Wheatstoneequilibrium} Wheatstone network equilibrium \eqref{eq:WheatstoneequilSM}.}
\end{figure}

\begin{proof}[Proof of Proposition~\ref{pr:pospoa}]
For $k\in\mathbb{N}_{+}$, define  \emph{Braess's $k$-th graph} as follows.
Since this is just a graph and not a multigraph, edges are uniquely identified by their tail and head. Let 
\[
V^{k}=\{s,v_{1},\ldots,v_{k}, w_{1},\ldots,w_{k},d\}
\] 
be the set of $2k+2$ vertices and 
\[
E^{k}=\{(s,v_{i}),(v_{i},w_{i}),(w_{i},d)\mid1\leq i\leq k\}\cup \{(v_{i},w_{i-1})\mid2\leq i\leq k\}\cup \{(v_{1},d)\}\cup \{(s,w_{k})\}
\] 
the set of edges. See Figure~\ref{fi:braess1}.

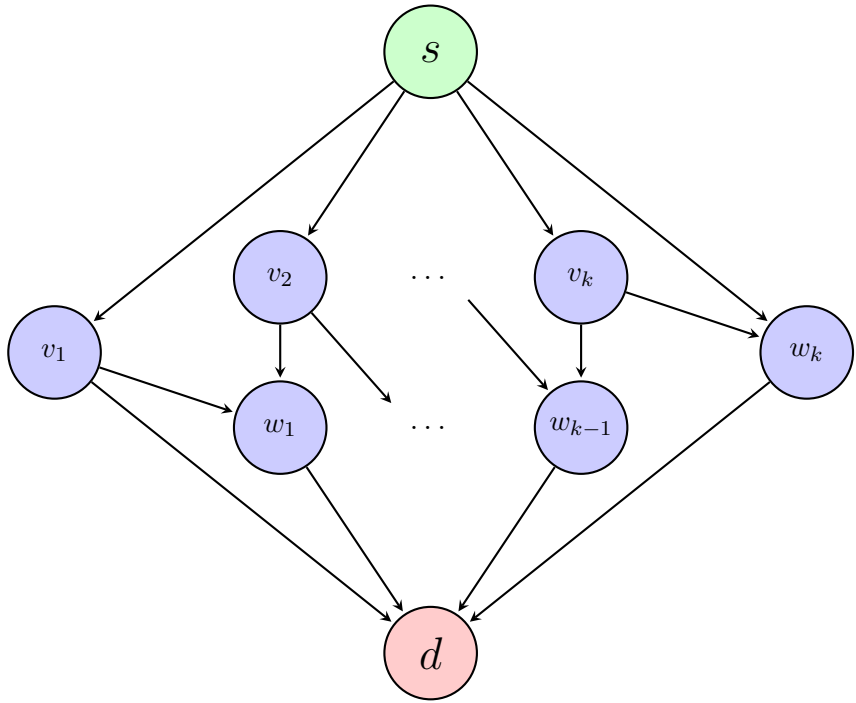
\begin{figure}[h]
\centering
\begin{tikzpicture}[->,>=stealth',shorten >=1pt,auto,node distance=5cm,
  thick,main node/.style={circle,fill=blue!20,draw,minimum size=35pt,font=\sffamily\small\bfseries},source node/.style={circle,fill=green!20,draw,minimum size=35pt,font=\sffamily\Large\bfseries},dest node/.style={circle,fill=red!20,draw,minimum size=35pt,font=\sffamily\Large\bfseries}]
	\node[source node] (source) at (0,4) {$s$};
	\node[main node] (left) at (-5,0) {$v_{1}$};
	\node[main node] (upl) at (-2,1) {$v_{2}$};
	\node[main node] (upr) at (2,1) {$v_{k}$};
	\node[main node] (right) at (5,0) {$w_{k}$};
	\node[main node] (downl) at (-2,-1) {$w_{1}$};
	\node[main node] (downr) at (2,-1) {$w_{k-1}$};
	\node[dest node] (sink) at (0,-4) {$d$};
	\node at (0,1) {\ldots};
	\node at (0,-1) {\ldots};
\draw[-stealth] (source) to (left);
\draw[-stealth] (source) to (right);
\draw[-stealth] (source) to (upl);
\draw[-stealth] (source) to (upr);
\draw[-stealth] (left) to (downl);
\draw[-stealth] (left) to (sink);
\draw[-stealth] (right) to (sink);
\draw[-stealth] (upl) to (downl);
\draw[-stealth] (upr) to (downr);
\draw[-stealth] (upr) to (right);
\draw[-stealth] (upl) to (-.5,-.7);
\draw[-stealth] (.5,.7) to (downr);
\draw[-stealth] (downl) to (sink);
\draw[-stealth] (downr) to (sink);
\end{tikzpicture}
\caption{Braess's $k$-th  graph.} \label{fi:braess1}
\end{figure}

Let $\gamma_{e}=1$ for all $e\in E^{k}$ and
\begin{align*}
\tau_{e}&=\begin{cases}
1 &\text{ if }e=(v_{1},d)\text{, }e=(s,w_{k})\text{ or }e=(v_{i},w_{i-1})\text{ for }2\leq i\leq k,\\
0 &\text{ otherwise,}
\end{cases}
\end{align*}
Notice that  Braess's $k$-th  graph has a capacity  $k+1$.

For $i=1,\ldots,k$, let $P_{i}$ denote the path $(s,v_{i})(v_{i},w_{i})(w_{i},d)$. Let $Q_{1}$ denote the path $(s,v_{1})(v_{1},d)$, for $i=2,\ldots,k$ let $Q_{i}$ denote the path $(s,v_{i})(v_{i},w_{i-1})(w_{i-1},d)$ and let $Q_{k+1}$ denote the path $(s,w_{k})(w_{k},d)$.

The optimal latency is achieved by the strategy profile in which each player of every generation chooses a different path $Q_{i}$ for $i=1,\ldots,k+1$. Hence
\[
\Opt(\mathcal{N},\gamma)=k+1.
\]

Consider the subnetwork $\mathcal{N}'$ obtained from $\mathcal{N}$ by deleting each edge $(v_{i},w_{i})$ for $i=1,\ldots,k$. 
Observe that  $\mathcal{N}'$ is  a parallel network and
the unique equilibrium latency is achieved by the strategy profile in which each player of every generation chooses a different path $Q_{i}$ for $i=1,\ldots,k+1$. Hence 
\[
\WEq(\mathcal{N}',\gamma)=k+1.
\]

Now, on $\mathcal{N}$ the best equilibrium latency is achieved by the following strategy profile.
\begin{enumerate}[(i)]
\item In the $j$-th period, where $1\leq j\leq k$, the first $k+1-j$ players choose path $P_{i}$ for $i=1,\ldots,k$ in increasing order and players $k+2-j$, \ldots, $k+1$ choose path $Q_{k+2-j}$, \ldots, $Q_{k+1}$, respectively.
\item From period $k+1$ onwards, each player chooses a path $Q_{i}$ for $i=1,\ldots,k+1$ in increasing order.
\end{enumerate}
Since no queue is created in any of the periods, 
\[
\BEq(\mathcal{N},\gamma)=k+1.
\]

The worst equilibrium latency is achieved by the following strategy profile. For all $[it]\in G$, choose a path $p_{it}$ with minimum latency that has no possibility of overtaking, according to the following preference relation over paths $P_{1}\succ\ldots\succ P_{k}\succ Q_{1}\succ Q_{k+1}\succ Q_{2}\succ\ldots\succ Q_{k}$. The idea is that the players in the transient states create queues on each $P_{i}$ for $i=1,\ldots,k$, in such a way that in the steady state,  in each generation exactly one player chooses the path $Q_{i}$ for $i=1,\ldots,k+1$, with a latency of $2k+1$. For $k=1$, this strategy profile is illustrated in Example \ref{ex:Wheatstone}. For $k\geq2$, queues grow as follows.

\begin{enumerate}[(i)]
\item In the first $k$ periods, each player chooses a path $P_{i}$ for $i=1,\ldots,k$ in increasing order such that after $k$ periods, each edge $(s,v_{i})$ for $i=1,\ldots,k$ has a waiting cost of one.

\item Partition the following $(k-1)\cdot2k$ periods into $k-1$ sets of $2k$ periods. Each set of $2k$ periods consists of two subsets of $k$ periods such that players in the second $k$ periods choose the same routes as players in the first $k$ periods. During the $j$-th set of $2k$ periods, where $1\leq j\leq k-1$, all players in the first $k-j$ periods choose a path $P_{i}$ for $i=1,\ldots,k$ in increasing order and create a queue on $(s,v_{i})$ for $i=1,\ldots,k-j$. In the next $j$ periods, the UFR property implies that a path $P_{i}$ is replaced by a path $Q_{i+1}$ for $i=k,\ldots,k+1-j$ such that a queue grows on $(w_{i},d)$ instead of on $(s,v_{i})$.

So during the $j$-th set of $2k$ periods, where $1\leq j\leq k-1$, the queue on each edge $(s,v_{i})$ for $i=1,\ldots,k-j$ has increased by two, and the queue on $(w_{i},d)$ for $i=k,\ldots,k+1-j$ has increased by two.

\item After $(k-1)\cdot2k+k$ periods, the length of the queue on each edge $(s,v_{i})$ for $i=1,\ldots,k$ is $2\cdot(k-i)+1$, and the length of the queue on each edge $(w_{i},d)$ for $i=2,\ldots,k$ is $2\cdot(i-1)$. The UFR property implies that in the following $k$ periods a queue grows on each edge $(w_{i},d)$ for $i=1,\ldots,k$.

\item The subsequent $k+1$ periods are summarized as follows. First, a queue grows on $(s,v_{1})$, then a queue grows on $(s,v_{i})$ and $(w_{i-1},d)$ for $i=2,\ldots,k$, finally a queue grows on $(w_{k},d)$. Summarizing, each path $P_{i}$ for $i=1,\ldots,k$ has a latency of $2k+2$ and each path $Q_{i}$ for $i=1,\ldots,k+1$ has a latency of $2k+1$.

\item In all of the upcoming periods, each player chooses a different path $Q_{i}$ for $i=1,\ldots,k+1$. The UFR property guarantees that queues cannot grow any further.
\end{enumerate}

Hence 
\[
\WEq(\mathcal{N},\gamma)=(k+1)\cdot(2k+1).
\]

Concluding, we found that
\begin{align*}
\PoS(\mathcal{N},\gamma)&=\frac{k+1}{k+1}=1,\\
\PoA(\mathcal{N},\gamma)&=\BR(\mathcal{N},\gamma)=\frac{(k+1)\cdot(2k+1)}{(k+1)}=2k+1=n-1. \qedhere
\end{align*}
\end{proof}

\begin{remark}
For the $k$-th Braess's graph, the Nash latency of the non-atomic game is achieved by the following strategy profile. First, a queue of length one is created on each $(s,v_{i})$ for $i=1,\ldots,k$. Then congestion
occurs on each  $(s,v_{i})$ for $i=1,\ldots,k-1$ and $(w_{k},d)$ until there is an additional queue of one. This process, where more queues grow on $(w_{i},d)$ instead of on $(s,v_{i})$ for $i=1,\ldots,k$, continues until each path has the same latency equal to $k+1$. Hence the non-atomic game has a price of anarchy of $n/2$.
\end{remark}

\subsection*{Section~\ref{se:seasonal}}
\begin{proof}[Proof of Proposition~\ref{pr:parallelperiod}]
We first prove the formula for the optimum cost,
\begin{equation*}
\Opt(\mathcal{N},K, \boldsymbol{\delta})=K\sum_{e\in E}\gamma_{e}\tau_{e}+
D(\boldsymbol{\delta}).
\end{equation*}
We prove this result by induction on $D( \boldsymbol{\delta})$. For $D( \boldsymbol{\delta})=0$, note that $ \boldsymbol{\delta}= \boldsymbol{\gamma}$ and the result is obvious.

Suppose the result is true for $ \boldsymbol{\delta}'$ with $D( \boldsymbol{\delta}')\in\mathbb{N}$ and let $ \boldsymbol{\delta}\rightarrow \boldsymbol{\delta}'$. Since $ \boldsymbol{\delta}\rightarrow \boldsymbol{\delta}'$, there is some $k\in\{1,\ldots,K\}$ such that $\delta_k>\gamma$, $\delta'_k=\delta_k-1$, $\delta'_{k+1}=\delta_{k+1}+1$ and $\delta'_{\ell}=\delta_{\ell}$ for all $\ell\notin\{k,k+1\}$, where $k+1$ is considered modulo $K$.

At each stage $t_k$ such that $t=k \mod K$, players depart above capacity under $\boldsymbol{\delta}$. This implies that there is at least one player $[jt_k]$ who  sees a queue on his route, and thus who adds one unit of waiting time to the total cost. Denote $[j^*t_k]$ such a player with the highest index, i.e., the player with the   lowest priority. Consider the relaxed optimization problem where the planner postpones the departure of this player by one stage and let her depart as the first player of the next generation, that is, to transform $\boldsymbol{\delta}$ into $\boldsymbol{\delta}'$. 

By the choice of $j^*$ (the last one in the generation who sees a queue), the postponing of this player does not affect the costs nor the choices of the other players. This is clear for those who have higher priority. For those who have lower priority, this player will be ahead of them in the queue in both cases. So the choice of strategy for player $[j^*t_k]$ that has to be made by the social planner is the same in $\boldsymbol{\delta}$ as in $\boldsymbol{\delta}'$. Hence all players, including player $[j^*t_k]$, choose the same strategy, and thus in each period, one unit of waiting cost is saved by postponing the departure of player $[j^*t_k]$. This concludes the proof.

\bigskip
 
We now turn to the proof of the formula for the equilibrium. We start by showing some preliminary results. The first claim shows that for computing equilibrium costs, without loss of generality  all capacities can be assumed to be $1$.

Let $\mathcal{N}=(\mathcal{G}, (\tau_e)_{e\in E}, (\gamma_e)_{e\in E})$ be a network. Given $e\in E$, let  $\mathcal{N}^{e}$ be the network obtained from $\mathcal{N}$ by replacing the edge $e$ of capacity $\gamma_e$, by a set $E(e)$ of $\gamma_e$ parallel edges of capacity 1, with the same head and tail, and same length as $e$.

\begin{claim}\label{cl:cap1} 
Every equilibrium of $\mathcal{N}$ (resp. $\mathcal{N}^{e}$) can be mapped to an equilibrium of $\mathcal{N}^{e}$ (resp. $\mathcal{N}$) with the same total cost.
\end{claim}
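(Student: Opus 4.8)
The plan is to construct the bijection explicitly at the level of strategy profiles and show that it preserves both the dynamics (queue formation, priorities, arrival times) and the total cost. The key observation is that splitting one edge $e$ of capacity $\gamma_e$ into $\gamma_e$ parallel unit-capacity edges $E(e)=\{e^{(1)},\dots,e^{(\gamma_e)}\}$ of the same length does not change the fundamental queuing behaviour: in both networks, at most $\gamma_e$ players leave the ``$e$-bundle'' per stage, and the global priority order $\lhd$ determines who goes first.

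First I would set up the map from equilibria of $\mathcal{N}$ to equilibria of $\mathcal{N}^e$. Fix a UFR equilibrium $\sigma$ of $\mathcal{N}$. For each stage $t$, consider the players who, under $\sigma$, enter edge $e$ at stage $t$; order them by $\lhd$ and also keep track of any residual queue from earlier stages. In $\mathcal{N}^e$, assign these players to the copies $e^{(1)},\dots,e^{(\gamma_e)}$ by a round-robin rule consistent with the queue discipline: the player who would be $k$-th to exit $e$ in $\mathcal{N}$ is routed onto the copy whose own queue lets her exit at exactly that stage. Concretely, one checks inductively on stages that the multiset of exit times from $E(e)$ in $\mathcal{N}^e$ equals the multiset of exit times from $e$ in $\mathcal{N}$, because the capacity constraint (total $\gamma_e$ per stage across the bundle) and the priority-then-arrival ordering are identical. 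Since the rest of the network is untouched and arrival times at the head of $e$ are unchanged, every player crosses the $e$-bundle with the same latency, hence $c_{it}$ and $w_{it}$ are unchanged for all players, so the total cost is preserved. One then verifies the UFR/Nash property transfers: a player cannot arrive earlier at any intermediate vertex in $\mathcal{N}^e$ than in $\mathcal{N}$ because the set of achievable arrival times through the $e$-bundle is the same (any deviation in $\mathcal{N}^e$ that improves on $\sigma$ could be mirrored in $\mathcal{N}$, contradicting that $\sigma$ is an equilibrium).

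For the converse direction, I would take an equilibrium $\tilde\sigma$ of $\mathcal{N}^e$ and collapse the copies: any player using some $e^{(j)}$ is rerouted through $e$. The only thing to check is that the induced queue on $e$ in $\mathcal{N}$ produces the same exit times as the collection of queues on the $e^{(j)}$ in $\mathcal{N}^e$. This is where the priority order does the work: among players arriving at the head of the $e$-bundle at a given stage, the single queue on $e$ serves them in $\lhd$-order at rate $\gamma_e$, and one shows by induction on stages that this coincides with the aggregate exit profile of the parallel copies under $\tilde\sigma$ — here we use that $\tilde\sigma$, being an equilibrium, has no ``wasteful'' configuration in which a later-priority player exits a lightly-loaded copy before an earlier-priority player exits a congested copy when both arrived simultaneously (if it did, swapping them would be a profitable deviation). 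Again latencies and total cost are preserved, and the Nash property transfers back symmetrically.

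The main obstacle I expect is the bookkeeping in the induction on stages showing that aggregate exit times across the $\gamma_e$ unit-capacity copies match the exit times from the single capacity-$\gamma_e$ edge, \emph{while simultaneously respecting the global priority}. In the forward direction this is a scheduling argument (distributing arrivals to copies so that each copy's FIFO queue reproduces the target exit stage), and one must be careful that the assignment is itself consistent with $\lhd$ so that the resulting profile is a legitimate UFR equilibrium rather than merely cost-equivalent. In the backward direction the subtlety is that an arbitrary equilibrium on $\mathcal{N}^e$ might route simultaneously-arriving players onto copies in a priority-violating way relative to the collapsed edge; the resolution is that such configurations cannot be equilibria (a lower-priority player exiting earlier than a higher-priority one who arrived at the same time cannot happen under the stated queue rules, since on each single copy players are ordered by arrival then priority), so on equilibrium profiles the collapse is automatically priority-consistent. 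Once these two inductions are in place, preservation of $c_t$, $w_t$, $\ell_t$ and hence of $\bar L_T$ and the asymptotic average latency is immediate, and the claim follows.
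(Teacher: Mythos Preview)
Your proposal is correct and follows essentially the same approach as the paper: a round-robin assignment of players to the $\gamma_e$ unit-capacity copies in the forward direction, and a collapse of the copies in the backward direction, with the equilibrium property transferring via a deviation-mirroring argument and cost preservation following from the matching of exit times. The paper makes the queue-balance structure in the backward direction slightly more explicit (at any stage the queues across the copies of $E(e)$ differ by at most one, so the aggregate service rate equals $\gamma_e$), whereas your parenthetical justification appeals to ``queue rules on each single copy'', which does not by itself control cross-copy priority inversions; the correct reason, which you also state earlier, is the profitable-deviation argument, and that is what the paper uses.
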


\begin{proof}
We index the $\gamma_e$ edges $E(e)$ by the integers $\{1,\dots,\gamma_e\}$. We consider an equilibrium $\sigma$ of $\mathcal{N}$ and construct an equilibrium of $\mathcal{N}^{e}$ with the same total cost.

Suppose that under $\sigma$,  at generation $t$, $n$ players numbered $[i_1t],\dots,[i_nt]$ enter edge $e$.  

First assume that there is no initial queue on $e$ at the beginning of stage $t$. For each $k=1,\dots, n$, assign player $[i_kt]$ to the $q$-th edge if $k=q\mod\gamma_e$. In words, take those $n$ players and assign them to the edges according to their  priority, following the numbering of edges: player $[i_1t]$ is assigned to edge 1, \dots, player $[i_kt]$ is assigned to edge $k$ for $k\leq  \gamma_e$. If $n>\gamma_e$, then  player $[i_{\gamma_{e}+1}t]$ is assigned to edge 1, and so on. By construction, both ways, player $[i_kt]$ will queue on $e$ if $k>\gamma_e$, and if $k=w\gamma_e+r$ ($w, r$ integers, $r<\gamma_e$),  player $[i_kt]$ will queue for $w-1$ units of time. Therefore the total cost paid by this player on $e$ is the same in both cases. This defines an equilibrium for those players: there is no point in deviating to a route feasible in $\mathcal{N}$, as it would imply a profitable deviation from $\sigma$. By construction, each player is assigned to an edge of $E(e)$ which is fastest, given the priorities.

With this  construction, the queues left by this generation to the next one has the following structure: there exist $w$ and $q^*\leq \gamma_e$ such that all edges of $E(e)$ numbered $1,\dots, q^*$ have a queue of length $w$, and edges numbered $q^*+1,\dots, \gamma_e$ have a queue of length $w-1$ (if $q^*=\gamma_e$, all queues have the same length).

Second, suppose again that at generation $t$, $n$ players numbered $[i_1t],\dots,[i_nt]$ enter edge $e$ under $\sigma$, but that on $E(e)$, they see queues with the above structure. If $q^*<\gamma_e$, then let the first $\gamma_e-q^*$ players fill the edges numbered $q^*+1,\dots,\gamma_e$ in an orderly fashion, according to priorities. The remaining $n-(\gamma_e-q^*)$ choose edges as in the previous case: the first player chooses the first edge, and so on.

As in the previous case, since the choice of edges in $E(e)$ respects the priorities, the waiting time is the same for each player on both networks. Also for the same reason as in the previous case, this is an equilibrium choice. Note that the above structure of queues is preserved from one generation to the next, so that the analysis can be iterated. We have thus constructed an equilibrium of $\mathcal{N}^{e}$ with the same total cost as $\sigma$.

Conversely, take an equilibrium $\sigma^{e}$ of $\mathcal{N}^{e}$. For each route in $\mathcal{N}^{e}$ that uses an edge $f\in E(e)$, there is a unique corresponding route in $\mathcal{N}$ which uses  edge $e$. This maps uniquely the strategy profile $\sigma^{e}$ in a strategy profile $\sigma$ on $\mathcal{N}$. Then $\sigma$ has to be an equilibrium. Actually, a  deviation in $\mathcal{N}$ is also feasible  in  $\mathcal{N}^{e}$, so a profitable deviation from $\sigma$ would imply a profitable deviation from $\sigma^{e}$.

 From the fact that $\sigma^{e}$ is an equilibrium of $\mathcal{N}^{e}$, the queues on the edges of $E(e)$ (if at all) must have a structure as above, there is an integer $w$ such that each edge in an non-empty subset of $E(e)$ has a queue of length $w$, and all other edges in $E(e)$ have a queue of length $w-1$. 
Therefore, the waiting time of a player on edge $e$ is the same as on $E(e)$.
\end{proof}

Applying this result iteratively we can transform any network into another where all capacities are one.

\begin{lemma}\label{le:worsteq}
Let $\mathcal{N}$ be a parallel network. In a worst equilibrium of $\Gamma(\mathcal{N},\mathcal{D})$, whenever a player is indifferent between several edges, she chooses one where there is a queue, if there is one.
\end{lemma}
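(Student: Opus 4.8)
The plan is to argue by contradiction through an exchange argument, showing that any equilibrium in which an indifferent player declines an available queue can be strictly improved upon, in the sense of a higher long-run average latency, by another equilibrium, so that it cannot be worst. Let $\sigma$ be a worst UFR equilibrium and suppose that some player $[it]$ is indifferent between a set $S$ of edges, all realizing her minimal latency, with at least one edge $e \in S$ carrying a queue at the moment she reaches its head, yet $\sigma_{it} = e'$ for some $e' \in S$ with no queue. In a parallel network her latency on an edge $f$ is $\tau_f$ plus her waiting time, so indifference forces $\tau_{e'} = \tau_e + w_e$ with $w_e > 0$; in particular $e'$ is a strictly longer, queue-free edge and $e$ a shorter, congested one.

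First I would record the key accounting identity. Writing $B_t$ for the backlog at time $t$, that is, the number of players who have departed but not yet reached $d$, a Fubini argument gives that the total latency equals the time-integral of the backlog, so that $\limsup_{T} \bar{L}_{T}(\sigma)$ coincides with the long-run average of $B_t$; maximizing the worst-equilibrium latency is therefore the same as maximizing the average backlog. The backlog splits into players in transit along edges and players sitting in queues, and the crucial structural fact, valid because the network runs at capacity (respectively at average capacity over a period), is a conservation principle: since total inflow matches total capacity there is no slack to absorb congestion, and the aggregate amount of delay present in the system is pinned down by the arrival times, the capacities, and the common equilibrium latency, up to any extra queueing that is deliberately created.

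Next I would perform the exchange: define $\sigma'$ by setting $\sigma'_{it} = e$, leaving all higher-priority players $[js] \lhd [it]$ unchanged, and letting every lower-priority player best-respond so as to reach $d$ as early as possible, which by the construction used in Lemma~\ref{le:existence} produces a UFR equilibrium. Player $[it]$'s own latency is unchanged, since she was indifferent, but by joining the queue on $e$ she raises its length by one at her arrival time. Using the conservation principle I would then show that this extra unit of queueing cannot be dissipated: any lower-priority player who reroutes to escape the lengthened queue on $e$ merely transfers the unit to another edge, trading waiting time for transit time at equal total latency, so the aggregate backlog is larger by one at all subsequent times. Consequently the long-run average backlog, and hence $\limsup_{T} \bar{L}_{T}(\sigma')$, is strictly larger than that of $\sigma$, contradicting the assumption that $\sigma$ is worst. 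Since this reasoning applies to an arbitrary worst equilibrium and to an arbitrary violating player, every worst equilibrium must break ties toward a queue whenever one is available.

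The main obstacle is the conservation step, namely proving rigorously that the extra unit of queue created by $[it]$ persists in the aggregate and is never absorbed once the lower-priority players re-optimize. This is exactly where the at-capacity (or average-at-capacity) hypothesis is essential: one must check that there is no idle capacity on the single cut of the parallel network into which the surplus congestion could drain, and that rerouting only relocates the unit between the in-transit and in-queue components without reducing the total. I expect to formalize this by coupling the queue processes of $\sigma$ and $\sigma'$ edge by edge and tracking the total queue length across the cut as a conserved quantity, handling the periodic case by comparing the two systems slot by slot within a period, so that the surplus created at an over-capacity slot is shown to propagate to the corresponding slot of every later period.
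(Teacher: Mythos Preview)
Your strategy has the right instinct---showing that choosing the queue cannot lower and may raise long-run cost---but the execution has two genuine gaps.

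First, you aim for a \emph{strict} improvement and a contradiction. That is more than the lemma needs, and in general more than is true: the swap may leave the long-run average latency unchanged. A single indifferent player's choice at one stage can be absorbed by later dynamics and have no effect on $\limsup_T \bar L_T$. So a contradiction argument will not close. What the paper actually proves (and what is enough for the later use of the lemma) is only a \emph{weak} statement: from any equilibrium one can, by systematically breaking ties toward queued edges, pass to another equilibrium whose queues are weakly larger at every future stage, hence whose long-run cost is weakly larger. This shows that some worst equilibrium satisfies the tie-breaking rule, which is all that is invoked downstream.

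Second, your ``conservation step'' is precisely the hard part, and you have not controlled it. After you move $[it]$ to the queued edge $e$, you let \emph{all} lower-priority players re-best-respond. That is a drastic modification: their new choices may route flow away from $e$, re-fill the abandoned edge $e'$, or shift congestion in ways your backlog identity does not pin down. Even at capacity, the inequality you need is that the total queue across the cut in $\sigma'$ dominates that in $\sigma$ at every stage, and your coupling sketch does not establish this once strategies downstream are allowed to change arbitrarily.

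The paper avoids both difficulties by working \emph{locally} and \emph{monotonically}. It first reduces (via the edge-splitting claim) to the case where every capacity is $1$, so that each edge carries at most one departing player per stage. It then treats each generation as a one-shot game on the parallel network with effective lengths $\tau_e^{t+1}=\tau_e+r_t(e)$, where $r_t(e)$ is the queue left by generation $t$, and proves a monotonicity lemma: the worst one-generation equilibrium cost is weakly increasing in the vector of effective lengths. Finally, a short case analysis (last player in the generation versus not; and, if not, comparing the number $p$ of remaining players to the number $q$ of equal-cost edges) shows that when an indifferent player picks the queued edge, the residual queues $r_t(\cdot)$ for the next generation are edgewise weakly larger. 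The recursion $r_t(e)=(r_{t-1}(e)+n_t(e)-1)_+$ then propagates this domination to all future generations, and the monotonicity lemma converts queue domination into cost domination. No strictness is claimed and no global re-optimization is needed.

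If you want to salvage your approach, drop the contradiction framing, modify only the single tie-breaking choice while keeping the remaining strategies of that generation as an equilibrium of the residual one-shot game, and prove the edgewise queue-domination inductively rather than via an aggregate backlog identity.
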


\begin{proof} Using Claim \ref{cl:cap1}, we assume that all capacities are one.
For a parallel network, arriving at intermediary nodes is not an issue. Therefore, each generation of players chooses an equilibrium as in a game where no subsequent generations exist.
More precisely, consider a parallel network with edges $e\in E$ and lengths $(\tau_e)_{e\in E}$. Consider the game where at stage 1 a generation of $\delta$ players enter the network and where there are no subsequent players. Denote $W(\delta,(\tau_e)_{e\in E})$ the worst equilibrium total cost of this game.

\begin{claim}\label{cl:monotonic}
$W(\delta,(\tau_e)_{e\in E})$ is weakly increasing in free-flow transit costs. That is, if, for all $e\in E$, $\tau_e\leq\tau_e'$, then $W(\delta,(\tau_e)_{e\in E})\leq W(\delta,(\tau'_e)_{e\in E})$.
\end{claim}

\begin{proof}
Since we are dealing with just one generation, we denote the players $1,\dots, \delta$. Take an equilibrium $\sigma$  and let $\ell_i(\sigma)$ be the equilibrium latency of player $i$. A simple remark is that $\ell_i(\sigma)$ weakly increases with $i$ and that from one player to the next, it can only increase by one unit. Precisely, there exist integers $1\leq k_1<k_2<\cdots<k_n\leq\delta$ such that
\begin{itemize}
\item whenever $1\leq i\leq k_1$, we have $\ell_i(\sigma)=\min_e\tau_e$,
\item if $1\leq m<n$, then, whenever $k_m<i\leq k_{m+1}$, we have $\ell_i(\sigma)=\min_e\tau_e+m$.
\end{itemize}
To see this, note  first that if $i<j$, then $\ell_i(\sigma)\leq\ell_j(\sigma)$. Otherwise, player $i$ who has priority over $j$ could profitably imitate $j$. Second, $\ell_{i+1}(\sigma)\leq \ell_i(\sigma)+1$. Otherwise, player $i+1$ could profitably imitate $i$ and pay $\ell_i(\sigma)+1$.

If follows directly that if we increase $\min_e\tau_e$, then the equilibrium costs of all players are pushed (weakly) upwards. Suppose now that we increase by one unit the length of an edge which is used in equilibrium. That is, take an edge $f$ with length $\tau_f=\min_e\tau_e+m$ for some $m$, with $1\leq m<n$ as above, and replace it by an edge with length $\tau_f+1$. In this new situation, we have the same number of players who pay $\tau_f-1$ at most. Among the players who paid $\tau_f$ in the old situation, one has to pay now $\tau_f+1$ (the one with the lowest priority; whether she chooses edge $f$ or another one with the same total cost). Subsequent players have to pay weakly more. So, all costs are weakly pushed upwards.
\end{proof}

 Consider now a worst equilibrium of $\Gamma(\mathcal{N},\mathcal{D})$. The first generation chooses an equilibrium for the network with  lengths $(\tau_e)_{e\in E}$. Let $n_1(e)$ be the number of players of the first generation who choose edge $e$. If $n_1(e)\leq 1$, then the first generation leaves no queue on $e$ for the next one. If $n_1(e)>1$, then the first player in the second generation meets a queue of $r_1(e)=n_1(e)-1$ on edge $e$. Iteratively, let $n_t(e)$ denote the number of players of generation $t$ who choose edge $e$ and $r_t(e)$ the queue that the first player in generation $t+1$ meets on $e$. We have the following recursion for $t>1$,
 \begin{equation*}
 r_t(e)=(r_{t-1}(e)+n_t(e)-1)_+,
 \end{equation*}
where $x_+=\max\{x,0\}$.

Then, generation $t+1$ chooses an equilibrium for the network $(\tau ^{t+1}_e)_{e\in E}$ with $\tau_e^{t+1}:=\tau_e+r_t(e)$.

Now, suppose that there is a generation $t$, a player $[it]$ and two edges $e,f$ such that player $[it]$ is indifferent between $e$ and $f$ and, there is a queue on $e$ but not on $f$. There are two equilibrium scenarios. In the best scenario (BS) player $[it]$ chooses $f$, in the worst scenario (WS) player $[it]$ chooses $e$. We argue that the queues left for future generations are all weakly higher in WS than in BS. 

Consider  first the case where player $[it]$ is the last in generation $t$. Then by choosing $f$, she leaves no queue on $f$ for the next generation and the queue on $e$ decreases by one unit. If she chooses $e$, she recreates the queue on $e$, there is still no queue on $f$. For all other edges, the queue is the same under both scenarios.

The second case is when player $[it]$ is not the last in her generation. Let $p=\delta_t-i+1$ be the number of players who come weakly after player $i$ in generation $t$, and let $q$ be the number of edges that have the same total cost as $e$ for player $[it]$. If $p>q$, then one player must choose $e$ and another one must choose $f$, no matter what player $[it]$ does, so the queues are the same under both scenarios. If $p\leq q$, then at most one player will choose $f$ (so no queue is created there) and no player chooses the same edge as $[it]$. Therefore, if she chooses $e$ she maintains the queue there, whereas she creates no queue by choosing $f$.

We conclude that whenever a player is indifferent between queuing or not, choosing the edge with the queue weakly increases all queues for the next generation. From the recursion $r_t(e)=(r_{t-1}(e)+n_t(e)-1)_+$, this weakly increases queues for all future generations. From Claim \ref{cl:monotonic}, the conclusion follows.
\end{proof}

We now turn to the proof of the formula for the equilibrium,
\begin{equation*}
\WEq(\mathcal{N}, K, \boldsymbol{\delta})=K\gamma\max_{e\in E}\tau_{e}+D(\boldsymbol{\delta}).
\end{equation*}

We prove it by induction on $D(\boldsymbol{\delta})$, the result being obvious for $D(\boldsymbol{\delta})=0$. We assume that the result is true for $\boldsymbol{\delta}'$ with $D(\boldsymbol{\delta}')\in\mathbb{N}$ and let $\boldsymbol{\delta}\rightarrow\boldsymbol{\delta}'$. We take  $k\in\{1,\ldots,K\}$ such that $\delta_k>\gamma$, $\delta'_k=\delta_k-1$, $\delta'_{k+1}=\delta_{k+1}+1$ and $\delta'_{\ell}=\delta_{\ell}$ for all $\ell\notin\{k,k+1\}$, where $k+1$ is considered modulo $K$.

First, we show that there is an equilibrium for $\Gamma(\mathcal{N},K,\boldsymbol{\delta})$ with costs equal to  $\WEq(\mathcal{N},K,\boldsymbol{\delta}')+1$. This implies that $\WEq(\mathcal{N},K,\boldsymbol{\delta})\geq \WEq(\mathcal{N},K,\boldsymbol{\delta}')+1$.

Let $\sigma'$ be the strategy profile as defined for $\Gamma(\mathcal{N},K,\boldsymbol{\delta}')$ yielding the worst equilibrium latency. We construct a strategy profile $\sigma$ for 
$\Gamma(\mathcal{N},K,\boldsymbol{\delta})$ corresponding to $\sigma'$ such that the same queues are created. The definition of the strategy is iterative. We indicate below how the construction works for one period and how the iteration proceeds to the next.

\begin{enumerate}[(I)]
\item\label{it:algo-I}
Let $k<K$ and $t\in\mathbb{N}$. If $t<k$, then let each player $[it]$ choose the same edge as in $\sigma'$. If $t=k$, then let each player $[it]$ with $i<\delta_k$ choose the same edge as in $\sigma'$.

\begin{enumerate}[(1)]
\item\label{it:algoI-1} 
If the  edge chosen by player $[1,k+1]$ in $\sigma'$ has minimum latency and waiting costs for player $[\delta_k k]$, then let $[\delta_k k]$ choose this edge and let each player $[it]$ with $t<k+K$ choose the same edge as in $\sigma'$. In this case, bringing forward a player does not affect the choice of the other players, because for them there is no difference whether the player waits a stage in a queue or whether the player waits a stage to depart. From stage $k+K$ onwards, go to \ref{it:algo-I} and iterate.

\item\label{it:algoI-2} If the  edge chosen by player $[1,k+1]$ in $\sigma'$ has either no minimum latency or no waiting costs for player $[\delta_k k]$, then let $[\delta_k k]$ choose an edge with minimum latency and no waiting costs (observe that in the former case, the edge has waiting costs and thus there must be a different edge with minimum latency but no waiting costs) and let each player $[i,k+1]$ with $i\leq\delta_{k+1}$ choose the same edge as in $\sigma'$.

\begin{enumerate}[(a)]
\item\label{it:algoI2-a} 
Either there is a first generation $G_s$ with $k+1\leq s<k+K$ which has a last player with no waiting costs in $\sigma'$. Let each player $[it]$ with $k+1<t\leq s$ choose the same edge as the player departing before $[it]$ in $\sigma'$ and let each player $[it]$ with $s<t<k+K$ choose the same edge as in $\sigma'$. In this case, $[\delta'_s s]$ does not affect the other players in $\sigma'$. So it is no problem if he does not depart. From stage $k+K$ onwards, go to \ref{it:algo-I}.

\item\label{it:algoI2-b} 
Or all generations $G_t$ with $k+1\leq t<k+K$ have a last player with waiting costs in $\sigma'$, then let each player $[it]$ with $k+1<t\leq k+K$ choose the same edge as the player departing before $[it]$ in $\sigma'$. From stage $k+K+1$ onwards, go to \ref{it:algo-II}.
\end{enumerate}
\end{enumerate}

\item\label{it:algo-II}
Let $k=K$ and $t\in\mathbb{N}$. Let each player $[i1]$ with $i\leq\delta_k$ choose the same edge as in $\sigma'$.

\begin{enumerate}[(1)]
\item\label{it:algoII-1} 
Either, there is a first generation $G_s$ with $1\leq s<K$ which has a last player with no waiting costs in $\sigma'$. Let each player $[it]$ with $1<t\leq s$ choose the same edge as the player departing before $[it]$ in $\sigma'$ and let each player $[it]$ with $s<t<K$ choose the same edge as in $\sigma'$. In this case, $[\delta'_ss]$ does not affect the other players in $\sigma'$. So it is no problem if she does not depart. From stage $K$ onwards, go to \ref{it:algo-I}.
\item\label{it:algoII-2}
Or, all generations $G_t$ with $1\leq t<K$ have a last player with waiting costs in $\sigma'$, then let each player $[it]$ with $1<t\leq K$ choose the same edge as the player departing before $[it]$ in $\sigma'$. From stage $K+1$ onwards, go to \ref{it:algo-II}.
\end{enumerate}
\end{enumerate}

Notice that $\sigma$ is defined in such a way that queues have the same length as under $\sigma'$,  either at the beginning or at end of stage $k+K$ . Queues have the same length at the beginning of stage $k+K$ in cases where the algorithm goes to \ref{it:algo-I}, and queues have the same length at the end of stage $k+K$ in cases where the algorithm goes to \ref{it:algo-II}.

Now, in order to compute the long-run latency, let us  focus on the steady state. 
We know that with uniform departures there is a $t_0$ such that for all generations $t\geq t_0$, queues are such that $\gamma_e$ players choose edge $e$ for all $e\in E$. Recall that $\delta_k>\gamma$. By construction, for each generation $t=k\mod{K}$, where $t\geq t_0+K$, and for all edges $e$ at least $\gamma_e$ players choose $e$. 
This implies that player $[\delta_k t]$ must wait for at least one period. So the waiting costs for $[\delta_k t]$ increases by one unit compared to $\sigma'$.

\medskip

Second, we show that there is an equilibrium of $\Gamma(\mathcal{N},K,\boldsymbol{\delta}')$ with costs equal to $\WEq(\mathcal{N},K,\boldsymbol{\delta})-1$. This implies that $\WEq(\mathcal{N},K,\boldsymbol{\delta})\leq \WEq(\mathcal{N},K,\boldsymbol{\delta}')+1$.

Fix a worst equilibrium $\sigma$ of $\Gamma(\mathcal{N}, K, \boldsymbol{\delta})$ and consider stage $k$. Since $\delta_k>\gamma$, queues must be created on some edges. Let $i^{*}$ be the maximal index such that player $[i^{*}k]$ meets a queue under $\sigma$. Then, by the choice of $i^{*}$, it must be that each subsequent player meets no queue. Further, each such player pays the same cost as $[i^{*}k]$. Indeed, if for $j>i^{*}$, player $[jk]$ pays less than $[i^{*}k]$, then $[i^{*}k]$ has a profitable deviation by imitating $[jk]$. If $[jk]$ pays more than $[i^{*}k]$, then she would meet a queue. Recall from Lemma \ref{le:worsteq} that in a worst equilibrium, in case of indifference, players choose an edge with a queue over an edge without a queue. So $[jk]$ would imitate $[i^{*}k]$, paying the cost of $[i^{*}k]$ plus $1$. This contradicts the definition of $i^{*}$.

Now, consider the game where $[i^{*}k]$ is postponed by one stage, starting as the first player of the next generation. In this game, if the postponed player chooses the exact same strategy, she  pays one unit less, since queues have decreased by  one. She cannot pay less than that, since that would have offered a profitable deviation for $[i^{*}k]$ in the original game. So it is an equilibrium. Since the two situations are identical for all other players, this reasoning can be iterated at each period.
\medskip

Hence, combining the previous two results yields
\begin{align*}
\WEq(\mathcal{N},K,\boldsymbol{\delta})&=\WEq(\mathcal{N},K,\boldsymbol{\delta}')+1,\\
&=\WEq(\mathcal{N},K,\gamma)+D(\boldsymbol{\delta}')+1,\\
&=\WEq(\mathcal{N},K,\gamma)+D(\boldsymbol{\delta}),
\end{align*}
where the second equality follows from the induction hypothesis.

\end{proof}

\subsection*{Nash equilibria}

The following example shows that there are Nash (but not UFR) equilibria for chain-of-parallel networks, where players may end up paying strictly more than the cost of the costlier route. The reason is that in a Nash equilibrium players need not arrive at intermediate vertices as early as possible and this may create additional queues.

\begin{example}\label{ex:verybadNash}
Consider the chain-of-parallel network in Figure~\ref{fi:chainofparallel2SM} with associated free-flow transit costs and capacity equal to $1$ for all edges. The capacity $\gamma^{(*)}$ of the network is $2$. 
\begin{figure}[h]
\centering
\begin{tikzpicture}[->,>=stealth',shorten >=1pt,auto,node distance=5cm,
  thick,main node/.style={circle,fill=blue!20,draw,minimum size=25pt,font=\sffamily\Large\bfseries},source node/.style={circle,fill=green!20,draw,minimum size=25pt,font=\sffamily\Large\bfseries},dest node/.style={circle,fill=red!20,draw,minimum size=25pt,font=\sffamily\Large\bfseries}]

  \node[source node] (1) {$s$};
  \node[main node] (2) [right of=1] {$v$};
  \node[dest node] (3) [right of=2] {$d$};
  
  \path[every node/.style={font=\sffamily\small}]
    (1) edge [bend right = 30] node[above] {$\tau_{1}^{(1)}=1$} (2)
        edge [bend left = 30] node[above] {$\tau_{2}^{(1)}=2$} (2);

  \path[every node/.style={font=\sffamily\small}]
    (2) edge [bend right = 30] node[above] {$\tau_{1}^{(2)}=1$} (3)
        edge [bend left = 30] node[above] {$\tau_{2}^{(2)}=2$} (3);

\end{tikzpicture}
~\vspace{0cm} 
\caption{\label{fi:chainofparallel2SM}Chain-of-parallel network.}
\end{figure}
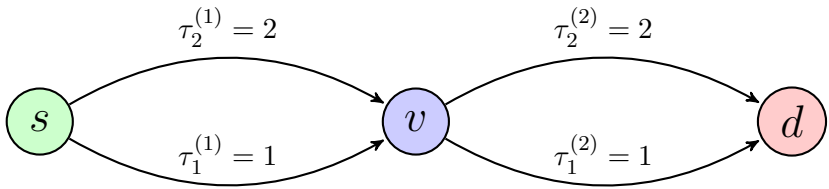

Consider the following equilibrium strategy profile
\begin{align*}
\sigma_{it}^{\Eq}=\begin{cases}
e_{1}^{(1)} e_{1}^{(2)} &\text{ for }[it]=[1t], t\leq2,\\
e_{1}^{(1)} e_{1}^{(2)} &\text{ for }[it]=[2t], t\leq2,\\
e_{2}^{(1)} e_{1}^{(2)} &\text{ for }[it]=[13],\\
e_{1}^{(1)} e_{1}^{(2)} &\text{ for }[it]=[23],\\
e_{2}^{(1)} e_{2}^{(2)} &\text{ for }[it]=[1t], t\geq4,\\
e_{1}^{(1)} e_{1}^{(2)} &\text{ for }[it]=[2t], t\geq4,\\
\end{cases}
\end{align*}
In the first two periods a queue of length two is created on $e_{1}^{(1)}$. Note that player $[22]$ cannot be overtaken as the next player departs in the following period. In period three, a new queue starts on $e_{1}^{(2)}$ and in stage four we reach the steady state.

The latency of this strategy profile equals $9$, which means that the last player of each generation pays more than the maximum free-flow transit costs.
\end{example}

\end{document}